\def\be{\begin{eqnarray}}
\def\ee{\end{eqnarray}}
\newtheorem{theorem}{Theorem}
\newtheorem{proposition}[theorem]{Proposition}
\newtheorem{lemma}[theorem]{Lemma}
\newtheorem{fact}{Fact}
\theoremstyle{definition}
\newtheorem{definition}{Definition}
\newtheorem*{remark}{Remark}
\DeclareMathOperator{\tr}{Tr}
\begin{document}
\title{Quantum Learning with Tunable Loss Functions}

\author{Yixian Qiu}
\email{yixian\_qiu@u.nus.edu}
\affiliation{Centre for Quantum Technologies, National University of Singapore, Singapore}
\author{Lirandë Pira}
\email{lpira@nus.edu.sg}
\affiliation{Centre for Quantum Technologies, National University of Singapore, Singapore}
\author{Patrick Rebentrost}
\email{cqtfpr@nus.edu.sg}
\affiliation{Centre for Quantum Technologies, National University of Singapore, Singapore}
\affiliation{Department of Computer Science, National University of Singapore, Singapore}

\date{\today}

\begin{abstract}
Learning from quantum data presents new challenges to the paradigm of learning from data. This typically entails the use of quantum learning models to learn quantum processes that come with enough subtleties to modify the theoretical learning frameworks. This new intersection warrants new frameworks for complexity measures, including those on quantum sample complexity and generalization bounds. Empirical risk minimization (ERM) serves as the foundational framework for evaluating learning models in general. The diversity of learning problems leads to the development of advanced learning strategies such as tilted empirical risk minimization (TERM). Theoretical aspects of quantum learning under a quantum ERM framework are presented in [PRX Quantum 5, 020367 (2024)]. In this work, we propose a definition for TERM suitable to be employed when learning quantum processes, which gives rise to quantum TERM (QTERM). We show that QTERM can be viewed as a competitive alternative to implicit and explicit regularization strategies for quantum process learning. This work contributes to the existing literature on quantum and classical learning theory threefold. First, we prove QTERM learnability by deriving upper bounds on QTERM's sample complexity. Second, we establish new PAC generalization bounds on classical TERM. Third, we present QTERM agnostic learning guarantees for quantum hypothesis selection. These results contribute to the broader literature of complexity bounds on the feasibility of learning quantum processes, as well as methods for improving generalization in quantum learning.
\end{abstract}

\maketitle

\section{Introduction}
\begin{figure*}[t]
    \centering
    \includegraphics[width=0.6\linewidth]{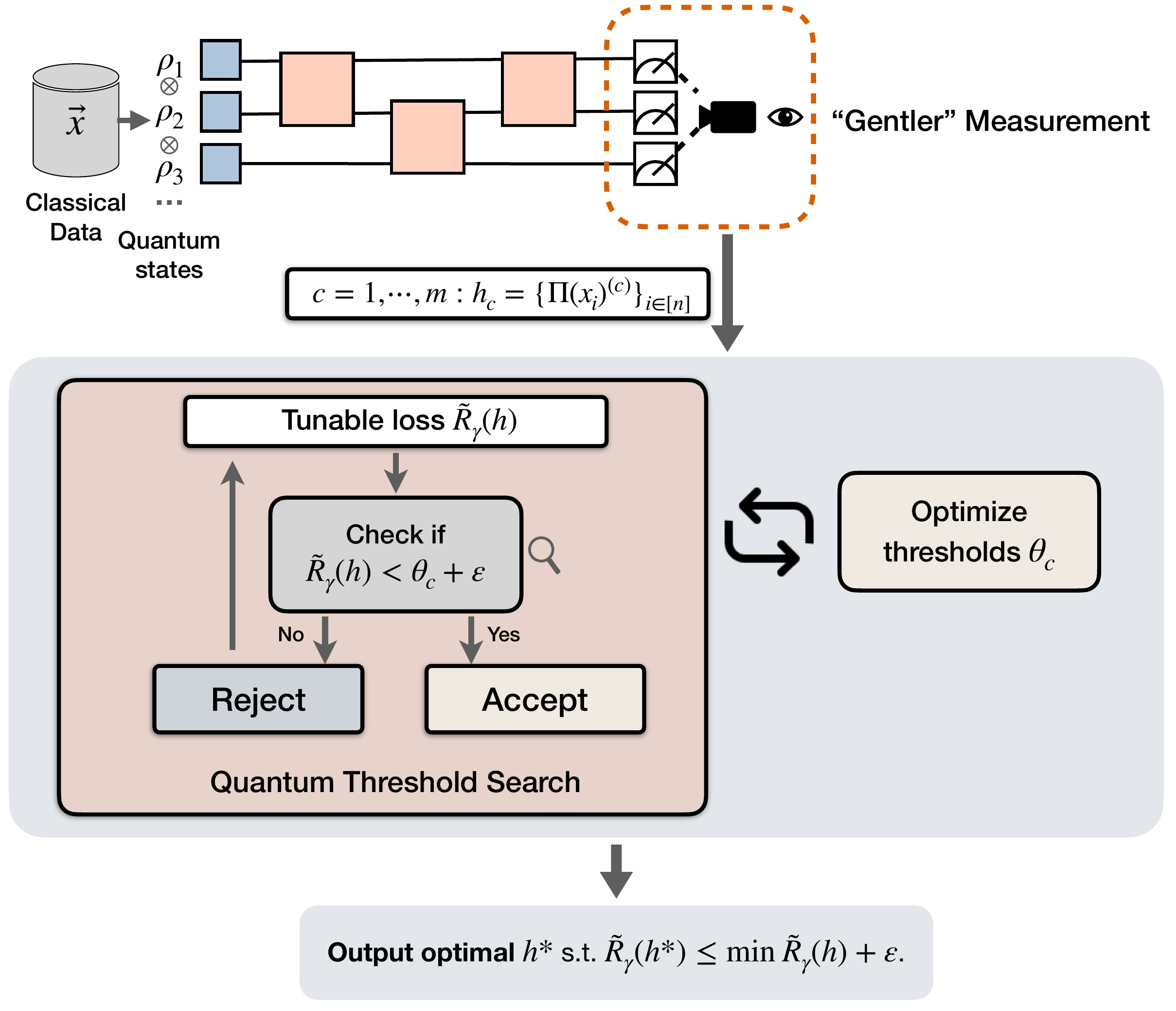}
    \caption{\justifying \textbf{Illustration of the learning process with QTERM.} The process incorporates three main parts: \textbf{Input:} It begins with classical data and the corresponding quantum state representations. \textbf{Algorithm:} It utilizes a quantum threshold search shown in Appendix.\ref{app:algo}, which aims to find an optimal hypothesis class $h_{c^*}$ that minimizes the tunable loss $\tilde{R}_\gamma(h)$, subject to the constraint $\tilde{R}_\gamma(h) \leq \theta_c$. \textbf{Output:} The optimal hypothesis set is denoted as $h_{c^*} = \{\Pi(x_i)^{(c^*)}\}_{i\in[n]}$.}
    \label{fig:process}
\end{figure*}
Learning is the cornerstone of intelligence, shaping how humans make sense of the world and adapt to it. At its core, learning is the art of discerning patterns. Humans learn by drawing connections and building relationships among objects, using intuition and experience. Similarly, learning machines seek to reveal patterns in data, but instead of intuition, they rely on mathematical models and algorithms to identify meaningful structures. Learning theory formalizes this process, providing a framework to understand how efficiently a model can generalize from limited data and determine the fundamental limits of computation in artificial learning systems \cite{valiant1984theory, kearns1994introduction, kearns_efficient_1994}.

The two primary components of machine learning are the dataset and the model \cite{bishop_pattern_2006, goodfellow_deep_2016, lecun_deep_2015}. The optimization process of learning from data can be summarized as choosing a model that best fits the dataset. Selecting such a model is a complex and often non-trivial task due to several factors. Firstly, this process is dependent on the diversity and characteristics of the data, such as its size, dimensionality, and inherent noise, which can significantly influence model performance. Next, different models have varying assumptions and strengths; for example, some may be better for linear relationships while others are designed for more complex, non-linear patterns. This is where regularization techniques, implicit or explicit, come into the picture. Regularization introduces additional constraints or penalties into the model training process, which helps prevent overfitting by discouraging overly complex models. Additionally, regularization plays an important role in preventing overfitting, handling model complexity, balancing the role of outliers and class imbalance --- all issues arising at model or dataset level. A primary objective of regularization is to reduce generalization error, which refers to the difference between a model’s performance on training data and its ability to perform well on test data.

The process of finding a model that best fits a dataset is guided by a loss function, which quantifies the discrepancy between predicted and actual outcomes. A fundamental principle in supervised learning is empirical risk minimization (ERM), which trains models by minimizing the average loss on a dataset. While ERM optimizes performance on training data, it can be refined through advanced learning techniques such as tilted empirical risk minimization (TERM) \cite{li2020tilted, li2023tilted}. TERM extends ERM by incorporating a tilted loss function that selectively emphasizes certain aspects of the dataset, leading to more nuanced and desirable learning outcomes. This approach is inspired by exponential tilting, a well-established concept in statistics \citep{siegmund1976importance, butler2007saddlepoint}, information theory \citep{thomas2006elements}, and applied probability \citep{dembo2009large}, where probability distributions are transformed by scaling them with an exponential factor. Recent studies \cite{aminian2024generalizationerrortiltedempirical} have explored the generalization properties of TERM as a competitive strategy to achieve similar outcomes as explicit regularization (additive changes to the loss function) and implicit regularization (due to stochastic gradient descent \cite{smith2021origin}), by also analyzing its theoretical implications within the broader framework of computational learning theory \cite{valiant1984theory, kearns1994introduction, kearns_efficient_1994}. In particular, TERM's performance can be evaluated using the PAC (probably approximately correct) learning framework \cite{valiant1984theory}, which assesses the accuracy of learning algorithms in relation to the amount of data required to generalize effectively. More generally, statistical theoretical bounds with respect to regularization techniques are explored in Refs.~\cite{bauer_regularization_2007, demol2008elasticnetregularizationlearningtheory}.

Quantum machine learning \cite{schuld_machine_2021, biamonte_quantum_2017, rebentrost_quantum_2014, schuld2020circuit, beer_training_2020} has emerged as a prospective application of quantum computing \cite{nielsen_quantum_2011, montanaro_quantum_2016}. Quantum machine learning algorithms in its most general form take the flavor of learning from quantum data via quantum models. Quantum models are studied and benchmarked in the near-term model of quantum computation \cite{preskill_quantum_2018} as well as the fault-tolerant one \cite{shor1996fault}. Near-term models mainly comprise the class of algorithms known as variational quantum algorithms \cite{cerezo2021variational, benedetti2019parameterized}, while in the longer-term range ideas based on quantum singular value transform (or QSVT) \cite{gilyen_quantum_2019} and block-encodings are brought forth \cite{ivashkov2024qkanquantumkolmogorovarnoldnetworks, guo_quantum_2024}. Here, we take a more holistic approach akin to the quantum model formulation under consideration, which is a generalized supervised quantum model suitable to various concept classes. The subtleties quantum characteristics add to the process of learning can be summarized on the level of data input and output, as well as the model. For learning theory measures, the quantum nature of data introduces challenges in defining generalization and overfitting due to the probabilistic nature of quantum measurements and the limitations on state distinguishability. Furthermore, measurement constraints, where only partial information about quantum states is accessible, directly affect empirical risk estimation.

In this study, we introduce quantum tilted empirical risk minimization (QTERM), which provides a formal foundation for developing quantum learning within established learning-theoretic frameworks, as depicted in \cref{fig:process}. We present QTERM as a unification of classical TERM and quantum empirical risk minimization (QERM), thereby extending ERM in the quantum setting through a generalization of both approaches. While ERM provides a foundation for model training via average loss minimization and QERM extends this to quantum data. QTERM further augments the QERM framework by introducing a tilting hyperparameter over the quantum empirical risk. This tilt hyperparameter allows selective emphasis on certain quantum samples, thereby refining the learning objective. This mechanism retains classical TERM's focus-shifting behavior of a given dataset, but now adapted to quantum measurement constraints and statistical properties of quantum states. To formalize this idea, we take an analytical approach to defining and developing QTERM for learning from quantum data. In our formulation, quantum data is represented through a collection of quantum projector-value functions, and the QERM framework \cite{fanizza_learning_2024} is extended by incorporating a tilted hyperparameter over discrete quantum samples. This formulation offers a precise operational interpretation: QTERM reduces to QERM when the tilting parameter is neutral, and to classical TERM when applied to classical data with deterministic outcomes. Using the definitions of this framework, we derive sample complexity bounds, as well as agnostic learning guarantees for QTERM in a finite hypothesis space. Furthermore, constructing an explicit PAC bound for QTERM establishes its theoretical relevance within the quantum PAC framework, while also deriving new generalization results for classical TERM under PAC assumptions. These contributions underscore how QTERM unifies and extends both classical and quantum learning paradigms within a common theoretical paradigm.

\subsection{Related Work on Quantum Learning Theory}
Similar learning techniques have been explored in the context of quantum models. For instance, structural risk minimization \cite{vapnik1974teoriya} is applied analytically to two quantum linear classifiers in Ref.~\citep{gyurik_structuralrisk_2023}, highlighting the trade-off between training accuracy and generalization performance in parameterized quantum circuits. In parallel, ideas on quantum learning for quantum data have been brought forth in Ref.~\citep{heidari_theoretical_2021}, hereby formalizing a definition of quantum ERM (QERM). Ref.~\cite{padakandla2022pac} generalizes the results of Ref.~\cite{heidari_theoretical_2021} to broader concept classes and a looser version of compatibility. Additionally, Ref.~\citep{heidari2024new} introduces a quantum ERM algorithm that improves sample complexity bounds in the quantum settings through quantum shadows, enabling more efficient empirical loss estimation in quantum classifiers. Moreover, Ref.~\citep{Ciliberto_2020} addresses the question of complexity bounds from a quantum learning perspective. Arunachalam and de Wolf's survey in Ref.~\citep{arunachalam2017guest} analyzes quantum adaptations of classical models such as PAC learning, noting challenges such as sample complexity and measurement incompatibilities due to the fundamental nature of quantum mechanics \cite{salmon2023provableadvantagequantumpac, nayak2024propervsimproperquantum, chung_sample_2021}. Properties of learning from quantum data have been studied from a generalization point of view as well~\cite{caro2022generalization, abbas2021effectivedimensionmachinelearning, GilFuster2024}. Quantum PAC is introduced to address the challenges and adaptations required for the PAC framework when applied to learning from quantum data \cite{Bshouty1999, arunachalam2017guest, Arunachalam2018, padakandla2022pac, heidari_theoretical_2021, heidari2024new}. For instance, in Refs.~\cite{heidari_theoretical_2021, heidari2024new}, the quantum PAC framework accounts for quantum data which is represented as the density operator over a Hilbert space, and secondly, for information loss incurred by the measurement process. Quantum sample complexity for the quantum PAC framework proposed in Ref.~\cite{heidari_theoretical_2021} is known to be higher compared to the classical analogue.

\subsection{Problem Setup and Desiderata}
Consider the task of learning an optimal projector‐valued hypothesis from classical-quantum distribution where quantum states provided in a product‐state form. Without loss of generality, assume $x_1,\dots,x_n$ be i.i.d., classical samples drawn from a continuous distribution with uncountably many possible values. For each sample $x_i$, it is practically impossible to obtain identical copies of the corresponding quantum state. Accordingly, we associated a single copy quantum state $\rho(x_i)$ to each sample $x_i$. The learner’s joint quantum state is therefore the product state
$$\rho(\vec{x}) = \bigotimes_{i=1}^n \rho(x_i),$$
where each $\rho(x_i)$ is available only as a single copy. In particular, there is no way to request $\rho(x_i)\otimes\rho(x_i)$ for any $i$. Consequently, all measurements must be performed on the tensor product of nonidentical, single‐copy states, and must be designed ``gently'' so as not to destroy information needed for subsequent steps. 
Fix a hypothesis class $\mathcal{C} = \{h_c:c\in[m]\}$, where each $h_c$ generates a set of corresponding projectors $\{\Pi(x_i)^{(c)}\}_{i=1}^n$. The quality of a hypothesis is assessed using a loss function defined via projector-valued operations denoted as $\mathrm{Tr}[\rho(x_i) \Pi(x_i)^{(c)}]$. To estimate the loss, the learner performs measurements on $\rho(x_i)$, and these measurements are implemented gently on $\rho(\vec{x})$, so that extracting information about $\tr[\rho(x_i)\,\Pi(x_i)^{(c)}]$ does not destroy the samples that are still needed to test the rest of hypotheses. Within this setting, we introduce TERM parameterized by $\gamma \in \mathbb{R}$, which provides a way to control sensitivity to extreme values, manifested in the form of data irregularities. The tilted risk function is defined as:
\be
\widetilde{R}_{\gamma}(c) = 1 - \frac{1}{\gamma} \log \left( \frac{1}{n} \sum_{i=1}^n e^{\gamma \mathrm{Tr}[\rho(x_i) \Pi(x_i)^{(c)}]} \right).
\ee
For proving our learning guarantee, a certain regime of $\gamma$ is required. For our concentration bounds to hold, we require $|\gamma|<1$, since $\gamma\to 0$ recovers the ordinary mean loss and $\gamma$ too large overemphasizes outliers. In particular, for $\gamma \to 0$, the tilted risk reduces to the standard case of the mean loss $1-\sum_{i=1}^n \mathrm{Tr}[\rho(x_i) \Pi(x_i)^{(c)}]$. The mean loss by default does not overweight extreme values. As $\gamma \to +\infty$, $\widetilde{R}_{\gamma}(c)$ behaves like $1-\max_{i\in [n]} \mathrm{Tr}[\rho(x_i) \Pi(x_i)^{(c)}]$. Large $\gamma$ amplifies variations in the loss landscape, which can be useful for minimizing outliers.

A more familiar way to interpret the role of $\gamma$ is through comparisons with standard norms. For sufficiently small $\gamma$, the tilted empirical risk behaves similarly to an $L_1$-type averaging, providing a balanced estimate of the loss across samples. As $\gamma$ increases, the risk shifts toward an $L_2$-like behavior, amplifying higher-loss samples and emphasizing variance. In the extreme case of very large $\gamma$, the tilted empirical risk resembles a max function (akin to an $L_\infty$ norm), where a few high-loss samples dominate the overall evaluation. See Lemma \ref{lemma:tiltconverg} for a more formal formulation on the convergence of the tilted hyperparameter.

\subsection{Main Contributions}
We address the problem of learning from classical-quantum data by finding an optimal projector-valued hypothesis class. Our main idea is to modify the loss function to be the tilted loss function with a tilted hyperparameter $\gamma$. For the quantum problem, we prove a sample complexity result using the tilted loss for intermediate values of the tilted hyperparameter $\gamma$. In addition, we prove a new PAC generalization bound for the tilted loss minimization for sufficiently small $\gamma$. These two results enable us to prove our third result for a guarantee of agnostic learning for small values of $\gamma$.

We list the main contributions of this paper as follows.
\begin{itemize}
    \item \textbf{Generalization of QERM and formalization of QTERM}. This contribution builds upon the QERM framework introduced in Ref.~\cite{fanizza_learning_2024}, by incorporating the tilted hyperparameter to generalize the theoretical framework of quantum empirical risk minimization. Hereby, we define and formalize QTERM, extending the concept of empirical risk minimization to quantum models trained in quantum states, formally defined in \Cref{def:term}.
    \item \textbf{Learnability of projector-valued functions under sample complexity guarantees}. We derive sample complexity bounds for QTERM, providing theoretical guarantees on the number of quantum state samples required to achieve reliable learning performance. These limits provide a deeper understanding of the interaction between the tilted hyperparameter, the complexity of the model, and the data requirements, expressed in \Cref{thm1:informal}, and more formally in \Cref{theorem:TERM_proj}.
    We state an informal version of the main theorem in \Cref{thm1:informal}. The main theorem answers the following question.
    \begin{itemize}
    \item \textit{How can we efficiently identify the best hypothesis (i.e., a set of projectors) that describes a given quantum system?}
    \item \textit{How many quantum samples (product states) are needed to guarantee a reliable choice?}
    \end{itemize}

    \begin{theorem}[Quantum Tilted Empirical Risk for Projector-Valued Functions, informal] \label{thm1:informal}
    For quantum process learning with product states, our objective is to identify the best hypothesis among $m$ projector-valued functions using tilted empirical risk, a modified risk function controlled by the tilted hyperparameter $\gamma$. With enough quantum samples (approximately $\frac{1}{\varepsilon^2} \log \frac{1}{\delta} \log^2 \frac{1}{\varepsilon}$, plus factors depending on $\gamma$ and $m$), an algorithm can reliably estimate the best hypothesis. Guarantees that the estimated risk is within $\varepsilon$ of the true minimum with a failure probability of at most $\delta$.
    \end{theorem}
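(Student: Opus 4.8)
The plan is to reduce $\varepsilon$-optimal QTERM minimization to a gentle quantum hypothesis-selection problem on the single-copy product state $\rho(\vec{x})$, and then invoke a quantum threshold search (Appendix~\ref{app:algo}) of the kind used for QERM in Ref.~\cite{fanizza_learning_2024}. The first step is an elementary analytic reduction that exploits $|\gamma|<1$: since every overlap $\tr[\rho(x_i)\Pi(x_i)^{(c)}]$ lies in $[0,1]$, the aggregate $Z_\gamma(c) := \tfrac1n\sum_{i=1}^n e^{\gamma\,\tr[\rho(x_i)\Pi(x_i)^{(c)}]}$ is confined to $[e^{-|\gamma|}, e^{|\gamma|}]$, hence bounded away from $0$, and the map $(p_1,\dots,p_n)\mapsto 1-\tfrac1\gamma\log\tfrac1n\sum_i e^{\gamma p_i}$ is Lipschitz with constant at most $e^{2|\gamma|}<e^2$ --- the same estimate underlying Lemma~\ref{lemma:tiltconverg}. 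Consequently the comparison ``$\widetilde{R}_\gamma(c)\le\theta$'' is equivalent to ``$Z_\gamma(c)\ge e^{\gamma(1-\theta)}$'' on a bounded-range quantity, and it suffices to resolve it to accuracy $O(\varepsilon)$ with an overhead depending only mildly on $\gamma$.

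For each $c\in[m]$ I would then build a two-outcome measurement on $\rho(\vec{x})$ of averaged form --- sample a register $i$ and measure it with a valid (rescaled) POVM built from $\Pi(x_i)^{(c)}$, averaging over $i$ --- whose acceptance probability tracks a rescaled shift of $Z_\gamma(c)$ and therefore decides ``$\widetilde{R}_\gamma(c)\le\theta$''. Because each $\rho(x_i)$ is available only as a single copy, these measurements must be realized gently and applied sequentially, so that testing hypothesis $c$ leaves the joint state essentially intact for the remaining hypotheses; the cumulative disturbance across the $m$ tests is controlled by a gentle-measurement / quantum-union-bound argument. Feeding the grid of $O(\log(1/\varepsilon))$ thresholds $\theta\in[0,1]$ at resolution $\varepsilon$ into the quantum threshold search of Appendix~\ref{app:algo} then returns, except with probability at most $\delta$, an index $c^*$ with $\widetilde{R}_\gamma(c^*)\le\min_c\widetilde{R}_\gamma(c)+\varepsilon$. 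For the sample count: the concentration estimate of each bounded-range aggregate costs $\widetilde{O}(1/\varepsilon^2)$ single-copy samples, the gentle threshold search contributes the extra $\log^2(1/\varepsilon)$ factor and the $\log m$-type dependence, confidence amplification the $\log(1/\delta)$, and the analytic reduction the remaining $\gamma$-dependent factors, giving $n\approx\tfrac1{\varepsilon^2}\log\tfrac1\delta\log^2\tfrac1\varepsilon$ up to factors in $\gamma$ and $m$, as in Theorem~\ref{theorem:TERM_proj}.

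The main obstacle is that $\widetilde{R}_\gamma(c)$ is a genuinely nonlinear (log-sum-exp) functional of the overlaps $\tr[\rho(x_i)\Pi(x_i)^{(c)}]$, whereas a single copy of $\rho(x_i)$ exposes only \emph{affine} functionals of a given overlap, so no measurement can reproduce $e^{\gamma\,\tr[\rho(x_i)\Pi(x_i)^{(c)}]}$ exactly. The resolution is to lean on the regime $|\gamma|<1$: replace each exponential by its best affine proxy obtainable from a gentle two-outcome measurement, bound the resulting bias using $|\gamma|<1$ together with the $[0,1]$-boundedness of the overlaps, and transport that bias through the $e^{2|\gamma|}$-Lipschitz outer logarithm, absorbing the residual $\gamma$-scale slack into the statement. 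A secondary point is to verify that the threshold grid, the Lipschitz reduction, and the accumulated gentle-measurement error together certify $\varepsilon$-closeness to the true minimum of $\widetilde{R}_\gamma$ rather than merely to the minimum of the proxy objective.
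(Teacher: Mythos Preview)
Your reduction is morally the paper's: both collapse the tilted-risk threshold test to an ordinary-mean threshold on the overlaps and hand that to the quantum threshold search of Ref.~\cite{fanizza_learning_2024}. The paper is just more direct about it --- it runs \textbf{ThresholdSearch} on the \emph{unmodified} projectors $\{\Pi_{s,j}^{(c)}\}$ against the plain block mean $\tfrac1l\sum_j\tr[\rho_{s,j}\Pi_{s,j}^{(c)}]$, and separately bounds the gap $\Delta(\gamma):=\bigl|\tfrac1\gamma\log\tfrac1l\sum_j e^{\gamma E_{s,j}}-\tfrac1l\sum_j E_{s,j}\bigr|\le|\gamma|/8$ via Hoeffding's lemma. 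Your ``best affine proxy obtainable from a gentle two-outcome measurement'' is the same linearization in different dress, since any single-copy measurement of register $i$ yields an affine functional of $p_i=\tr[\rho(x_i)\Pi(x_i)^{(c)}]$.

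The real gap is your parameter regime. You repeatedly invoke $|\gamma|<1$, but the linearization bias --- whether phrased as your affine-proxy error or as the paper's $\Delta(\gamma)$ --- is of order $|\gamma|$ on the tilted-risk scale, and it is \emph{irreducible} in the single-copy model: no averaging over $i$ shrinks the per-term defect $|e^{\gamma p_i}-a-bp_i|$, because you never see $p_i$ itself, only one Bernoulli sample of it. Hence ``absorbing the residual $\gamma$-scale slack into the statement'' cannot deliver $\varepsilon$-accuracy for $|\gamma|$ bounded away from $\varepsilon$; this is precisely why the formal Theorem~\ref{theorem:TERM_proj} restricts to $|\gamma|\in(0,\varepsilon)$ rather than $|\gamma|<1$. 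A second, structural omission is that the paper does \emph{not} run everything gently on a single copy of $\rho(\vec x)$: it partitions the $n$ samples into $2Tk$ fresh blocks, with odd blocks consumed by \textbf{ThresholdSearch} (gentle across the $m$ hypotheses but succeeding only with probability $0.03$, hence the $k$-fold repetition) and even blocks by a \emph{destructive} Check measurement that directly estimates the tilted quantity from binary outcomes. The $\log^2\tfrac1\varepsilon\cdot\log\tfrac1\delta$ factor in the sample count comes from this $Tk$-fold block replication, not from a gentle-measurement union bound over a threshold grid as your sketch suggests.
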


    \item \textbf{Generalization bound for TERM.} When the tilted loss class has a controlled covering number and the tilted hyperparameter $\gamma$ is sufficiently small, the tilted empirical risk reliably tracks the true population risk with certain samples; the risk gap stays below $\varepsilon$ with probability at least $1-\delta$. In short, this theorem addresses the following questions.
    \begin{itemize}
        \item \textit{How well does the tilted empirical risk approximate the true risk?}
    \end{itemize}
    
    \begin{theorem}[PAC Bound of TERM, informal] \label{thm2:informal}
    Suppose we minimize TERM over a set of functions. If the function class has a small enough uniform covering number and the tilted hyperparameter $\gamma$ is sufficiently small, then, with high probability over random samples, the tilted empirical risk is close to the true population risk. The probability of a large deviation decays exponentially in the number of samples $n$.
    \end{theorem}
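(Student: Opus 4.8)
The plan is to prove a uniform convergence statement of the form $\Pr\bigl[\sup_{h}\lvert \widetilde R_\gamma(h) - R_\gamma(h)\rvert > \varepsilon\bigr]\le \delta$, where $R_\gamma(h) = 1 - \tfrac1\gamma\log\mathbb E_x\!\bigl[e^{\gamma\ell(h,x)}\bigr]$ is the tilted population risk (well defined since the loss $\ell(h,\cdot)$ takes values in $[0,1]$) and $\widetilde R_\gamma(h)$ is its empirical counterpart, and then to read off the sample complexity $n(\varepsilon,\delta,\gamma)$ by inverting the bound. The first step is a \emph{linearization}: writing $g_h(x):=e^{\gamma\ell(h,x)}$, both $\widehat g_h := \tfrac1n\sum_{i=1}^n g_h(x_i)$ and $\mathbb E_x[g_h]$ lie in the interval $I_\gamma := [e^{-|\gamma|},e^{|\gamma|}]$, which stays bounded away from $0$ for every $|\gamma|<1$. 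On $I_\gamma$ the map $z\mapsto \tfrac1\gamma\log z$ is Lipschitz with constant at most $e^{|\gamma|}/|\gamma|$, so $\lvert\widetilde R_\gamma(h)-R_\gamma(h)\rvert \le \tfrac{e^{|\gamma|}}{|\gamma|}\,\lvert\widehat g_h-\mathbb E_x[g_h]\rvert$ uniformly in $h$. It therefore suffices to establish a uniform deviation bound for the tilted-loss class $\mathcal G_\gamma=\{g_h\}$ at scale $\tfrac{|\gamma|\varepsilon}{e^{|\gamma|}}$.

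The second step is the standard covering argument for $\mathcal G_\gamma$. I would fix a minimal $\alpha$-cover $\{g_{h_1},\dots,g_{h_N}\}$ in the (uniform) $L_\infty$ sense, $N=N(\alpha)$; since $t\mapsto e^{\gamma t}$ is $|\gamma|e^{|\gamma|}$-Lipschitz on $[0,1]$, this $N(\alpha)$ is controlled by the covering number of the underlying loss class at scale $\alpha/(|\gamma|e^{|\gamma|})$, which is the ``small enough covering number'' hypothesis. Each $g_{h_j}$ has range at most $\Delta_\gamma := e^{|\gamma|}-e^{-|\gamma|}$ inside $I_\gamma$, so Hoeffding gives $\Pr[\lvert\widehat g_{h_j}-\mathbb E[g_{h_j}]\rvert>t]\le 2e^{-2nt^2/\Delta_\gamma^2}$; a union bound over the $N$ cover elements plus the triangle inequality $\lvert\widehat g_h-\mathbb E[g_h]\rvert\le\lvert\widehat g_{h_j}-\mathbb E[g_{h_j}]\rvert+2\alpha$ for the nearest $h_j$ yields $\Pr[\sup_h\lvert\widehat g_h-\mathbb E[g_h]\rvert>t+2\alpha]\le 2N(\alpha)e^{-2nt^2/\Delta_\gamma^2}$.

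The third step is to choose the free scales so the bound collapses to $(\varepsilon,\delta)$. Taking $\alpha=\tfrac{|\gamma|\varepsilon}{4e^{|\gamma|}}$ and $t=\tfrac{|\gamma|\varepsilon}{2e^{|\gamma|}}$ makes $\tfrac{e^{|\gamma|}}{|\gamma|}(t+2\alpha)=\varepsilon$, and the exponent becomes $-\tfrac{n\gamma^2\varepsilon^2}{2e^{2|\gamma|}\Delta_\gamma^2}$. The key cancellation is $\Delta_\gamma = e^{|\gamma|}-e^{-|\gamma|}\le 2|\gamma|e^{|\gamma|}$: the two powers of $\gamma$ cancel, and the exponent is at most $-\tfrac{n\varepsilon^2}{8e^{4|\gamma|}}$, hence at most $-\tfrac{n\varepsilon^2}{8e^4}$ for $|\gamma|<1$ --- exponential in $n\varepsilon^2$ with a $\gamma$-independent constant, as claimed. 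Inverting, $n\gtrsim \tfrac{e^{4|\gamma|}}{\varepsilon^2}\bigl(\log N(\tfrac{|\gamma|\varepsilon}{4e^{|\gamma|}})+\log\tfrac1\delta\bigr)$ samples suffice; note $\alpha/(|\gamma|e^{|\gamma|})=\varepsilon/(4e^{2|\gamma|})=\Theta(\varepsilon)$, so the loss-class covering number is evaluated at a $\gamma$-independent resolution. Two-sidedness of this estimate then gives the agnostic-learning corollary by the usual comparison of the empirical minimizer with the population minimizer.

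I expect the main obstacle to be exactly this nonlinearity bookkeeping. Naively the $1/\gamma$ prefactor in $\widetilde R_\gamma$ suggests the deviation blows up as $\gamma\to0$; the point is that the fluctuations of $\widehat g_h$ simultaneously shrink (the range $\Delta_\gamma$ is $O(\gamma)$), and that the covering resolution one is forced to use also shrinks like $\gamma$, so that all three $\gamma$-dependencies align and cancel. Keeping the constants $e^{\pm|\gamma|}$ uniformly bounded --- which is precisely where the restriction $|\gamma|<1$ enters --- and verifying the elementary inequality $e^{|\gamma|}-e^{-|\gamma|}\le 2|\gamma|e^{|\gamma|}$ is the only delicate part; everything downstream is Hoeffding and a union bound.
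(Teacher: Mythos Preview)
Your argument is correct for the statement you formulate, but you are proving a slightly different quantity than the paper does, and by a genuinely different route.

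\textbf{Target.} You control $\sup_h\lvert \widetilde R_\gamma(h)-R_\gamma(h)\rvert$, where $R_\gamma(h)=\frac1\gamma\log\mathbb{E}[e^{\gamma L(h,Z)}]$ is the \emph{tilted} population risk. The paper's formal version (Theorem~\ref{theorem:PAC}) instead bounds $\sup_h\lvert \widetilde R_\gamma(h)-R(h)\rvert$ with $R(h)=\mathbb{E}[L(h,Z)]$ the \emph{standard} population risk; this is what the downstream agnostic-learning result actually consumes. Your bound converts to theirs by adding the deterministic inequality $\lvert R_\gamma(h)-R(h)\rvert\le\lvert\gamma\rvert/8$ (Hoeffding's lemma on a $[0,1]$-valued random variable, the same computation as $\Delta(\gamma)$ in the proof of Theorem~\ref{theorem:TERM_proj}), which is exactly the step that forces ``$\gamma$ sufficiently small''. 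You should state this bridge explicitly.

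\textbf{Method.} The paper proceeds by symmetrization with a ghost sample, then concentration on the exponentiated loss class $\mathcal G_{\mathcal F,L,\gamma}$, then lifting from an $\varepsilon/8$-cover in the empirical $L_1$ seminorm; the small-$\gamma$ assumption enters already in the symmetrization step via $\lvert\widetilde R'_\gamma(h)-\hat R'(h)\rvert\le\varepsilon/4$. Your route is more elementary: a Lipschitz linearization of $z\mapsto\frac1\gamma\log z$ reduces directly to Hoeffding plus a union bound over an $L_\infty$ cover, with no ghost sample. This buys simplicity and the clean $\gamma$-cancellation you highlight, at the cost of requiring the stronger $L_\infty$ covering-number hypothesis rather than the $L_1$ covering number $\Gamma_1$ the paper uses; for operator-valued concept classes this distinction matters when one later bounds the function-class cover by $\Gamma_{1,q}$.
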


    \item \textbf{Agnostic learning guarantees for quantum hypothesis selection}. We establish agnostic learning bounds for QTERM, extending the theoretical framework to settings where the true hypothesis may not belong to the given class. This result ensures that even in the absence of a perfect hypothesis, we can still approximate the best achievable performance within a controlled error margin. Our guarantees are formalized in \Cref{thm:agnostic_QTERM}, and an informal version is given in \Cref{thm:agnostic_TERM_informal}.

    The follow-up theorem establishes agnostic learning guarantees for the quantum setting, removing the assumption that an optimal hypothesis exists within the given hypothesis class. In particular, this theorem answers the following question.

    \begin{itemize}
        \item \textit{How well can we approximate the best possible hypothesis when no perfect hypothesis exists?}
    \end{itemize}

    \begin{theorem}[Agnostic Learnability of Projector-Valued Function, informal]\label{thm:agnostic_TERM_informal}
    If a class of projector-valued functions $\mathcal{C}$ is not too complex, in the sense that its covering numbers grow slowly with the number of samples, then there exists an algorithm that, given enough training data drawn from an unknown classical-quantum channel, can find a hypothesis $c^* \in \mathcal{C}$ whose performance is close to optimal and whose estimated performance is close to the true value. Moreover, the probability of making a significant error can be made arbitrarily small by choosing the sample size appropriately, depending on the desired accuracy, confidence, and the complexity of $\mathcal{C}$.
    \end{theorem}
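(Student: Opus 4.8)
\emph{Proof strategy.} The plan is to run the classical agnostic-learning template --- uniform convergence plus approximate empirical minimization, with no realizability assumption --- but carried out over three nested levels of approximation: the \emph{measured} tilted risk obtained from single-copy gentle measurements, the \emph{empirical} tilted risk $\widetilde{R}_\gamma(c)$ over the $n$ drawn samples (the quantity defined above, computed with exact overlaps), and the \emph{population} tilted risk $R_\gamma(c)$ under the unknown classical-quantum channel. First I would discretize the class: since the uniform covering numbers $\mathcal{N}(\alpha,\mathcal{C})$ are assumed to grow slowly, fix a resolution $\alpha\asymp\varepsilon$ and choose a finite $\alpha$-cover $\mathcal{C}_\alpha\subseteq\mathcal{C}$ of cardinality $m=\mathcal{N}(\alpha,\mathcal{C})$ in the metric induced by the trace-overlap losses $\tr[\rho(x_i)\Pi(x_i)^{(c)}]$. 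Because $|\gamma|<1$, the log-sum-exp map defining the tilted risk is Lipschitz in the vector of per-sample overlaps with a constant controlled by $\gamma$ --- this is the quantitative content underlying \Cref{lemma:tiltconverg} --- so an $\alpha$-cover in the base loss is an $O(\alpha)$-cover in both $\widetilde{R}_\gamma$ and $R_\gamma$, and it suffices to compete within $O(\varepsilon)$ against the agnostic benchmark $\min_{c\in\mathcal{C}_\alpha}R_\gamma(c)$.

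Second, I would apply \Cref{theorem:TERM_proj} (the formal version of \Cref{thm1:informal}) to the finite class $\mathcal{C}_\alpha$: the quantum threshold search of Appendix~\ref{app:algo}, run on $n\gtrsim\varepsilon^{-2}\log(m/\delta)\log^2(1/\varepsilon)$ product-state samples plus the stated $\gamma$-dependent overhead, returns with probability at least $1-\delta/2$ an index $c^\star\in\mathcal{C}_\alpha$ and an estimate $\widehat\theta$ with $\widetilde{R}_\gamma(c^\star)\le\min_{c\in\mathcal{C}_\alpha}\widetilde{R}_\gamma(c)+\varepsilon/4$ and $|\widehat\theta-\widetilde{R}_\gamma(c^\star)|\le\varepsilon/4$; the gentleness of the measurements is precisely what lets the same batch of nonidentical single copies be probed against all $m$ hypotheses without destroying the information needed for later ones. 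Third, I would apply the TERM generalization bound of \Cref{thm2:informal} with a union bound over $\mathcal{C}_\alpha$: once $n$ is large enough that the exponential-in-$n$ tail falls below $\delta/(2m)$, we get $|\widetilde{R}_\gamma(c)-R_\gamma(c)|\le\varepsilon/4$ simultaneously for every $c\in\mathcal{C}_\alpha$, where again $|\gamma|<1$ keeps the bounded-difference constants under control.

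Fourth, I would chain the estimates in the usual order. Writing $c_{\mathrm{opt}}\in\argmin_{c\in\mathcal{C}}R_\gamma(c)$ and letting $\bar c\in\mathcal{C}_\alpha$ be an $\alpha$-close surrogate of $c_{\mathrm{opt}}$,
\begin{align*}
R_\gamma(c^\star)
&\le \widetilde{R}_\gamma(c^\star)+\tfrac{\varepsilon}{4}
 \le \min_{c\in\mathcal{C}_\alpha}\widetilde{R}_\gamma(c)+\tfrac{\varepsilon}{2} \\
&\le \widetilde{R}_\gamma(\bar c)+\tfrac{\varepsilon}{2}
 \le R_\gamma(\bar c)+\tfrac{3\varepsilon}{4} \\
&\le R_\gamma(c_{\mathrm{opt}})+\tfrac{3\varepsilon}{4}+O(\alpha)
 \le R_\gamma(c_{\mathrm{opt}})+\varepsilon ,
\end{align*}
with $\alpha\asymp\varepsilon$ chosen so the cover error fits in the final $\varepsilon/4$; combining $|\widehat\theta-\widetilde{R}_\gamma(c^\star)|\le\varepsilon/4$ with $|\widetilde{R}_\gamma(c^\star)-R_\gamma(c^\star)|\le\varepsilon/4$ also yields $|\widehat\theta-R_\gamma(c^\star)|\le\varepsilon$, so the estimated performance tracks the true one. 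A union bound over the two failure events gives total failure probability $\le\delta$, and collecting the constraints on $n$ gives $n=O\!\big(\varepsilon^{-2}\log(\mathcal{N}(\varepsilon,\mathcal{C})/\delta)\log^2(1/\varepsilon)\big)$ up to the $\gamma$-dependent factors, matching the formal statement \Cref{thm:agnostic_QTERM}. I expect the main obstacle to be the discretization step: one must verify that an $\alpha$-net in the trace-overlap loss remains an $O(\alpha)$-net after the nonlinear tilting \emph{and} that the gentle-measurement disturbances across the $m$ reused single copies stay controlled, both of which rest on the quantitative $|\gamma|<1$ Lipschitz bound rather than on a soft compactness argument; a secondary difficulty is ensuring that the threshold-search sample complexity composes with the cover size $m$ with only a $\log m$ (not polynomial) penalty.
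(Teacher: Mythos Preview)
Your proposal is correct and follows essentially the same route as the paper: discretize $\mathcal{C}$ to a finite $\varepsilon$-net using the covering-number hypothesis, run the quantum threshold-search minimizer of \Cref{theorem:TERM_proj} on the net to get an approximate empirical minimizer $c^\star$ together with an estimate $\widehat\theta$, invoke the PAC generalization bound of \Cref{theorem:PAC} (packaged in the paper as \Cref{lemma:UnifCov}) to relate tilted empirical risk to population risk uniformly, and chain the three error terms via the triangle inequality with a union bound over the two failure events. The only cosmetic differences are that the paper builds the net from the classical part $\vec{x}$ of the sampled data rather than fixing it in advance, and applies \Cref{theorem:PAC} directly as a uniform bound rather than via an explicit union bound over the $m$ net elements; your chain of inequalities and sample-complexity bookkeeping match the paper's \Cref{thm:agnostic_QTERM}.
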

\end{itemize}

It is worth highlighting that the theoretical guarantees hold for a certain threshold value of $\gamma$. Throughout our analysis we observe that the learning becomes significantly more difficult for large $\gamma$ values. From a perspective of sample complexity, the bounds we establish in this work hold primarily for sufficiently small $\gamma$. A lesson from our learning guarantee for the generalization error is the exponential dependence on $\gamma$ which could make reliable learning for large $\gamma$ difficult. A contextualization of our contributions is illustrated in \Cref{fig:main}.

\begin{figure*}[t]
    \centering
    \includegraphics[width=1\linewidth]{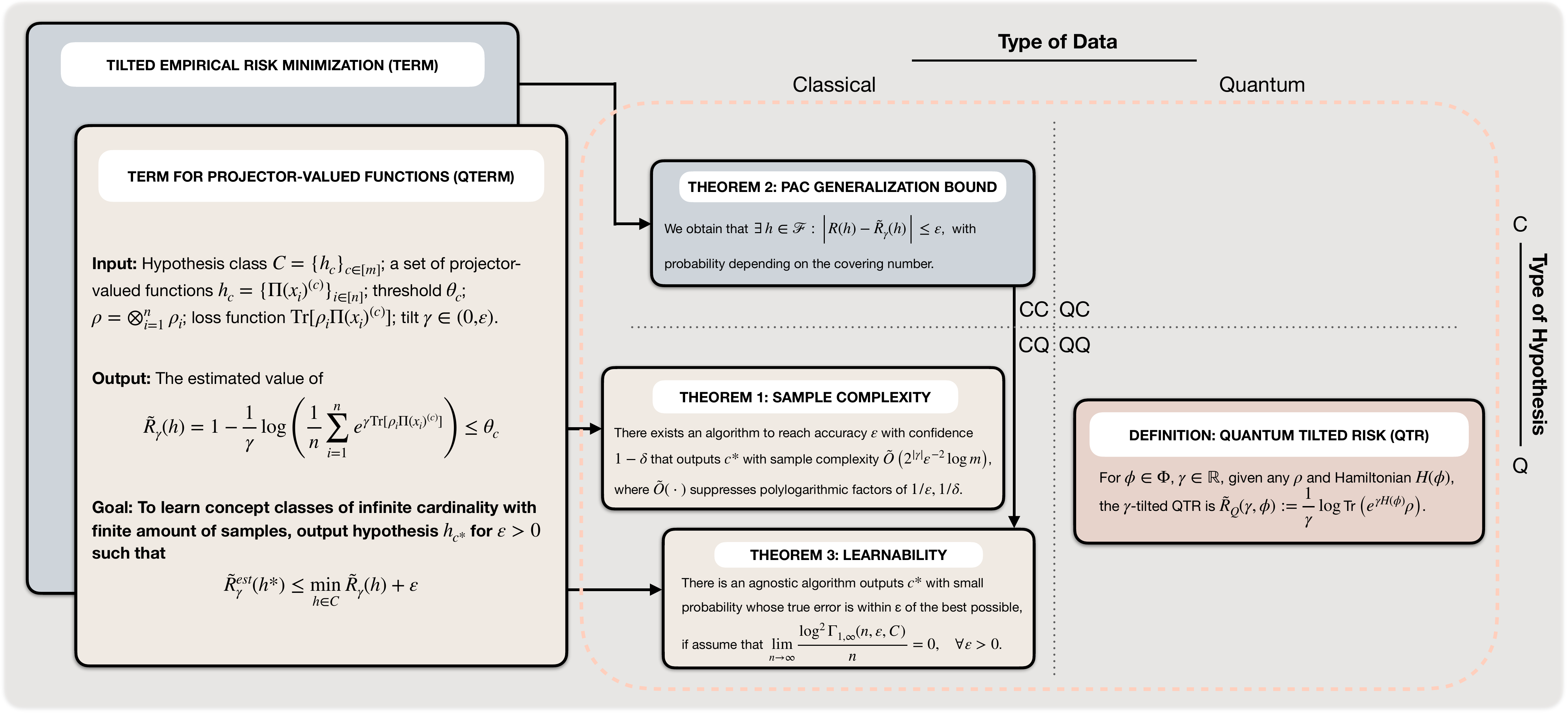}
    \caption{\justifying \textbf{Overview of the learning framework and main theoretical results.} This diagram provides a comprehensive overview of our QTERM framework for learning quantum processes. The left side outlines the key algorithm: which takes as input a hypothesis class, projector-valued functions, product states and corresponding loss function, and a tilting parameter. Through quantum threshold search, the algorithm outputs an optimal hypothesis $h^*$ that minimizes risk. On the right, the figure contextualizes our work by showing how our work extends across different types of data and different types of hypotheses (classical and quantum). Our key theoretical contributions, including provable guarantees on sample complexity, PAC generalization, and agnostic learnability, are summarized by Theorems 1-3, while simultaneously introducing the novel concept of quantum tilt risk (QTR).}
    \label{fig:main}
\end{figure*}

\subsection{Outline of this Work}
This paper is structured as follows. \Cref{sec:background} lays the theoretical foundations of the learning model under consideration, empirical risk minimization, the TERM paradigm, and their quantum extensions to QERM. \Cref{sec:qterm_algorithm} introduces the formal definition of QTERM and derives sample complexity bounds for learning quantum processes. \Cref{sec:pac_bounds} presents new measures of PAC generalization bounds on classical TERM. \Cref{sec:agnostic_learning} introduces learning guarantees under the consideration of agnostic learning. \Cref{sec:qtr} introduces further ideas on defining tilted measures for quantum systems potentially related to Hamiltonian learning. We also comment on the necessity of the different methods of defining the tilt mechanism in different quantum learning settings and applications. This study concludes in~\Cref{sec:conclusion} with a summary of its contributions and an outlook on open directions. Some of the proofs and details on algorithms are deferred to~\Cref{app:proofs} and~\Cref{app:algo}, respectively.

\section{Theoretical Foundations}\label{sec:background}
\subsection{Learning Model Formulation}
There exist multiple model formulations within the learning theoretical frameworks. The well-known PAC learning in the quantum setting is a framework for studying quantum machine learning algorithms, extending the classical PAC learning model to quantum settings. In this framework, the goal is for a learner to approximately learn a target quantum state or quantum process from a finite number of quantum samples, with high probability and within a specified accuracy. Quantum PAC learning has been formalized in multiple settings. 

A supervised quantum machine learning model $\mathcal{M}$ is defined as:
\be
\mathcal{M} = (\mathcal{H}, U(\theta), \mathcal{E}, \mathcal{M}, L),
\ee
where $\mathcal{H}$ is a complex Hilbert space representing the state space, $U(\theta)$ is a parameterized unitary operator acting on $\mathcal{H}$, transforming quantum states. The function $\mathcal{E}: \mathbb{R}^d \to \mathcal{H}$ is an encoding function mapping classical inputs $x$ to quantum states $|\phi\rangle$, $\mathcal{M}: \mathcal{H} \to \mathcal{Y}$ is a measurement operator that extracts classical predictions $\hat{y}$ from the quantum state, $L: \mathcal{Y} \times \mathcal{Y} \to \mathbb{R}$ is a loss function measuring the difference between true labels $y$ and predicted labels $\hat{y}$. The model aims to minimize the expected loss:
\be
\mathcal{L}(\theta) = \frac{1}{N} \sum_{i=1}^N L(y_i, \mathcal{M}(U(\theta) \mathcal{E}(x_i))),
\ee
over a training dataset $\mathcal{D} = \{(x_i, y_i)\}_{i=1}^N$.

Agnostic learning generalizes PAC learning by removing the assumption that the target function is contained within the hypothesis class. Formally, given a hypothesis class $\mathcal{H}$ and a distribution $\mathcal{D}$ over the input space $\mathcal{X}$, the objective in agnostic learning is to find a hypothesis $h \in \mathcal{H}$ that minimizes the expected loss over $\mathcal{D}$, i.e.,  
\be
\inf_{h \in \mathcal{H}} \mathbb{E}_{(x,y) \sim \mathcal{D}} [L(y, h(x))].
\ee
Unlike the PAC setting, where a realizable assumption ensures that there exists a hypothesis with zero or arbitrarily low error, the agnostic framework considers the worst-case scenario where the best hypothesis in $\mathcal{H}$ may still incur significant error. In the quantum setting, this distinction is particularly relevant when learning quantum processes, as the true quantum process may not be exactly representable within the chosen hypothesis class. As shown in Ref.~\cite{fanizza_learning_2024}, when access to training data is constrained by quantum measurement limitations, the agnostic framework provides a formulation for bounding sample complexity, ensuring that learning guarantees remain valid even when the true quantum process deviates from the assumed model class.

\subsection{Tilted Empirical Risk Minimization (TERM)}
Empirical risk minimization is widely used in machine learning where the goal is to optimize model parameters by minimizing the average loss over the training data \citep{vapnik_statistical_1998}. The key idea behind ERM is that, because we do not know the true distribution that generated the data, we use the available training data to estimate the risk by averaging the loss over the dataset. However, ERM has notable limitations, particularly when dealing with outliers or imbalanced data, and generalizing to unseen data. For example, in situations where certain data subgroups are underrepresented or contain outliers, the model may overfit to noisy data or produce unfair solutions, especially if the outliers belong to subgroups that we aim to serve better. To mitigate some of the challenges of ERM, an idea is to use tilted losses as a generalization of this traditional technique. Motivated in large by exponential tilting in deviation theory, works in Refs.~\citep{li2020tilted, li2023tilted} introduce TERM, an extension of ERM with an additional tilted hyperparameter $\gamma$. The flexibility of tilted hyperparameter allows the model to continuously adjust decision boundaries based on the problem settings, offering robustness against outliers, fairness towards underrepresented subgroups, or a balance between both. This approach is especially useful in classification tasks where different groups or data distributions require varying levels of emphasis. Tilted empirical risk bears resemblance with the LogExpSum function, adjusting focus or smoothing behavior through log-exp functions.

The setting of supervised learning can be defined as follows. Let $D = \{(x_1, y_1), (x_2, y_2), \ldots, (x_n, y_n)\}$ denote a training set, where $x_i \in \mathcal{X}$ represents the input feature vector and $y_i \in \mathcal{Y}$ denotes the corresponding output label for each instance $i \in \{1, 2, \ldots, n\}$. The objective of supervised learning is to learn a mapping function $f: \mathcal{X} \rightarrow \mathcal{Y}$ that minimizes the expected prediction error on the output space. This is typically achieved by optimizing a loss function $L(y_i, \hat{y}_i)$, where $\hat{y}_i = f(x_i)$ is the predicted label for the input $x_i$. The optimization process can be expressed as:
\be
\hat{f} = \arg\min_{f \in \mathcal{F}} \frac{1}{n} \sum_{i=1}^{n} L(y_i, f(x_i)),
\ee
where $\mathcal{F}$ denotes the hypothesis space of possible functions. The trained model $\hat{f}$ is then expected to generalize to unseen data, allowing for accurate predictions on new instances $x\in\mathcal{X}$. Generalization error quantifies how well the learned model $\hat{f}$ performs on new samples drawn from the same data distribution $\mathcal{D}$ as the training data. Generalization error $\mathbb{E}$ can be expressed as: 
\be
\mathbb{E}_{(x, y) \sim \mathcal{D}} \left[ L(y, \hat{y}) \right] - \frac{1}{n} \sum_{i=1}^{n} L(y_i, \hat{y}_i),
\ee
where $L(y, \hat{y})$ is the loss function measuring the difference between the true label $y$ and the predicted label $\hat{y} = \hat{f}(x)$. Minimizing generalization error is the primary goal of a supervised learner. Following the above, we can now define ERM more formally.
\begin{definition}[Empirical Risk Minimization --- ERM \citep{vapnik_statistical_1998}]
For a hypothesis $h({\bf x}^{(i)})$, empirical risk minimization, the average loss over the training data, is defined as
\be
\Bar{R}(\theta):= \frac{1}{N} \sum_{i \in [N]} \mathcal{L}(h({\bf x}^{(i)}), y^{(i)}, \theta), 
\ee
where $\mathcal{L}(h(x_i), y_i, \theta)$ is the loss function that quantifies the distance between the prediction $h(x_i)$ and true label $y_i$ and parameter $\theta$, $N$ is the number of training data points.
\end{definition}
The performance of ERM is influenced by \textit{sample complexity}, which refers to the number of samples required to achieve a certain level of generalization error. In order to establish good generalization performance, it is important to understand the relationship between the complexity of the hypothesis space and the amount of training data. While ERM focuses on minimizing the average loss over the training dataset, TERM introduces a modified approach by applying the exponential tilting technique to ERM, assigning different level of emphasis to the loss of samples. TERM can be defined as below.
\begin{definition}[Tilted Empirical Risk Minimization --- TERM \citep{li2020tilted,li2023tilted}]\label{def:term}
For $\gamma \in \mathbbm{R}^{ \backslash 0}$, the $\gamma$-tilted loss in ERM is defined as the tilted empirical risk minimization, given by
\be
\widetilde{R}_{\gamma}(h):= \frac{1}{\gamma}\log \left(\frac{1}{N} \sum_{i \in [N]} e^{\gamma \mathcal{L}(h({\bf x}^{(i)}), y_i, \theta)} \right),
\ee    
where $\mathcal{L}(h({\bf x}^{(i)}), y_i, \theta)$ is the loss function on hypothesis $h({\bf x}^{(i)})$, true label $y_i$ and parameter $\theta$, and N is the number of training samples.
\end{definition}

The TERM framework introduces a flexible tilting tool to address the shortcomings of ERM, offering more robust solutions by adjusting the sensitivity of model to outliers using the tilted hyperparameter. Solutions such as TERM are crucial to complementing ERM, because of its sensitivity to overfitting. TERM, in turn, ensures lower generalization error. The generalization error of QTERM for both bounded and unbounded loss functions has been studied with respect to both information-theoretic as well as uniform bounds \cite{aminian2024generalizationerrortiltedempirical}. Across both bounded and unbounded loss functions, the convergence rates for the upper bounds consistently exhibit a rate of $O(1/\sqrt{n})$, where $n$ is the number of training samples.

\subsection{Mathematical Framework and Assumptions}
In classical learning, the effectiveness of TERM relies on specific conditions related to the complexity of the hypothesis class. A key tool we use to analyze this complexity is the \textit{covering number}, a standard measure of the effective size of a concept class formalized by Ref. \cite{anthony2009neural}. 

\begin{definition}[Covering Number]
Let $\mathcal{G} \subseteq L(\mathcal{H})^{\mathcal{X}}$ be a class of functions mapping from a space $\mathcal{X}$ to bounded linear operators on a Hilbert space $\mathcal{H}$, and let $\varepsilon \in (0, 1]$. The \emph{covering number} of $\mathcal{G}$ is defined as:
\be
\Gamma_{1,q}(n, \varepsilon, \mathcal{G}) := \max \left\{ N_{\text{in}}(\varepsilon, \mathcal{G}, \| \cdot \|_{1,q,\bar{x}}) \mid \bar{x} \in \mathcal{X}^n \right\},
\ee
where $N_{\text{in}}(\varepsilon, \mathcal{G}, d)$ is the minimal cardinality of any \emph{internal $\varepsilon$-cover} of $\mathcal{G}$ under the pseudometric $d$, given by the seminorm $\| \cdot \|_{1,q,\bar{x}}$ and $\bar{x} = (x_1, \dots, x_n) \in \mathcal{X}^n$ a sample of size $n$.
\end{definition}
To treat concept classes whose hypotheses take values in an operator algebra rather than in $\mathbb{R}$, we introduce an \textit{operator-class covering number}, defined with a seminorm specifically tailored to operator-valued functions:
\begin{definition}[Operator-class covering number]
Let $\mathcal{C}\subseteq \mathcal{L}(\mathcal{H})^\mathcal{X} $ be a hypothesis class, and let $\|\cdot\|_{p,q,\vec{x}} $ denote a seminorm over datasets  $\vec{x}\in \mathcal{X}^n $. Given $\varepsilon>0$, the operator-class covering number $\Gamma_{p,q}(n,\varepsilon,\mathcal{C})$ quantifies the complexity of $\mathcal{C}$, defined as the minimal cardinality of an internal $\varepsilon $-cover.
\end{definition}

\begin{definition}[Exponentially tilted loss-function class]
Let $\mathcal{F}\subseteq Y^{\mathcal{X}}$ be any function class and let $L : Y\times Y \to [0,\,a]$ be a bounded loss.
For any tilting parameter $\gamma>0$ we define the ``exponential loss-function class'' associated with $(\mathcal{F},L,\gamma)$ by

$$
\mathcal{G}_{\mathcal{F},L,\gamma} = \{g_h : \mathcal{X}\times Y \to [1,\,e^{\gamma a}]\,|\, h\in\mathcal{F}, \,g_h(x,y)=\exp (\gamma\,L(y, h(x)))\}.
$$
\end{definition}
This mirrors the structure of the \textit{induced loss-function class} in Definition 6 in \cite{fanizza_learning_2024} while replacing the original loss with the exponential tilting loss. The authors establish a quantum variant of ERM for projector-value concept class as below, which is the basis of our result for extension to quantum variant of \textit{tilted} ERM.
\begin{fact}[Quantum empirical risk minimization (QERM) for projector-valued functions, Theorem 1 in \cite{fanizza_learning_2024}]\label{fact:QERM}
Let us consider a product state $ \rho = \rho_1 \otimes \cdots \otimes \rho_n$ and a collection of projectors $ \{\Pi_1^{(c)}, \dots, \Pi_n^{(c)}\}_{c=1}^m $, where
$$
\mu_c = \frac{1}{n} \sum_{i=1}^n \mathrm{Tr}[\rho_i \Pi_i^{(c)}].\footnote{For simplicity, we will subsequently denote $\rho(x_i)$ as $\rho_i$, $\Pi(x_i)$ as $\Pi_i$.}
$$
There exists an algorithm that outputs both an index $ c^* $ and an estimated value $ \hat{\mu}_{c^*} $ such that
$$
\Pr\left(\left|\hat{\mu}_{c^*} - \max_{c \in [m]} \mu_c \right| \geq \varepsilon \ \cup \ \left|\hat{\mu}_{c^*} - \mu_{c^*} \right| \geq \varepsilon\right) \leq \delta,
$$
if the number of samples $n$ is sufficiently large. Specifically, $n$ can be chosen as
$$
n = O\left(\frac{\log \frac{1}{\delta} \max\left(\log \frac{m}{\delta}, \log^2(em)\right)}{\varepsilon^2}\right).
$$
\end{fact}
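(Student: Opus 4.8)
\emph{Proof proposal.} The plan is to recast the target quantity as a single Born acceptance probability and then reduce the selection problem to a \emph{quantum threshold search}. Write (with identities on the remaining factors left implicit)
\[
\mu_c \;=\; \tr\!\big[\rho\, M_c\big], \qquad M_c \;:=\; \frac{1}{n}\sum_{i=1}^{n}\Pi_i^{(c)},
\]
and note $0 \preceq M_c \preceq \mathbbm{1}$, so $\{M_c,\mathbbm{1}-M_c\}$ is a legal two-outcome measurement on the joint register $\rho=\rho_1\otimes\cdots\otimes\rho_n$. The task then becomes: from a \emph{single} copy of $\rho$, identify an index $c^{*}$ with $\mu_{c^{*}}\ge \max_c\mu_c-\varepsilon$ and report $\hat\mu_{c^{*}}$ with $|\hat\mu_{c^{*}}-\mu_{c^{*}}|<\varepsilon$ --- i.e.\ exactly quantum threshold search over $m$ tests. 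The product structure is what makes this feasible from one copy: rather than coherently implementing $M_c$, we probe hypothesis $c$ by measuring a selection of the single-shot subsystems $\rho_i$ with their own $\Pi_i^{(c)}$, so the ``acceptance frequency'' we observe is an empirical average of $n$ \emph{independent} Bernoulli experiments with mean $\mu_c$, and classical concentration (Hoeffding) supplies the per-hypothesis accuracy.

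First I would reduce ``find the max, and estimate it'' to a sequence of threshold decisions by binary search over a level $\theta\in[0,1]$: at each level call a threshold-search subroutine that, except with small failure probability, either returns a $c$ with $\mu_c\ge\theta-\varepsilon/4$ (recurse upward, record $c$) or certifies $\mu_c\le\theta+\varepsilon/4$ for every $c$ (recurse downward). After $O(\log(1/\varepsilon))$ rounds the surviving interval has width $O(\varepsilon)$ and brackets $\max_c\mu_c$, the last recorded index $c^{*}$ obeys $\mu_{c^{*}}\ge\max_c\mu_c-O(\varepsilon)$, and a final fresh batch of $\Theta(\varepsilon^{-2}\log(1/\delta))$ subsystems re-estimates $\mu_{c^{*}}$ directly to produce $\hat\mu_{c^{*}}$; after rescaling constants this yields both $|\hat\mu_{c^{*}}-\max_c\mu_c|<\varepsilon$ and $|\hat\mu_{c^{*}}-\mu_{c^{*}}|<\varepsilon$. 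The $O(\log(1/\varepsilon))$ search rounds contribute only a further logarithmic factor, which in the parameter regime of interest is subsumed by the stated bound.

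The real content is the threshold-search subroutine, for which I would invoke a routine in the style of Aaronson's quantum OR bound and its sharpenings (the threshold-search procedure underlying the paper's algorithm). The candidate tests are applied to the shared state one after another; each test for hypothesis $c$ is a soft (weak) binary measurement designed so that whenever $\mu_c$ is genuinely below the level, the ``below'' outcome occurs with probability near $1$ and, by the gentle-measurement lemma, leaves the state trace-distance $O(\sqrt{\cdot})$ from where it started, so the statistics of later tests remain faithful; the first test to ``fire'' returns its index. A quantum union bound caps the accumulated disturbance over the whole run, and this is exactly the step that buys a $\mathrm{polylog}(m)$ rather than a $\mathrm{poly}(m)$ overhead: the two logarithmic factors arising respectively from the union over the $m$ candidates and from the number of rounds/boosting needed inside the search give the $\log^2(em)$, the $\log(m/\delta)$ absorbs the union over search levels and hypotheses, and an outer median over $O(\log(1/\delta))$ independent runs on disjoint blocks of subsystems boosts confidence --- altogether reproducing $n=O\!\big(\varepsilon^{-2}\log(1/\delta)\max(\log(m/\delta),\log^2(em))\big)$.

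The main obstacle --- and the reason threshold search rather than direct estimation is required --- is the single-copy constraint: measuring $\rho_i$ to probe hypothesis $c$ irreversibly collapses it and may corrupt the probing of the other hypotheses, so one cannot estimate all $m$ averages independently without partitioning the $n$ subsystems into $m$ groups and paying an $m\log m/\varepsilon^2$ sample cost. Beating this down to $\mathrm{polylog}(m)$ overhead hinges entirely on controlling the cumulative measurement backaction via the gentle-measurement-plus-quantum-union-bound argument, and getting the constants and the confidence bookkeeping in that argument right is where the work lies. A secondary, more routine point: because $\Pi_i^{(c)}$ varies with the subsystem $i$, every concentration estimate must be stated for the \emph{average} $\mu_c$ rather than for a single repeated Born probability, and one must check that it is this $\mu_c$, not some per-subsystem surrogate, that ends up certified.
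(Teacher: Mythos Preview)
Your proposal is correct and matches the paper's approach: the result is cited (not proved) here, but the algorithmic skeleton you describe --- binary search over a threshold level, each level decided by a gentle-measurement threshold search (Lemma~\ref{lemma:Qthreshold_search}/Algorithm~\ref{alg:threshold_search}), followed by a direct check on a fresh block of subsystems --- is exactly Algorithm~\ref{alg:learning_projector_valued}, and the $\gamma\to 0$ specialization of Theorem~\ref{theorem:TERM_proj} recovers this fact. The only loose end is your claim that the $O(\log(1/\varepsilon))$ binary-search overhead is ``subsumed by the stated bound'': the Fact's bound carries no $\log(1/\varepsilon)$ factor at all (whereas the paper's own tilted generalization in Theorem~\ref{theorem:TERM_proj} does pick up $\log^2(1/\varepsilon)$), so matching the exact stated complexity requires the sharper bookkeeping of the cited reference rather than the outline you give.
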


The next lemma summarise a key property of the TERM framework.

\begin{lemma}[Convergence of $\widetilde{R}_\gamma(\theta)$, Lemma 4 in \cite{li2020tilted,li2023tilted}]\label{lemma:tiltconverg}
    Consider that the loss function $L(x_i;\theta)$ belongs to the differentiability class $C^1$ (i.e., continuously differentiable) with respect to $\theta \in \Theta \subseteq \mathbb{R}^d$ for $i \in [N]$. Tilted empirical risk $\widetilde{R}_\gamma(\theta)$ converges to min-loss, ERM, max-loss as t moves from $-\infty$ to $\infty$.
    
\end{lemma}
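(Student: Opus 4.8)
The plan is to establish, for each fixed $\theta\in\Theta$, the three pointwise limits
$$\lim_{\gamma\to-\infty}\widetilde{R}_\gamma(\theta)=\min_{i\in[N]}L(x_i;\theta),\qquad\lim_{\gamma\to 0}\widetilde{R}_\gamma(\theta)=\frac1N\sum_{i\in[N]}L(x_i;\theta),\qquad\lim_{\gamma\to+\infty}\widetilde{R}_\gamma(\theta)=\max_{i\in[N]}L(x_i;\theta),$$
and then monotonicity of the map $\gamma\mapsto\widetilde{R}_\gamma(\theta)$, which upgrades the three isolated limits to the stated continuous interpolation from min-loss through ERM to max-loss as $\gamma$ (the paper's ``$t$'') runs over $\mathbb{R}$. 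Throughout I write $L_i:=L(x_i;\theta)$ and $Z(\gamma):=\frac1N\sum_{i\in[N]}e^{\gamma L_i}$, so that $\widetilde{R}_\gamma(\theta)=\gamma^{-1}\log Z(\gamma)$, and I work at a single fixed $\theta$ (the $C^1$ hypothesis will re-enter only at the end).

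For the extremal limits, the idea is to pull the dominant term out of the log-sum-exp. When $\gamma\to+\infty$, with $L_{\max}=\max_i L_i$, write $\widetilde{R}_\gamma(\theta)=L_{\max}+\gamma^{-1}\log\!\big(\frac1N\sum_i e^{\gamma(L_i-L_{\max})}\big)$; since $L_i-L_{\max}\le 0$ with equality for at least one index, the inner average lies in the fixed interval $[1/N,\,1]$ for all $\gamma>0$, so its logarithm is bounded and the correction term vanishes after dividing by $\gamma$, leaving $L_{\max}$. The case $\gamma\to-\infty$ is symmetric: factor out $L_{\min}=\min_i L_i$, note that $\gamma<0$ and $L_i-L_{\min}\ge 0$ force $e^{\gamma(L_i-L_{\min})}\in(0,1]$, so again the inner average sits in $[1/N,1]$, its logarithm is bounded, and division by $\gamma\to-\infty$ sends the correction to zero, leaving $L_{\min}$.

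The $\gamma\to0$ case is a removable $0/0$ singularity since $Z(0)=1$; I would resolve it by L'Hôpital (equivalently, by Taylor-expanding $e^{\gamma L_i}=1+\gamma L_i+O(\gamma^2)$): $\frac{d}{d\gamma}\log Z(\gamma)=Z'(\gamma)/Z(\gamma)=\sum_i w_i(\gamma)\,L_i$ with $w_i(\gamma)=e^{\gamma L_i}/\sum_j e^{\gamma L_j}$, and the denominator has derivative $1$, so the limit is $\sum_i w_i(0)L_i=\frac1N\sum_i L_i$, the ERM objective. For monotonicity, differentiate directly: with $f(\gamma)=\log Z(\gamma)$ one has $\widetilde{R}_\gamma'(\theta)=\gamma^{-2}\big(\gamma f'(\gamma)-f(\gamma)\big)$, and a short computation identifies $\gamma f'(\gamma)-f(\gamma)=\gamma\,\mathbb{E}_{w(\gamma)}[L]-\log\!\big(\tfrac1N\sum_j e^{\gamma L_j}\big)=D\!\big(w(\gamma)\,\|\,u\big)\ge 0$, the relative entropy between the tilted weights $w(\gamma)$ and the uniform distribution $u$ on $[N]$. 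Hence $\widetilde{R}_\gamma(\theta)$ is non-decreasing on $(-\infty,0)\cup(0,+\infty)$ and, by the previous paragraph, extends continuously across $\gamma=0$; together with the extremal limits this is exactly the claimed interpolation. The $C^1$ assumption on $L(\cdot\,;\theta)$ in $\theta$ plays no role in these $\gamma$-limits (which are pointwise in $\theta$); its purpose is to guarantee that $\widetilde{R}_\gamma$ is itself $C^1$ in $\theta$, so that the limiting objects really are the genuine min-loss, ERM, and max-loss optimization targets.

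The hard part here is essentially bookkeeping rather than any genuine obstacle: this is the standard log-sum-exp / generalized-power-mean calculus. The two mild care points are (i) verifying in the extremal limits that the inner averages stay in a fixed bounded interval even when several samples tie for the extremum, which they always do, landing in $[1/N,1]$; and (ii) handling the removable singularity at $\gamma=0$ cleanly, for which L'Hôpital or a first-order expansion of the exponentials suffices. If one wanted the stronger reading ``convergence along the minimizers $\theta^\star_\gamma$'' instead of pointwise-in-$\theta$ convergence, that would call for a compactness/equicontinuity argument invoking the $C^1$ hypothesis; but the lemma as stated is the pointwise version, and the plan above suffices for it.
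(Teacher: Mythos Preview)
Your argument is correct and complete: the factor-out-the-dominant-term trick handles the $\gamma\to\pm\infty$ limits, L'H\^opital dispatches $\gamma\to 0$, and the identity $\gamma f'(\gamma)-f(\gamma)=D(w(\gamma)\,\|\,u)$ gives monotonicity. Note, however, that the paper does not actually prove this lemma --- it is quoted verbatim as Lemma~4 from the cited TERM papers \cite{li2020tilted,li2023tilted} and used as a black box. The only place the paper carries out an analogous computation is for the quantum tilted risk (the QTR convergence lemma), where it treats just the $\gamma\to 0$ case via L'H\^opital, exactly as you do; your proposal is strictly more detailed than anything appearing in the present paper, and in particular the monotonicity-via-relative-entropy step is not reproduced here at all.
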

Hence, noted that the parameter $\gamma$ controls the focus of the risk function as tilting hyperparameter varies.

\subsection{Quantum Threshold Search}
Quantum threshold search proposed in Ref.\cite{fanizza_learning_2024} operates on a product state of non-identical quantum states, a crucial generalization from prior work \cite{B_descu_2024} that required identical copies. This algorithm enables learning from classical-quantum data in practical scenarios where input cannot be controlled. It's mainly designed to solve a specific learning problem: given an unknown quantum state, the goal is to find an `observable-threshold' pair from a predefined set where the expectation value of observable exceeds that threshold on the state. Its properties are defined in Lemma \ref{lemma:Qthreshold_search}.

\begin{lemma}[Quantum threshold search on nonidentical states, Lemma 1 in \cite{fanizza_learning_2024}]\label{lemma:Qthreshold_search} 
Suppose that given the following inputs:
\begin{enumerate}[label=(\alph*)]
\item A single product state $\rho = \rho_{1} \otimes \cdots \otimes \rho_{n} \in \mathcal{D}(\mathcal{H}^{(d)})^{\otimes n}$, with each $\rho_i$ a (possibly nonidentical) qudit state.
\item Parameters $\varepsilon \in [0,1]$.
\item A collection of lists of projectors $\{\Pi_{1}^{(c)}, \dots, \Pi_{n}^{(c)}\}_{c\in[m]}$.
\item A set of known thresholds $\{\theta_c\}_{c\in[m]}$ where $\theta_c \in [0,1]$.
\end{enumerate}
Assume there is at least one $c$ for which
$$
\frac{1}{n} \sum_{i=1}^n \mathrm{Tr}\left[\Pi_i^{(c)} \rho_i\right]>\theta_c.
$$
Then there exists an algorithm such that
\begin{enumerate}[label=(\alph*)]
\item Performs a two-outcome measurement $\left\{B_c, \mathbf{1}-B_c\right\}$ at each step $c=1, \ldots, m$, based on the projectors $\left\{\Pi_1^{(c)}, \ldots, \Pi_n^{(c)}\right\}$.
\item  If the measurement accepts $B_c$, then the algorithm halts and outputs $c$, otherwise it proceeds to $c+1$.
\end{enumerate}
If the condition holds: $ \left(\log m+C_2\right)^2<C_1 n \varepsilon^2 $ for appropriate constants $C_1, C_2>0$, it outputs $c$ such that
$$
\frac{1}{n} \sum_{i=1}^n \mathrm{Tr}\left[\Pi_i^{(c)} \rho_i\right] \geq \theta_c-\varepsilon
$$
with probability at least $0.03$.
\end{lemma}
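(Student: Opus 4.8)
The plan is to realize the two-outcome measurement at step $c$ as a coarse-graining of the commuting family $\{\tilde\Pi_i^{(c)}\}_{i=1}^n$, where $\tilde\Pi_i^{(c)} = \mathbf{1}^{\otimes(i-1)}\otimes\Pi_i^{(c)}\otimes\mathbf{1}^{\otimes(n-i)}$ acts on the full product space. Since these projectors act on distinct tensor factors they commute, so jointly measuring them on $\rho=\rho_1\otimes\cdots\otimes\rho_n$ yields a bit string $b\in\{0,1\}^n$ with independent $b_i\sim\mathrm{Bernoulli}(\mathrm{Tr}[\Pi_i^{(c)}\rho_i])$; hence the empirical mean $\frac1n\sum_i b_i$ is an unbiased estimator of $\mu_c:=\frac1n\sum_i\mathrm{Tr}[\Pi_i^{(c)}\rho_i]$ which, by Hoeffding's inequality, concentrates around $\mu_c$ with sub-Gaussian tails of width $\Theta(1/\sqrt n)$. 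I would take $B_c$ to be the spectral projector of $A_c:=\frac1n\sum_i\tilde\Pi_i^{(c)}$ onto eigenvalues $\geq\theta_c-\tfrac\varepsilon2$, with the threshold additionally perturbed by a small independent random offset (the dithering of the sparse-vector technique), to be used below. On the undisturbed state this gives the clean one-sided estimates: if $\mu_c\geq\theta_c$ then $\mathrm{Tr}[B_c\rho]\geq 1-e^{-\Omega(n\varepsilon^2)}$; if $\mu_c<\theta_c-\varepsilon$ then $\mathrm{Tr}[B_c\rho]\leq e^{-\Omega(n\varepsilon^2)}$; and if $B_c$ accepts, then necessarily $\mu_c\geq\theta_c-\varepsilon$ up to an $O(1/\sqrt n)$ slack.

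The second and harder step is to analyze the sequential procedure, where each rejected measurement $\{B_c,\mathbf{1}-B_c\}$ disturbs the state and these disturbances can compound over the $m$ adaptively reached steps. I would partition the indices into ``good'' ($\mu_c\geq\theta_c$, so $B_c$ accepts $\rho$ almost surely), ``bad'' ($\mu_c<\theta_c-\varepsilon$, so $B_c$ rejects $\rho$ almost surely), and ``borderline'' (in between, for which any acceptance is harmless since such a $c$ is already an acceptable output). For the good and bad indices, the gentle-measurement lemma together with Gao's quantum union bound bound the accumulated trace-distance perturbation by $\sqrt{m\,e^{-\Omega(n\varepsilon^2)}}$, so that, conditioned on reaching the first good index $c_0$, its $B_{c_0}$ still accepts with probability $1-o(1)$, and no bad index is ever accepted except with small probability. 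The borderline indices are the genuinely delicate point: a rejected borderline measurement can move the state by an $\Omega(1)$ amount, for which gentle measurement says nothing. Here the plan is to use the threshold dithering so that, with constant probability, at most one index is borderline and that single $\Omega(1)$ disturbance can be absorbed into the final constant; equivalently, track the sub-normalized states $\sigma_c=(\mathbf{1}-B_{c-1})\cdots(\mathbf{1}-B_1)\,\rho\,(\mathbf{1}-B_1)\cdots(\mathbf{1}-B_{c-1})$ and argue that whenever a borderline rejection costs a large trace-distance move it simultaneously consumes a proportional amount of probability mass, so that the net effect on $\Pr[\text{acceptable output}]$ stays controlled.

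Combining the pieces, I would bound $\Pr[\text{acceptable output}]\geq 1-\Pr[\text{some bad }c\text{ accepted}]-\Pr[\text{all rejected}]$ and lower-bound the right-hand side by a fixed positive constant: the dither must have resolution $\sim\varepsilon$ and the per-test failure exponent must dominate $\log m$ to survive the union over $m$ steps, and together with the square-root loss of the quantum union bound this is exactly the regime $(\log m+C_2)^2<C_1 n\varepsilon^2$. These square-root losses and the borderline handling are also precisely why the constant cannot be pushed near $1$, hence the modest claim of success probability $\geq 0.03$ (which is then amplified to $1-\delta$ by $O(\log\tfrac1\delta)$ independent repetitions in the applications, explaining the $\log\tfrac1\delta$ prefactor in Fact~\ref{fact:QERM}). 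I expect the main obstacle to be this borderline-rejection disturbance analysis in the sequential step: the Hoeffding concentration, the union bound over the $m$ tests, and the final arithmetic are routine, but controlling how an adaptively chosen sequence of only-single-copy, worst-case-non-gentle projective measurements degrades the joint state while still guaranteeing that the good index survives is where the real work, and the quantitative losses (the $\log^2 m$ and the small constant), come from.
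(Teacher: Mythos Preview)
The paper does not prove this lemma; it is quoted as Lemma~1 of Ref.~\cite{fanizza_learning_2024} and used as a black box, with only the pseudocode of Algorithm~\ref{alg:threshold_search} reproduced in the appendix. There is therefore no in-paper proof to compare your proposal against.

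On the proposal itself: the overall architecture is right and matches the strategy of the source references (B\u{a}descu--O'Donnell and its nonidentical-state extension in \cite{fanizza_learning_2024}). Building $B_c$ as a spectral threshold of the commuting sum $A_c=\frac1n\sum_i\tilde\Pi_i^{(c)}$, the Hoeffding-based one-sided estimates on the undisturbed state, and the gentle-measurement/quantum-union-bound treatment of the clearly good and clearly bad indices are all essentially what is done there. The soft spot is precisely the borderline step, and your first proposed fix does not work: a single random offset shared across all $c$ cannot guarantee that ``at most one index is borderline with constant probability,'' since nothing prevents arbitrarily many $\mu_c$ from clustering near their thresholds. The mechanism actually used in the source proofs is closer to your second suggestion: the dither is drawn from an exponential-tailed distribution so that each individual rejection is gentle \emph{in expectation}, with expected damage at step $c$ controlled by the acceptance probability at step $c$; summing these expected damages over all $m$ rounds is what produces the $(\log m)^2/(n\varepsilon^2)$ condition and the small surviving constant. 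Your ``consumes proportional probability mass'' heuristic is pointing at the right telescoping structure, but to make it go through you would need this expected-damage accounting rather than a counting argument on the number of borderline indices.
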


The quantum threshold search algorithm is formulated as \cref{alg:threshold_search} in \cref{app:algo}. This algorithm runs a sequence of two-outcome POVMs $B_c$ on the same state, halting as soon as it finds a projector list whose sample average exceeds its threshold, or else certifying that none do.

This technique leverages a generalization of standard measurement, the gentler technique from Refs. \cite{B_descu_2024,fanizza_learning_2024} to achieve improved performance. Unlike standard projective measurements, which can cause significant state collapse, gentle measurements allow us to check if an expectation value is above or below a threshold without fundamentally altering the quantum state. This is crucial for our setting, which involves products of non-identical states. We then optimize the threshold for each concept using binary search. Specifically, we begin with a candidate threshold of 1/2. For each iteration, we run \cref{alg:threshold_search} on a new batch of data to check if a candidate below the current threshold exists. Based on this result, we select either the lower or upper half of the candidate interval, updating the threshold to 1/4 or 3/4, respectively. This process continues until we approximate the empirical risk to a desired precision.

The next section makes use of the lemmas and propositions defined above, in order to define and formalize TERM in the presence of quantum states.

\section{Tilted Empirical Risk for Quantum Learning (QTERM)}\label{sec:qterm_algorithm}

In this section, we introduce a TERM version tailored for quantum learning problem which we refer to as QTERM from quantum TERM. Specifically, this framework focuses on learning from quantum data by using collections of projector-valued hypotheses. We define a tunable loss function based on the probability of not accepting these projector-valued functions with a tilting parameter.

\begin{definition}[Tilted Empirical Risk Minimization for Projector-Valued Functions]
Let $m$ hypotheses be represented by projector-valued functions $h_{c,s}=\{\Pi_{s,1}^{(c)}, \dots, \Pi_{s,l}^{(c)}\}$, where $s$ indexes blocks of size $l$. Given a product state $\rho = \rho_1 \otimes \cdots \otimes \rho_n$, with $\rho_s \in \mathcal{D}(\mathcal{H}^{(d)})$, the \emph{tilted empirical risk} of hypothesis $c$ for tilting parameter $\gamma \in \mathbb{R}$ is:
\be
\widetilde{R}_{\gamma}(h) := 1 - \frac{1}{\gamma} \log \left( \frac{1}{n} \sum_{i=1}^n e^{\gamma \mathrm{Tr}[\rho_i \Pi_i^{(c)}]} \right).
\ee
\end{definition}

This definition provides the theoretical foundation for our QTERM framework, enabling us to identify an optimal hypothesis that approximately minimizes the true tilted risk $R(h)$ with tunable losses.

We will adopt this method to implement TERM within our framework. Its properties are defined in the following proposition. It is primarily adapted from Lemma 2 and Lemma 5 in \cite{fanizza_learning_2024}, with necessary adjustments made—mainly replacing the empirical risk with the tilted empirical risk.

\begin{proposition}[Conditions on QTERM given sufficiently large product states]\label{prop:conditions} 
Given product states divided into $n/l=2Tk$ blocks as $\rho = \rho_{1} \otimes \cdots \otimes \rho_{2Tk}, \, \text{where} \, \rho_s = \rho_{s,1} \otimes \cdots \otimes \rho_{s,l},$ and a collection of lists of projectors $\{\Pi_{s,1}^{(c)}, \dots, \Pi_{s,l}^{(c)}\}_{c\in[m],\,s\in[2Tk]}$, assume that the following conditions hold:
\begin{enumerate}[label=(\roman*)]
    \item For proper constants $C_1,\,C_2>0$, let
    \begin{align}
    l > \frac{(\log m +C_2)^2}{C_1 \varepsilon^2}.
    \end{align}
    \item Simultaneously, 
    \begin{align}
    l > \frac{\log (Tk/\delta)}{\varepsilon^2}
    \end{align}
    for large enough $T=O(\log\frac{1}{\varepsilon})$ and $k=O(\log\frac{1}{\delta}\log\frac{1}{\varepsilon})$.
    \item The tilted empirical risk of each hypothesis $c$ for $\gamma$ is defined as
    \be
        \widetilde{R}_{\gamma}(c) = 1-\mu_c(\gamma) := 1- \frac{1}{\gamma} \log \left
        (\frac{1}{l}\sum_{j=1}^l e^{\gamma \mathrm{Tr}[\rho_{s,j} \Pi_{s,j}^{(c)}]}\right),
    \ee
    where the approximations of $\frac{1}{\gamma} \log (\frac{1}{l}\sum_{j=1}^l e^{\gamma E_{\rho_{s,j}}[\Pi_{s,j}^{(c)}]})$ with respect to the blocks of states for all $c\in[m]$, are $\mu_c(\gamma)\in[0,1]$, satisfying that
    \begin{align}
    \left| \frac{1}{\gamma} \log \left( \frac{1}{l}\sum_{j=1}^l e^{\gamma E_{\rho_{s,j}}[\Pi_{s,j}^{(c)}]}\right) - \mu_c(\gamma) \right| \leq \varepsilon/4.
    \end{align}
\end{enumerate}
There exists an algorithm that outputs both an index $ c^* $ and an estimated value $ \hat{\mu}_{c^*}(\gamma) $ such that
$$
\Pr\left(\left|\hat{\mu}_{c^*}(\gamma) - \max_{c \in [m]} \mu_c(\gamma) \right| \geq \varepsilon \ \cup \ \left|\hat{\mu}_{c^*}(\gamma) - \mu_{c^*}(\gamma) \right| \geq \varepsilon\right) \leq \delta
$$
for large enough $n =\Omega(1/\varepsilon^2)$.
\end{proposition}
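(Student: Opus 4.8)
The plan is to reduce Proposition~\ref{prop:conditions} to the existing machinery for QERM (Fact~\ref{fact:QERM}) and quantum threshold search (Lemma~\ref{lemma:Qthreshold_search}) by treating the tilted block-averages $\mu_c(\gamma)$ as the new ``observable averages'' to be estimated, and then running the same binary-search-over-thresholds strategy described after Lemma~\ref{lemma:Qthreshold_search}. First I would set up the block structure: partition the $n$ registers into $n/l = 2Tk$ blocks of size $l$, reserving $T$ binary-search levels and $k$ independent repetitions per level (so $2Tk$ blocks total, with a factor $2$ because each threshold test consumes a fresh batch). For a fixed candidate threshold $\theta_c$ and a fixed block $\rho_s$, the quantity we must compare against $\theta_c$ is $\mu_c(\gamma) = \frac{1}{\gamma}\log\bigl(\frac{1}{l}\sum_{j=1}^l e^{\gamma \tr[\rho_{s,j}\Pi_{s,j}^{(c)}]}\bigr)$; the role of condition~(iii) is exactly to guarantee that the \emph{gently measured} estimate of this log-sum-exp, extracted block by block, is within $\varepsilon/4$ of the true value, so that the threshold comparisons are meaningful.

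Next I would invoke Lemma~\ref{lemma:Qthreshold_search} with the projector lists $\{\Pi_{s,1}^{(c)},\dots,\Pi_{s,l}^{(c)}\}$ and thresholds $\{\theta_c\}$: condition~(i), $l > (\log m + C_2)^2/(C_1\varepsilon^2)$, is precisely the hypothesis $(\log m + C_2)^2 < C_1 l \varepsilon^2$ needed for the threshold-search guarantee to fire on a single block of size $l$, yielding with constant probability $\ge 0.03$ a concept $c$ whose block-average exceeds $\theta_c - \varepsilon$. The probability amplification from $0.03$ to $1-\delta$ is handled by the $k = O(\log\frac{1}{\delta}\log\frac{1}{\varepsilon})$ independent repetitions together with condition~(ii), $l > \log(Tk/\delta)/\varepsilon^2$, which supplies a Hoeffding/union bound over the $Tk$ total sub-tests so that every block-average estimate is simultaneously $\varepsilon$-accurate with probability $\ge 1-\delta$. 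The $T = O(\log\frac{1}{\varepsilon})$ levels of binary search on the threshold (starting at $1/2$, halving the search interval each round as described in the text) then pin down $\max_c \mu_c(\gamma)$ to precision $\varepsilon$, and the concept $c^*$ surviving the search satisfies both $|\hat\mu_{c^*}(\gamma) - \max_c \mu_c(\gamma)| < \varepsilon$ and $|\hat\mu_{c^*}(\gamma) - \mu_{c^*}(\gamma)| < \varepsilon$; combining the accuracy budget $\varepsilon/4$ from~(iii) with the threshold-search slack $\varepsilon$ and rescaling constants gives the stated $\varepsilon$ in the final probability bound. Summing the block count gives $n = 2Tkl = \Omega(1/\varepsilon^2)$ after absorbing the logarithmic factors, matching the claim.

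The main obstacle I anticipate is not the threshold-search plumbing — that is essentially Lemma~2 and Lemma~5 of \cite{fanizza_learning_2024} repackaged — but rather justifying condition~(iii) operationally: unlike the linear average $\frac{1}{l}\sum_j \tr[\rho_{s,j}\Pi_{s,j}^{(c)}]$, which is directly an expectation of a two-outcome measurement and hence amenable to the gentle-measurement estimation of \cite{fanizza_learning_2024}, the log-sum-exp $\mu_c(\gamma)$ is a nonlinear functional of the per-register expectations. I would handle this by noting that $x\mapsto \frac{1}{\gamma}\log(\frac1l\sum_j e^{\gamma x_j})$ is $1$-Lipschitz in the $x_j$'s (uniformly in $\gamma$, since its gradient is a probability vector), so an $\varepsilon/4$-accurate joint estimate of the vector $(\tr[\rho_{s,j}\Pi_{s,j}^{(c)}])_j$ — obtainable gently, register by register, exactly as in the QERM construction — propagates to an $\varepsilon/4$-accurate estimate of $\mu_c(\gamma)$; the regime $|\gamma| < 1$ ensures the exponential factors $e^{\gamma\tr[\cdot]} \in [e^{-1}, e]$ stay bounded so no blow-up of constants occurs. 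A secondary point to check carefully is that gentleness is preserved: each of the $2Tk$ batches is measured only once and the disturbance per batch is $O(\varepsilon)$ in trace distance by the gentle-measurement lemma, so the sequential composition over batches does not accumulate error beyond the budget already allocated. With these two points in place, the remaining steps are routine bookkeeping of constants and a union bound.
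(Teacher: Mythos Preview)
Your high-level plan (block structure, \textbf{ThresholdSearch} on each block, $k$-fold amplification, $T$ rounds of binary search) matches the paper's Algorithm~\ref{alg:learning_projector_valued}, and you correctly map conditions~(i) and~(ii) to the hypothesis of Lemma~\ref{lemma:Qthreshold_search} and the Hoeffding/union-bound budget. But there is a genuine gap: Lemma~\ref{lemma:Qthreshold_search} only tests whether the \emph{plain} block average $\frac{1}{l}\sum_j \tr[\rho_{s,j}\Pi_{s,j}^{(c)}]$ crosses a threshold, not the tilted quantity $\mu_c(\gamma)$, and you never supply the bridge between the two. The paper does this in Step~2 of the proof of Theorem~\ref{theorem:TERM_proj} (error case~(a)): writing $\delta_j = E_{\rho_{s,j}}[\Pi_{s,j}^{(c)}] - \bar E$ and applying Hoeffding's Lemma to the zero-mean, $[-1,1]$-bounded $\delta_j$ gives $\bigl|\frac{1}{\gamma}\log\frac{1}{l}\sum_j e^{\gamma E_{s,j}} - \frac{1}{l}\sum_j E_{s,j}\bigr| \le |\gamma|/8$, so the tilted and untilted block averages agree to $O(\varepsilon)$ precisely when $|\gamma| < \varepsilon$. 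This is what converts a \textbf{ThresholdSearch} verdict on the plain average into information about $\mu_c(\gamma)$; condition~(iii) does not do this for you, since it only relates the block-level tilted expression (with expectations $E$) to $\mu_c(\gamma)$, not either of them to the untilted average that the algorithm actually probes.

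A secondary point: condition~(iii) is an \emph{assumption} of the proposition, so your ``main obstacle'' paragraph is aimed at the wrong target. And the route you sketch there --- estimate each per-register expectation $\tr[\rho_{s,j}\Pi_{s,j}^{(c)}]$ to accuracy $\varepsilon/4$ and push through the $1$-Lipschitz log-sum-exp --- would fail in this setting: each $\rho_{s,j}$ is available as a \emph{single} copy, so one two-outcome measurement yields a single Bernoulli sample, not an $\varepsilon/4$-accurate estimate of that register's expectation. When the paper does establish~(iii) (Step~1 of Theorem~\ref{theorem:TERM_proj}), it instead concentrates the \emph{sum} $\frac{1}{l}\sum_j e^{\gamma\tr[\rho_{s,j}\Pi_{s,j}^{(c)}]}$ directly via Proposition~\ref{Prop:Hoeffdings2} on the exponentiated bounded variables, then applies the Lipschitz property of $\log$ to pass to the tilted form --- no per-register accuracy is needed or available.
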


To find the tilted empirical risk minimizer in the quantum learning case, we prove the following theorem. By systematically minimizing the TERM, the algorithm identifies the optimal hypothesis (indexed as $c^*$) and provides a sample complexity measure guaranteeing high-confidence probability when the sample size $n$ is sufficiently large. Combined with Lemma~\cref{lemma:Qthreshold_search}, \cref{alg:threshold_search} retains a constant success probability in finding a projector meeting its threshold criterion under proper conditions in $m$ and some constants.

\begin{theorem}[Tilted empirical risk minimization for projector-valued functions]\label{theorem:TERM_proj}
Let there be $m$ possible concepts (or hypotheses), each of which is defined by a set of projector-valued functions $h_c = \{\Pi_{1}^{(c)}, \dots, \Pi_{n}^{(c)}\}$ and given a product state $ \rho = \rho_1 \otimes \cdots \otimes \rho_n $ where each $\rho_i\in \mathcal{D}(\mathcal{H}^{(d)})$, with 
$$
\mu_c(\gamma) = \frac{1}{\gamma} \log \left( \frac{1}{n} \sum_{i=1}^n e^{\gamma \mathrm{Tr}[\rho_i \Pi_i^{(c)}]} \right),
$$
where $|\gamma|\in (0,\varepsilon)$. There exists an algorithm that outputs both an index $ c^* $ and an estimated value $ \hat{\mu}_{c^*}(\gamma) $ such that
$$
\Pr\left(\left|\hat{\mu}_{c^*}(\gamma) - \max_{c \in [m]} \mu_c(\gamma) \right| \geq \varepsilon \ \cup \ \left|\hat{\mu}_{c^*}(\gamma) - \mu_{c^*}(\gamma) \right| \geq \varepsilon \right) \leq \delta
$$
for sufficiently large $n$ with the probability of error less than $\delta$. One can choose large enough
$$
n = \frac{1}{\varepsilon^2} \log \frac{1}{\delta} \log^2 \frac{1}{\varepsilon} \times O\left(\max \left( \frac{(e^{|\gamma|}-1)^2}{\gamma^2} \log \left(\frac{m}{\delta} \log \frac{1}{\delta} \log^2 \frac{1}{\varepsilon}\right),\, (\log m+C_1)^2 \right)\right),
$$
for appropriate $C_1>0$.
\end{theorem}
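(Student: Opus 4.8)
The plan is to reduce \Cref{theorem:TERM_proj} to \Cref{prop:conditions} by verifying that, under the stated choice of $n$, the three hypotheses of the proposition hold with the block size $l = n/(2Tk)$ for the advertised values $T = O(\log\frac{1}{\varepsilon})$ and $k = O(\log\frac{1}{\delta}\log\frac{1}{\varepsilon})$. Conditions (i) and (ii) of the proposition are essentially about making each block large enough to run one invocation of the quantum threshold search (\Cref{lemma:Qthreshold_search}) to accuracy $\varepsilon/4$ and to control the union bound over the $Tk$ internal stages; these will follow from the $(\log m + C_1)^2$ term and the $\log(\tfrac{1}{\delta}\log\tfrac{1}{\delta}\log^2\tfrac{1}{\varepsilon})$-type factor inside the $\max$, exactly as in \Cref{fact:QERM}. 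The genuinely new work is condition (iii): one must show that the per-block empirical tilted quantity $\frac{1}{\gamma}\log(\frac{1}{l}\sum_j e^{\gamma E_{\rho_{s,j}}[\Pi^{(c)}_{s,j}]})$ can be estimated to within $\varepsilon/4$, and that this estimate concentrates around the global $\mu_c(\gamma)$, using only single-copy gentle measurements.

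The key step is a stability/Lipschitz estimate for the map $\vec t \mapsto \frac{1}{\gamma}\log(\frac1l\sum_j e^{\gamma t_j})$ on $[0,1]^l$. First I would bound its gradient: $\partial_{t_j} = e^{\gamma t_j}/\sum_k e^{\gamma t_k}$, so the map is $1$-Lipschitz in $\|\cdot\|_\infty$ and, more usefully, perturbing each $t_j$ by $\eta$ perturbs the output by at most $\eta$. But the sharper ingredient — and the source of the $\frac{(e^{|\gamma|}-1)^2}{\gamma^2}$ prefactor — is a comparison between the tilted risk and the ordinary mean: for $t_j \in [0,1]$ and $|\gamma|<1$, a second-order expansion of $\log$-sum-exp gives
\[
\Bigl| \tfrac{1}{\gamma}\log\bigl(\tfrac1l\textstyle\sum_j e^{\gamma t_j}\bigr) - \tfrac1l\textstyle\sum_j t_j \Bigr| \;\le\; \frac{|\gamma|}{2}\,\mathrm{Var}_j(t_j) \;\le\; \frac{(e^{|\gamma|}-1)^2}{\gamma^2}\cdot(\text{something }O(\gamma)),
\]
which lets one transfer the concentration of the untilted block averages (already available from \Cref{fact:QERM}-style arguments on products of non-identical states) to the tilted ones, at the cost of replacing the variance of the $e^{\gamma t_j}$'s by the factor $(e^{|\gamma|}-1)^2/\gamma^2$ — this is why that term multiplies the $\log(m/\delta\cdots)$ sample count rather than the covering term. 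Here the constraint $|\gamma|\in(0,\varepsilon)$ is what forces this prefactor to stay $O(1)$ and keeps the whole correction below $\varepsilon/4$.

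With condition (iii) in hand, I would invoke \Cref{prop:conditions} verbatim to obtain the index $c^*$ and estimate $\hat\mu_{c^*}(\gamma)$ with the two-sided guarantee at confidence $1-\delta$, then unwind the block decomposition: the binary-search-over-thresholds wrapper contributes the $\log^2\frac1\varepsilon$ factor and $k = O(\log\frac1\delta\log\frac1\varepsilon)$ the $\log\frac1\delta$ factor, so that $n = l\cdot 2Tk$ with $l$ taken as the maximum of the two bounds in (i)–(ii) reproduces the claimed expression. The main obstacle I anticipate is item (iii): carefully controlling the two distinct error sources — the statistical fluctuation of estimating $\mathrm{Tr}[\rho_i\Pi_i^{(c)}]$ from a single gentle measurement, and the deterministic $\log$-sum-exp-vs-mean gap — so that the $\gamma$-dependent prefactor comes out as $(e^{|\gamma|}-1)^2/\gamma^2$ and not something worse, and verifying that the gentleness overhead does not blow up when $\gamma\ne 0$. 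A secondary subtlety is ensuring the "at least one good $c$" precondition of \Cref{lemma:Qthreshold_search} is preserved under tilting, i.e. that $\max_c \mu_c(\gamma)$ still exceeds the running threshold by the required margin after each binary-search update.
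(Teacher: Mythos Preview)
Your high-level plan—partition into $2Tk$ blocks of size $l$, verify the three hypotheses of Proposition~\ref{prop:conditions}, and read off the guarantee—matches the paper. But the two $\gamma$-dependent ingredients are swapped in your account, and this misalignment prevents you from recovering the explicit sample complexity.

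The factor $(e^{|\gamma|}-1)^2/\gamma^2$ does \emph{not} come from the deterministic tilted-versus-mean comparison you display. In the paper, condition~(iii) is verified by working directly with the exponentiated variables $X_{s,j}^{(c)}(\gamma)=e^{\gamma\,\mathrm{Tr}[\rho_{s,j}\Pi_{s,j}^{(c)}]}$: one applies Hoeffding (Proposition~\ref{Prop:Hoeffdings2}) to these, whose range is $|e^{\gamma}-1|$, and then passes through $\frac{1}{\gamma}\log(\cdot)$ via log-Lipschitz with constant at most $e^{|\gamma|}/|\gamma|$. The product of range and Lipschitz constant yields $(e^{|\gamma|}-1)/|\gamma|$ in the deviation, hence $(e^{|\gamma|}-1)^2/\gamma^2$ in the required block length. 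Your route—concentrate the \emph{untilted} block means and add the deterministic log-sum-exp gap—would instead give $l\gtrsim \varepsilon^{-2}\log(Tkm/\delta)$ with \emph{no} $\gamma$-prefactor: the gap contributes only an additive $|\gamma|/8$ (via Hoeffding's lemma), not a multiplicative factor on $l$. Your displayed chain $\frac{|\gamma|}{2}\mathrm{Var}\le \frac{(e^{|\gamma|}-1)^2}{\gamma^2}\cdot O(\gamma)$ is just the tautology $O(\gamma)\le O(\gamma)$ and does not produce the prefactor.

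Conversely, the bound $\Delta(\gamma)\le|\gamma|/8$ \emph{is} needed, but in a place you skip: Proposition~\ref{prop:conditions} is only qualitative ($n=\Omega(1/\varepsilon^2)$), so invoking it verbatim does not yield the theorem's explicit $n$. The paper re-runs the \textsc{ThresholdSearch}+\textsc{Check} analysis with $\gamma$-dependent constants. After \textsc{ThresholdSearch} returns a candidate $c$ on an odd block, the \emph{tilted} estimate on a fresh even block is compared to $\lambda_c$; separating the cases $|\mu_c-\lambda_c|\ge 2\varepsilon$ and $|\mu_c-\lambda_c|\le\varepsilon/2$ requires exactly your tilted-versus-mean inequality (this is where $|\gamma|<\varepsilon$ is actually used) together with the exponential Chernoff bound of Proposition~\ref{lemma:NEestimationExp} on the measured outcomes, yielding a second error term $p_{\mathrm{err}}^{(2)}\le T\bigl(0.97^k+2(k+1)e^{-l|\gamma|\varepsilon^2/12(e^{|\gamma|}-1)}\bigr)$. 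The final $l$ is the maximum of the bound from condition~(i), the bound from the Step~1 concentration (carrying $(e^{|\gamma|}-1)^2/\gamma^2$), and the bound from this Step~2 check—not just ``the two bounds in (i)--(ii)'' as you write.
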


\begin{proof}
Let us begin by choosing $n$ large enough so that $n \geq 6Tkl$. With the conditions in Proposition \ref{prop:conditions}, let $l > \frac{\log (Tk/\delta)}{\varepsilon^2}$ for large enough $T=O(\log\frac{1}{\varepsilon})$ and $k=O(\log\frac{1}{\delta}\log\frac{1}{\varepsilon})$. Each of the $n$ observations corresponds to a single quantum state $\rho_i$. Since $K = 2Tk$ in Proposition~\ref{Prop:Hoeffdings2}, sample without replacement $2Tk$ blocks of length $l$ from $[n]$. Let the $s$th block form the product state 
$$
\rho = \rho_{s,1} \otimes \cdots \otimes \rho_{s,l}, \quad s=1,\cdots,2Tk
$$
and for each concept $c\in [m]$, the set of projectors
\begin{align}
\{\Pi_{s,1}^{(c)}, \dots, \Pi_{s,l}^{(c)}\},\quad s=1,\cdots,2Tk. 
\end{align}
Within block $s$, we identify
\begin{align*}
X^{(c)}_{s,i}(\gamma) := e^{\gamma \mathrm{Tr}[\rho_{s,i} \Pi_{s,i}^{(c)}]}.
\end{align*}
Step 1: By substituting $X^{(c)}_{s,j}(\gamma)$, we can apply the Proposition \ref{Prop:Hoeffdings2} to provide the bound on the estimation such that 
\begin{align}\label{Ineq:error}
\left\lvert \frac{1}{l}\sum_{j=1}^l X^{(c)}_{s,j}(\gamma) - \frac{1}{n} \sum_{i=1}^n e^{\gamma \mathrm{Tr}[\rho_i \Pi_i^{(c)}]} \right\rvert \leq |e^\gamma -1|\sqrt{\frac{2 \log (4Tkm/\delta_1)}{l}}, \quad \forall c\in[m], \, s\in [2Tk],
\end{align}
with the probability of error $p_{\text{err}}^{(1)}$ less than $\delta_1$. Regarding the bound on the tilted risk estimator, for $\gamma<0$, we have
\begin{align}\label{Bound:neg}
& \left| \frac{1}{\gamma} \log \frac{1}{l}\sum_{j=1}^l X^{(c)}_{s,j}(\gamma) - \frac{1}{\gamma}\log \frac{1}{n} \sum_{i=1}^n e^{\gamma \mathrm{Tr}[\rho_i \Pi_i^{(c)}]} \right| \\\nonumber
(\text{Log-Lipschitz}) & \leq \frac{\exp(-\gamma)}{|\gamma|} \left| \frac{1}{l}\sum_{j=1}^l X^{(c)}_{s,j}(\gamma) - \frac{1}{n} \sum_{i=1}^n e^{\gamma \mathrm{Tr}[\rho_i \Pi_i^{(c)}]} \right|\\\nonumber
(\text{By Prop. \ref{Prop:Hoeffdings2}}) &\leq \frac{\exp(-\gamma) (1-\exp(\gamma))}{|\gamma|} \sqrt{\frac{2 \log (4Tkm/\delta_1)}{l}}.
\end{align}
Similarly, for $\gamma>0$, 
\begin{align}\label{Bound:pos}
\left| \frac{1}{\gamma} \log \frac{1}{l}\sum_{j=1}^l X^{(c)}_{s,j}(\gamma) - \frac{1}{\gamma}\log \frac{1}{n} \sum_{i=1}^n e^{\gamma \mathrm{Tr}[\rho_i \Pi_i^{(c)}]} \right| \leq \frac{(\exp(\gamma)-1)}{|\gamma|} \sqrt{\frac{2 \log (4Tkm/\delta_1)}{l}}.
\end{align}
Combining the bound in (\cref{Bound:neg}) and (\cref{Bound:pos}), we can obtain the error bound $\varepsilon/4 = \frac{(\exp(|\gamma|)-1)}{|\gamma|} \sqrt{\frac{2 \log (4Tkm/\delta_1)}{l}}$ with probability $p_{\text{err}}^{(1)} \leq 4Tkm e^{-l \gamma^2 \varepsilon^2/32(e^{|\gamma|} -1)^2}$, and it follows that condition (iii) in Proposition~\ref{prop:conditions} is satisfied.
\\\\
Step 2: We apply \textbf{ThresholdSearch} (Lemma~\ref{lemma:Qthreshold_search}) to the states $\rho_s$ for each odd $s$. For each concept $c\in [m]$, we input the following two lists of projectors into the search, each with precision parameter $\varepsilon / 4$ and threshold $\lambda_c$:

(1) $\left\{\Pi_{s, 1}^{(c)}, \Pi_{s, 2}^{(c)}, \ldots, \Pi_{s, l}^{(c)}\right\}$ with threshold $\lambda_c+7 / 4 \varepsilon$,

(2) $\left\{\mathbf{1}-\Pi_{s, 1}^{(c)}, \mathbf{1}-\right.$ $\left.\Pi_{s, 2}^{(c)}, \ldots, \mathbf{1}-\Pi_{s, l}^{(c)}\right\}$ with threshold $1-\lambda_c-7 / 4 \varepsilon$,

If \textbf{ThresholdSearch} returns a concept $c$ for a particular odd $s$, we perform the measurements $\{\Pi_{s+1,j}^{(c)}\}_{j\in[l]}$

on $\{\rho_{s+1,j}\}_{j\in [l]}$, with a list of resulting random variables $\left\{\mathrm{Tr}[\rho_{s+1,j} \Pi_{s+1,j}^{(c)}]\right\}_{j\in[l]}$ and define $X_{s+1,j}^{(c)}(\gamma):= e^{\gamma \mathrm{Tr}[\rho_{s+1,j} \Pi_{s+1,j}^{(c)}]}$ for all $j\in [l]$. If $\left| \frac{1}{\gamma} \log \frac{1}{l} \sum_{j=1}^l \mathrm{X}^{(c)}_{s+1,j} (\gamma) -\lambda_c\right|>\varepsilon$ is satisfied, the check is considered passed, and we output $c$. If none of the checks pass, we declare a failure.

Next we analyze two error scenarios and the corresponding probabilities:

(a) Suppose there exists $c$ such that $\left|\mu_c-\lambda_c\right| \geq 2 \varepsilon$. By condition (iii) in Lemma 2 in \cite{fanizza_learning_2024}, it follows that \\$\left|1 / l \sum_{j=1}^l E_{\rho_{s, j}}\left[\Pi_{s, j}^{(c)}\right]-\lambda_c\right| \geq 7 / 4 \varepsilon$. This implies:
$$
1 /l \sum_{j=1}^l E_{\rho_{s, j}}\left[\Pi_{s, j}^{(c)}\right] \geq \lambda_c+7 / 4 \varepsilon, \quad \text{or} \quad 1 / l \sum_{j=1}^l E_{\rho_{s, j}}\left[1- \Pi_{s, j}^{(c)}\right] \leq 1-\lambda_c-7 / 4 \varepsilon.
$$
Since \textbf{ThresholdSearch} guarantees that if the projector or its complement exceeds the specified threshold, the algorithm returns that concept with probability at least $0.03$. We conclude that each time \textbf{ThresholdSearch} is performed on the pair of projector lists, it outputs a concept $c$ such that $\left|1 / l \sum_{j=1}^l E_{\rho_{s, j}}\left[\Pi_{s, j}^{(c)}\right]-\lambda_c\right|>6 / 4 \varepsilon$. For such a concept $c$, and for any $\gamma \in \mathbb{R}$, consider the quantity
\be
\Delta(\gamma) = \left| \frac{1}{\gamma} \log \frac{1}{l}\sum\limits_{j=1}^l \exp\left(\gamma E_{\rho_{s, j}}[\Pi_{s, j}^{(c)}]\right) - \frac{1}{l} \sum\limits_{j=1}^l E_{\rho_{s, j}}[\Pi_{s, j}^{(c)}] \right|
\ee
Noted that $\bar{E}_{s,j} = \sum\limits_{j=1}^l E_{\rho_{s, j}}[\Pi_{s, j}^{(c)}]$. To bound $\Delta(\gamma)$ from above, let
\be
\delta_j = E_{\rho_{s, j}}[\Pi_{s, j}^{(c)}] - \bar{E}_{s,j},
\ee
such that $\frac{1}{l}\sum_{j=1}^l \delta_j = 0$. 
Therefore, we obtain
\be
\frac{1}{\gamma}\,\log\left(\frac{1}{l}\,\sum_{j=1}^l e^{\gamma E_{\rho_{s, j}}[\Pi_{s, j}^{(c)}]}\right)
=
\frac{1}{\gamma}\,\left[\gamma\bar{E}_{s,j}
+
\log\left(\frac{1}{l}\,\sum_{j=1}^l e^{\gamma\delta_j}\right)\right]
= \bar{E}_{s,j} +
\frac{1}{\gamma}\,\log\left(\frac{1}{l}\,\sum_{j=1}^l e^{\gamma\delta_j}\right)
= \bar{E}_{s,j} + \Delta(\gamma).
\ee
Since $E_{\rho_{s, j}}[\Pi_{s, j}^{(c)}] \in [0,1]$ for each $j$, then $\bar{E}_{s,j}\in [0,1]$ and thus each deviation $\delta_j\in[-\bar{E}_{s,j} , 1 - \bar{E}_{s,j}]\subseteq [-1, 1]$. Consequently, $\{\delta_j\}$ is a zero-mean set bounded by $\pm 1$.

Applying Hoeffding’s Lemma, for every $\gamma\in\mathbb{R}$, 
\be
\frac{1}{l}\,\sum_{j=1}^l e^{\gamma\delta_j}
\le
\exp\left(\frac{\gamma^2}{8}\right).
\ee
Taking logarithms and dividing by $\gamma>0$
\be
0 \leq \Delta(\gamma) \leq \frac{1}{\gamma}\log\left(\frac{1}{l}\,\sum_{j=1}^l e^{\gamma\delta_j}\right)
\le
\frac{\gamma}{8}
\ee
By an analogous argument (applied to $-\gamma$) one obtains $\Delta(\gamma)\ge -\frac{|\gamma|}{8}$. Thus for every real $\gamma\neq 0$, $\left|\Delta(\gamma)\right| \le \frac{|\gamma|}{8}$.
    In particular, if we desire $\left|\Delta(\gamma)\right| \in O(\varepsilon)$, it suffices to choose $|\gamma| \in (0,\varepsilon)$, i.e., $\gamma$ is sufficiently small.

Finally, we apply multiplicative Chernoff bound to have
\begin{align}\label{Ineq:pFPC}
& \Pr \left(\left|\frac{1}{\gamma} \log \frac{1}{l} \sum_{j=1}^l X_{s+1,j}^{(c)}(\gamma) - \lambda_c \right| \leq \varepsilon\right) \\\nonumber
& = \Pr \left(\left|\frac{1}{\gamma} \log \frac{1}{l} \sum_{j=1}^l X_{s+1,j}^{(c)}(\gamma) - \frac{1}{l}\sum_{j=1}^l E_{\rho_{s,j} }[\Pi_{s,j}^{(c)}] + \frac{1}{l}\sum_{j=1}^l E_{\rho_{s,j} }[\Pi_{s,j}^{(c)}] - \lambda_c \right| \leq \varepsilon\right)\\\nonumber
(\text{Triangle Ineq.})& \leq \Pr \left(\left||\frac{1}{\gamma} \log \frac{1}{l} \sum_{j=1}^l X_{s+1,j}^{(c)}(\gamma) - \frac{1}{l}\sum_{j=1}^l E_{\rho_{s,j} }[\Pi_{s,j}^{(c)}] | - | \frac{1}{l}\sum_{j=1}^l E_{\rho_{s,j} }[\Pi_{s,j}^{(c)}] - \lambda_c |\right| \leq \varepsilon\right)\\\nonumber
& \leq \Pr \left(\left|\frac{1}{\gamma} \log \frac{1}{l} \sum_{j=1}^l X_{s+1,j}^{(c)}(\gamma) - \frac{1}{l}\sum_{j=1}^l E_{\rho_{s,j} }[\Pi_{s,j}^{(c)}] \right|\geq \varepsilon/2\right)\\\nonumber
(\text{Let $E_{s,j}:= E_{\rho_{s,j}}[\Pi_{s,j}^{(c)}]$}) & = \Pr \left(\left| \frac{1}{\gamma} \log \frac{1}{l} \sum\limits_{j=1}^l X_{s+1,j}^{(c)}(\gamma) -  \frac{1}{\gamma} \log \frac{1}{l}\sum\limits_{j=1}^l e^{\gamma E_{s,j}} + \frac{1}{\gamma} \log \frac{1}{l}\sum\limits_{j=1}^l e^{\gamma E_{s,j}} - \frac{1}{l}\sum_{j=1}^l E_{\rho_{s,j} } \right|\geq \varepsilon/2 \right)\\\nonumber
(\text{Triangle Ineq.})& \leq \Pr \left( \left| \frac{1}{\gamma} \log \frac{1}{l} \sum\limits_{j=1}^l X_{s+1,j}^{(c)}(\gamma) -  \frac{1}{\gamma} \log \frac{1}{l}\sum\limits_{j=1}^l e^{\gamma E_{s,j}} \right| + \left| \frac{1}{\gamma} \log \frac{1}{l}\sum\limits_{j=1}^l e^{\gamma E_{s,j}} - \frac{1}{l}\sum_{j=1}^l E_{\rho_{s,j} } \right| \geq \varepsilon/2 \right)\\\nonumber
(\text{Jensen's Ineq.}) & \leq \Pr \left(\left| \frac{1}{\gamma} \log \frac{1}{l} \sum\limits_{j=1}^l X_{s+1,j}^{(c)}(\gamma) -  \frac{1}{\gamma} \log \frac{1}{l}\sum\limits_{j=1}^l e^{\gamma E_{s,j}} \right| \geq \varepsilon/4 \right) \\\nonumber
(\text{Log-Lipschitz})& \begin{cases}
    \leq \Pr\left( \left| \frac{1}{l}\sum_{j=1}^l X^{(c)}_{s+1,j}(\gamma) - \frac{1}{l} \sum_{i=1}^l e^{\gamma E_{s,j}} \right| \geq |\gamma| e^{\gamma}\varepsilon/4 \right) \quad (\gamma<0 )\\\nonumber
    \leq \Pr\left( \left| \frac{1}{l}\sum_{j=1}^l X^{(c)}_{s+1,j}(\gamma) - \frac{1}{l} \sum_{i=1}^l e^{\gamma E_{s,j}} \right| \geq |\gamma|\varepsilon/4 \right) \quad (\gamma>0 )
\end{cases}\\\nonumber
(\text{By Prop. \ref{lemma:NEestimationExp}}) & \begin{cases}
\leq 2 e^{-l |\gamma| e^{\gamma} \varepsilon^2/12(1-e^{\gamma})} \quad (\gamma<0)\\
\leq 2 e^{-l |\gamma| \varepsilon^2/12(e^{\gamma}-1)} \quad (\gamma>0)
\end{cases}\\\nonumber
& = 2 e^{-l |\gamma| \varepsilon^2/12(e^{|\gamma|}-1)}
\end{align}

Thus, it shows that when $\left|\mu_c-\lambda_c\right| \geq 2 \varepsilon$, the probability of the check passing is at least $1- 2 e^{-l |\gamma| \varepsilon^2/12(e^{|\gamma|}-1)}$.

(b) When a concept is chosen with $\left|\mu_c-\lambda_c\right| \leq \varepsilon/2$, the proof is similar as (\cref{Ineq:pFPC}) such that
\begin{align}
& \Pr \left(\left|\frac{1}{\gamma} \log \frac{1}{l} \sum_{j=1}^l X_{s+1,j}^{(c)}(\gamma) - \lambda_c \right| \geq \varepsilon\right) \\\nonumber
& = \Pr \left(\left|\frac{1}{\gamma} \log \frac{1}{l} \sum_{j=1}^l X_{s+1,j}^{(c)}(\gamma) - \frac{1}{l}\sum_{j=1}^l E_{\rho_{s,j} }[\Pi_{s,j}^{(c)}] + \frac{1}{l}\sum_{j=1}^l E_{\rho_{s,j} }[\Pi_{s,j}^{(c)}] - \lambda_c \right| \geq \varepsilon\right)\\\nonumber
(\text{Triangle Ineq.})& \leq \Pr \left(|\frac{1}{\gamma} \log \frac{1}{l} \sum_{j=1}^l X_{s+1,j}^{(c)}(\gamma) - \frac{1}{l}\sum_{j=1}^l E_{\rho_{s,j} }[\Pi_{s,j}^{(c)}] | + | \frac{1}{l}\sum_{j=1}^l E_{\rho_{s,j} }[\Pi_{s,j}^{(c)}] - \lambda_c | \geq \varepsilon\right)\\\nonumber
& \leq \Pr \left(|\frac{1}{\gamma} \log \frac{1}{l} \sum_{j=1}^l X_{s+1,j}^{(c)}(\gamma) - \frac{1}{l}\sum_{j=1}^l E_{\rho_{s,j} }[\Pi_{s,j}^{(c)}] |\geq \varepsilon/2\right)\\\nonumber
& \leq 2 e^{-l |\gamma| \varepsilon^2/12(e^{|\gamma|}-1)}.
\end{align}
When $\left|\mu_c-\lambda_c\right| \leq \varepsilon/2$, the probability of the check passing is ($\leq 2 e^{-l |\gamma| \varepsilon^2/12(e^{|\gamma|}-1)}$). This means the algorithm is unlikely to output such concepts, as they do not represent significant deviations from the threshold. Therefore, the algorithm should select concepts $c^*$ such that $\left|\mu_{c^*}-\lambda_{c^*}\right| \geq \varepsilon/2$.

The \cref{alg:learning_projector_valued} thus can be applied to obtain a good estimate if $|\max_c \hat{\mu}_{c}(\gamma) - \mu_{c^*}(\gamma) | \geq 2\varepsilon$ and $|\hat{\mu}_{c^*}(\gamma) - \mu_{c^*}(\gamma) | \geq \varepsilon/2$.

By an argument analogous to the error proof as in Lemma 2 in Ref. \cite{fanizza_learning_2024}, the probability of error is bounded by $p_{\text{err}}^{(2)} \leq T(0.97^k+2(k+1)e^{-l |\gamma| \varepsilon^2/12(e^{|\gamma|}-1)})$.

Finally, we take the length of a block as
\begin{align}
l & = O\left(\frac{1}{\varepsilon^2}\max(\frac{(e^{|\gamma|}-1)^2}{\gamma^2}\log (Tkm/\delta),\, (\log m+C_1)^2)\right),
\end{align}
and the sample complexity as 
\begin{align*}
n & = Tk \times O\left(\frac{1}{\varepsilon^2}\max(\frac{(e^{|\gamma|}-1)^2}{\gamma^2}\log (Tkm/\delta),\, (\log m+C_1)^2)\right)\\
& = \frac{1}{\varepsilon^2} \log \frac{1}{\delta} \log^2 \frac{1}{\varepsilon} \times O\left(\max \left( \frac{(e^{|\gamma|}-1)^2}{\gamma^2} \log \left(\frac{m}{\delta} \log \frac{1}{\delta} \log^2 \frac{1}{\varepsilon}\right),\, (\log m+C_1)^2 \right)\right),
\end{align*}
ensuring that the total error probability $p_{\text{err}}^{(1)} + p_{\text{err}}^{(2)} \leq \delta$. Furthermore, when $\gamma\rightarrow 0$, the sample complexity recovers to the complexity of the standard ERM algorithm in Fact \ref{fact:QERM}.
\end{proof}

The algorithm used in \cref{theorem:TERM_proj} is formulated as \cref{alg:learning_projector_valued} in \cref{app:algo}. The guarantees of this algorithm are given in Proposition~\ref{prop:conditions}.

\section{PAC Generalization Bound of TERM}\label{sec:pac_bounds}
Extensive analysis of the generalization error in Ref.~\cite{aminian2024generalizationerrortiltedempirical} motivated the introduction of the new measure, namely that of the tilted generalization error. The tilted generalization error is the difference between the tilted empirical risk on the training set and its expected value over the data distribution $R(f)$  relative to unknown concept $f$. Formally, the tilted generalization error is defined by Ref.~\cite{aminian2024generalizationerrortiltedempirical} as follows:
\be
\text{Tilted Generalization Error} := R(f) - \widetilde{R}_{\gamma}(f).
\ee
The parameter $\gamma$ controls the tilt value. By Lemma~\ref{lemma:tiltconverg}, for $\gamma \to 0$, it recovers the standard generalization error; for $\gamma \to \infty$, it approximates the behavior of the worst-case loss. Considerations of tilted losses in quantum settings, including the definitions provided in this paper, require adopting the tilted generalization error paradigm. These ideas are further motivated in the next section as future work suggestions. The following theorem discussed in this section corresponds to a measure of the generalization error with respect to our definition of QTERM. The convergence of the tilted empirical risk to the true risk is shown as follows. For reference, the PAC bound of ERM for learning concept classes via the covering number is shown in Refs.~\cite{anthony2009neural, wolf2023mathematical}.

\begin{theorem}[PAC Generalization Bound of TERM via Uniform Covering Numbers]\label{theorem:PAC}
Consider hypotheses $h$ in any concept class $\mathcal{F}$ and loss function taking value in $[0,1]$, for $\gamma \in (0,\varepsilon)$ and any probability distribution $D$, we denote the tilted empirical risk as $\widetilde{R}_{\gamma}(h)$ and the corresponding population risk $R(h)$ as $\mathbb{E}_{Z \sim D}[L (h, Z)]$. For any $\varepsilon > 0$, over samples $S$, we obtain that
\be
\Pr_{S \sim D^n}\left[\exists\,h\in \mathcal{F}: \left|R(h)-\widetilde{R}_{\gamma}(h)\right| \ge \varepsilon\right] \le 8\Gamma_{1}\left(2n,\frac{\varepsilon}{8},\mathcal{G}_{\mathcal{F},L,\gamma}\right) \exp\left(-\frac{n\varepsilon^2}{32 |e^\gamma -1|^2}\right),
\ee
where $\Gamma_{1}\left(2n,\frac{\varepsilon}{8},\mathcal{G}_{\mathcal{F},L,\gamma}\right)$
is the covering number of $\mathcal{G}_{\mathcal{F},L,\gamma}$ on a sample of size $2n$ within error $\frac{\varepsilon}{8}$.
\end{theorem}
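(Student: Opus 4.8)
The plan is to deduce the statement from the classical covering-number PAC bound for bounded function classes (of the kind in Refs.~\cite{anthony2009neural, wolf2023mathematical}), applied to the exponentially tilted loss class $\mathcal{G}_{\mathcal{F},L,\gamma}$, and then to transfer that control from the exponential empirical averages back to the tilted risk itself. Write $\widehat{g}_S(h):=\tfrac{1}{n}\sum_{i=1}^n e^{\gamma L(h,z_i)}$ and $\overline{g}(h):=\mathbb{E}_{Z\sim D}[e^{\gamma L(h,Z)}]$, so that $\widetilde{R}_\gamma(h)=\tfrac{1}{\gamma}\log\widehat{g}_S(h)$. Because $L$ is $[0,1]$-valued and $\gamma\in(0,\varepsilon)$, every $g_h\in\mathcal{G}_{\mathcal{F},L,\gamma}$ lies in $[1,e^\gamma]$ and hence has range width $\lvert e^\gamma-1\rvert$; this is the quantity that will appear as the variance proxy in the exponent, and the whole argument will be a $\mathcal{G}_{\mathcal{F},L,\gamma}$-version of the textbook uniform-convergence proof.

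Step 1 (uniform convergence of the exponential averages). I would run the standard symmetrization argument on $\mathcal{G}_{\mathcal{F},L,\gamma}$: (i) introduce a ghost sample $S'\sim D^n$ and bound $\Pr_S[\exists h:\lvert\widehat{g}_S(h)-\overline{g}(h)\rvert\ge\varepsilon]$ by twice $\Pr_{S,S'}[\exists h:\lvert\widehat{g}_S(h)-\widehat{g}_{S'}(h)\rvert\ge\varepsilon/2]$, valid once $n$ exceeds a constant times $\lvert e^\gamma-1\rvert^2/\varepsilon^2$ (Chebyshev on the ghost sample); (ii) condition on the doubled sample $\bar z$ of length $2n$ and pass to a minimal internal $(\varepsilon/8)$-cover of $\mathcal{G}_{\mathcal{F},L,\gamma}$ in the associated empirical $L_1$ pseudometric, of cardinality at most $\Gamma_{1}(2n,\varepsilon/8,\mathcal{G}_{\mathcal{F},L,\gamma})$, absorbing the two covering errors into the resolution; (iii) for each fixed representative, introduce Rademacher signs $\sigma_i$ swapping $z_i\leftrightarrow z_i'$ and apply Hoeffding's inequality to the bounded increments $\sigma_i(g(z_i)-g(z_i'))\in[-\lvert e^\gamma-1\rvert,\lvert e^\gamma-1\rvert]$, giving tail $2\exp(-n\varepsilon^2/(32\lvert e^\gamma-1\rvert^2))$ at resolution $\varepsilon/4$; (iv) union-bound over the cover and collect the two symmetrization constants to obtain the prefactor $8$. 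This yields $\Pr_{S\sim D^n}[\exists h:\lvert\widehat{g}_S(h)-\overline{g}(h)\rvert\ge\varepsilon]\le 8\,\Gamma_{1}(2n,\varepsilon/8,\mathcal{G}_{\mathcal{F},L,\gamma})\exp(-n\varepsilon^2/(32\lvert e^\gamma-1\rvert^2))$, which is exactly the right-hand side of the theorem.

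Step 2 (transfer to the tilted risk). It remains to check that the complement event — $\lvert\widehat{g}_S(h)-\overline{g}(h)\rvert<\varepsilon$ for every $h$ — forces $\lvert\widetilde{R}_\gamma(h)-R(h)\rvert<\varepsilon$ for every $h$. Here I would use two elementary facts, each exploiting $\gamma\in(0,\varepsilon)$: the Jensen/Hoeffding-lemma sandwich $0\le\tfrac{1}{\gamma}\log\overline{g}(h)-R(h)\le\gamma/8<\varepsilon/8$ (the same $\Delta(\gamma)$-estimate used inside the proof of \cref{theorem:TERM_proj}), and the log-Lipschitz bound $\lvert\tfrac{1}{\gamma}\log\widehat{g}_S(h)-\tfrac{1}{\gamma}\log\overline{g}(h)\rvert\le\tfrac{1}{\gamma\min(\widehat{g}_S(h),\overline{g}(h))}\lvert\widehat{g}_S(h)-\overline{g}(h)\rvert$ together with $\widehat{g}_S(h),\overline{g}(h)\ge1$; a triangle inequality then closes the chain $\lvert\widetilde{R}_\gamma(h)-R(h)\rvert\le\lvert\widetilde{R}_\gamma(h)-\tfrac{1}{\gamma}\log\overline{g}(h)\rvert+\lvert\tfrac{1}{\gamma}\log\overline{g}(h)-R(h)\rvert$.

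I expect Step 2 to be the real obstacle. The log-Lipschitz constant near the endpoint $1$ is $1/\gamma$, so the naive estimate only gives $\lvert\widetilde{R}_\gamma(h)-\tfrac{1}{\gamma}\log\overline{g}(h)\rvert\le\varepsilon/\gamma$, which is vacuous when $\gamma\ll\varepsilon$; to make the transfer genuinely quantitative one must either run Step 1 at the rescaled resolution $\Theta(\gamma\varepsilon)$, so that the $1/\gamma$ cancels and every error stays $O(\varepsilon)$ (at the cost of a $\gamma^2$ inside the exponent and a smaller covering radius), or bypass the exponential averages and symmetrize the functional $h\mapsto\widetilde{R}_\gamma(h)$ directly, using that it has the bounded-difference property — changing one sample moves $\widetilde{R}_\gamma$ by at most $\lvert e^\gamma-1\rvert/(\gamma n)$ — so that McDiarmid/Hoeffding on the symmetrized increments still carries $\Gamma_{1}(2n,\cdot,\mathcal{G}_{\mathcal{F},L,\gamma})$ in the prefactor. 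In every route the hypothesis $\gamma\in(0,\varepsilon)$ is precisely what keeps the tilting bias and the conditioning of $\tfrac{1}{\gamma}\log(\cdot)$ under control, and the displayed bound then follows from the $\mathcal{G}_{\mathcal{F},L,\gamma}$-version of the classical argument up to the bookkeeping of the $\gamma$-dependent constants.
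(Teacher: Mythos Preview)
Your overall plan and tools are right, but the paper organizes the symmetrization differently --- essentially taking your second alternative at the end of the proposal. Instead of first establishing uniform convergence of the exponential averages $\widehat{g}_S(h)\to\overline{g}(h)$ and then transferring, the paper symmetrizes the tilted risk itself via a three-indicator chain: from $|R(h)-\widetilde{R}_\gamma(h)|\ge\varepsilon$, together with $|R(h)-\hat{R}'(h)|\le\varepsilon/2$ (ordinary Hoeffding on the ghost sample's \emph{standard} empirical risk) and $|\widetilde{R}'_\gamma(h)-\hat{R}'(h)|\le\varepsilon/4$ (which holds \emph{deterministically} by the $\Delta(\gamma)\le\gamma/8$ estimate you cite, since $\gamma<\varepsilon$), one deduces $|\widetilde{R}'_\gamma(h)-\widetilde{R}_\gamma(h)|\ge\varepsilon/4$. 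This folds your Step~2 ``transfer from $R$ to the tilted functional'' into the symmetrization itself and never introduces $\tfrac{1}{\gamma}\log\overline{g}(h)$. The concentration and lifting steps then run, exactly as in your Step~1, on the exponential averages $\widehat{\Phi}(g_j),\widehat{\Phi}'(g_j)$ over an $\varepsilon/8$-cover of $\mathcal{G}_{\mathcal{F},L,\gamma}$, with Hoeffding range $|e^\gamma-1|$.

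Your diagnosis of the $1/\gamma$ obstruction is accurate and is not fully bypassed by this reorganization either: after the paper's symmetrization one must control $|\widetilde{R}'_\gamma(h)-\widetilde{R}_\gamma(h)|=\tfrac{1}{\gamma}\,|\log\widehat{\Phi}'(h)-\log\widehat{\Phi}(h)|$, whereas the Hoeffding/cover step controls $|\widehat{\Phi}'-\widehat{\Phi}|$ (and hence $|\log\widehat{\Phi}'-\log\widehat{\Phi}|$, using $\widehat{\Phi},\widehat{\Phi}'\ge1$) only at scale $\varepsilon$, not $\gamma\varepsilon$. The paper's written proof does not make this $1/\gamma$ visible at that passage, so the issue you flag is genuine; your proposed remedy of running the concentration at resolution $\Theta(\gamma\varepsilon)$ (paying a $\gamma^2$ in the exponent and a correspondingly finer covering radius) is exactly the bookkeeping needed to make the argument airtight in the small-$\gamma$ regime.
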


\begin{proof}
We remind readers that the tilted empirical risk is defined as in Definition \ref{def:term}:
\be
\widetilde{R}_{\gamma}(h) :=
\frac{1}{\gamma} 
\log \left( \frac{1}{n} \sum_{i=1}^n \exp\left(\gamma\,L\left(h,Z_i\right)\right) \right),
\ee
where $\{Z_i\}_{i=1}^n$ are i.i.d. samples from an unknown distribution $D$, and $L(h,Z)\in[0,a]$ is a bounded loss. We wish to show that, with high probability, $\widetilde{R}_{\gamma}(h)$ is uniformly close (over all hypotheses $h$) to the true risk 
\be
R(h) = \mathbb{E}_{Z\sim D}\left[L(h,Z)\right].
\ee

For simplicity, we denote each data point by $Z_i = (X_i,Y_i)$. The proof will follow in three steps, using the same techniques as the proof of PAC bound of empirical risk, namely symmetrization, concentration, and lifting the bound from the cover to all $h\in \mathcal{F}$ in Ref. \cite{wolf2023mathematical,anthony2009neural}.

(i) Symmetrization.
Let $\mathcal{F}$ be a family of hypotheses $h:\mathcal{X}\to\widehat{\mathcal{Y}}$.  
Let $L:\mathcal{F}\times\mathcal{Z}\to[0,a]$ be a bounded loss function. We have an i.i.d. sample $S = \{Z_1,\dots,Z_n\}$ from $D$. Let $S' = \{Z'_1,\dots,Z'_n\}$ be an independent \textit{ghost sample}\footnote{A \textit{ghost sample} is another sequence of data in all identical to the training data $S$.} of the same size $n$, also drawn randomly from $D$. Define its tilted empirical risk
\be
  \widetilde{R}'_\gamma(h) = 
  \frac{1}{\gamma} \log \left(\frac{1}{n}\sum_{i=1}^n \exp\left(\gamma\,L(h,Z'_i)\right)\right).
\ee
and the corresponding empirical risk as $\hat{R}'(h) = \frac{1}{n} \sum _{i=1}^n L(h,Z'_i)$. By triangle inequality, if $|R(h) - \widetilde{R}_{\gamma}(h)| \ge \varepsilon$, in terms of indicator functions there exists
\be\label{Ineq:PAC1}
\mathbbm{1}\{|R(h) - \widetilde{R}_{\gamma}(h)| \ge \varepsilon\}
\mathbbm{1}\{\left|R(h) - \hat{R}'(h)\right|\le \frac{\varepsilon}{2}\} \mathbbm{1}\{|\widetilde{R}'_\gamma(h) - \hat{R}'(h)|\le\frac{ \varepsilon}{4}\} \le
\mathbbm{1}\{|\widetilde{R}'_\gamma(h) - \widetilde{R}_\gamma(h)|\ge\frac{ \varepsilon}{4}\}.
\ee
Taking the expectation value with respect to $S'$ in Eq. (\ref{Ineq:PAC1}), and using Hoeffding's inequality together with the assumption on $n$ that $n \ge 4c^2\varepsilon^{-2}\ln 2$, it leads to 
\be
E_{S'}\left[\mathbbm{1}\{|R(h) - \hat{R}'(h)| \le \frac{\varepsilon}{2}\}\right] \ge 1-2\exp[-\frac{\varepsilon^2n}{2c^2}]\ge \frac{1}{2}.
\ee
For the last term in Eq. (\ref{Ineq:PAC1}),
\be
E_{S'}\left[\mathbbm{1}\{|\widetilde{R}'_\gamma(h) - \widetilde{R}_\gamma(h)|\ge\frac{ \varepsilon}{4}\}\right] = \Pr_{S'}\left[\exists \,h\in \mathcal{F}: \left|\widetilde{R}'_\gamma(h) - \widetilde{R}_\gamma(h)\right|\ge\frac{ \varepsilon}{4}\right].
\ee
Similarly for the third term, under the assumption on tilting that $\gamma \in (0,\varepsilon)$ in \cref{theorem:TERM_proj}, for $0 < \delta \le \frac{1}{2}$, we use 
\be
\Pr_{S'} \left[\exists \,h\in \mathcal{F}: |\widetilde{R}'_\gamma(h) - \hat{R}'(h)|\le\frac{ \varepsilon}{4}\right] = E_{S'}\left[\mathbbm{1}\{|\widetilde{R}'_\gamma(h) - \hat{R}'(h)|\le\frac{ \varepsilon}{4}\}\right] \ge 1-\delta \ge \frac{1}{2}.
\ee
Insert the bounds above into Eq. (\ref{Ineq:PAC1}), we therefore obtain
\be
  \Pr_{S}\left[\exists\, h\in \mathcal{F}: |R(h) - \widetilde{R}_{\gamma}(h)| \ge \varepsilon\right] \le
  4\Pr_{S,S'}\left[\exists \,h\in \mathcal{F}: |\widetilde{R}'_\gamma(h) - \widetilde{R}_{\gamma}(h)| \ge\frac{\varepsilon}{4}\right].
\ee

(ii) Concentration. Let's merge the samples, 
\be
  T = \left(S,\,S'\right) = \left(Z_1,\dots,Z_n,Z'_1,\dots,Z'_n\right).
\ee
Hence $|T|=2n$. For $\gamma \in (0,\varepsilon)$, let us denote the ``exponential loss‐function classes'':
\be
  \mathcal{G}_{\mathcal{F},L,\gamma} =
  \left\{ g_{h}(z) = \exp\!\left(\gamma L(h,z)\right) \Bigm| h\in \mathcal{F}\right\}.
\ee
Suppose that $L(h,z)\in [0,a]$ with $g_h(z)\in[1,\,e^{\gamma a}]$. By definition, there exists a finite set of functions $\{\,g_{1},\dots,g_{M}\}\subset\mathcal{G}_{\mathcal{F},L,\gamma}$ such that:
For every $h\in\mathcal{F}$, there is some $j\in\{1,\dots,M\}$ with $\max_{z\in T}\,\left|g_h(z) - g_j(z)\right| \le \eta$, for some chosen $\eta>0$. The minimal such $M$ is 
$\Gamma_{1}\left(2n,\eta,\mathcal{G}_{\mathcal{F},L,\gamma}\right)$, the uniform covering number of $\mathcal{G}_{\mathcal{F},L,\gamma}$ on a sample of size $2n$.  

Suppose we fix some $g_j$. Then
\be
  \widehat{\Phi}(g_j) 
   := \frac{1}{n}\sum_{i=1}^n g_j(Z_i),
  \quad
  \widehat{\Phi}'(g_j) 
   := \frac{1}{n}\sum_{i=1}^n g_j(Z'_i).
\ee
Since $g_j(z)\in [1,e^{\gamma a}]$, we define $d(\gamma,a) = |1- e^{\gamma a}|$ and apply Hoeffding’s inequality to obtain
\be
  \Pr\left[
    \left|\widehat{\Phi}'(g_j)-\widehat{\Phi}(g_j)\right|
     \ge t
  \right]
   \le 2 \exp \left(- \frac{n t^2}{2 d^2(\gamma,c)}\right).
\ee
Applying a union bound over $j=1,\dots,M$, we get
\be
  \Pr\left[
    \exists  j: 
    \left|\widehat{\Phi}'(g_j) -\widehat{\Phi}(g_j)\right|\ge t
  \right]\le
  M\cdot 2\exp\left(-\frac{n t^2}{2 d^2(\gamma,c)}\right).
\ee
Hence, choosing $t$ suitably as $\frac{\varepsilon}{4}$, with probability at least 
\be
  1 - 2\Gamma_{1}\left(2n,\eta,\mathcal{G}_{\mathcal{F},L,\gamma}\right)
           \exp\left(- \frac{n\varepsilon^2}{32 d^2(\gamma,c)}\right),
\ee
all the elements $\{g_j\}$ have
$\left|\log(\widehat{\Phi}'(g_j)) - \log(\widehat{\Phi}(g_j))\right|\le \frac{|\widehat{\Phi}'(g_j) -\widehat{\Phi}(g_j)|}{\min(1, e^{\gamma a})} \le  \frac{\varepsilon}{4} $ if $a=1$.

(iii) Lifting the bound from the cover to all $h\in \mathcal{F}$. Now, take any $h\in\mathcal{F}$. By construction, it is $\eta$‐close (in $\ell_\infty$ on sample $T$) to one of the cover elements, say $g_j$. Specifically, 
\be
  \max_{Z_i \in T} \left|\widehat{\Phi}(h) - \widehat{\Phi}(g_j)\right| = \max_{Z_i \in T} 
  \left|
    \frac{1}{n}\sum_{i=1}^n \left(e^{\gamma L(h,Z_i)}-g_j(Z_i)\right)
  \right|\le\eta,
\ee
and similarly for the ghost sample. But we must ensure every $h\in\mathcal{F}$ cannot exceed $\frac{\varepsilon}{2}$, suppose that by decomposition
\be
\left|\log(\widehat{\Phi}'(h)) - \log(\widehat{\Phi}(h))\right|
\le  \left|\log(\widehat{\Phi}'(h)) - \log(\widehat{\Phi}'(g_j))\right| +
\left|\log(\widehat{\Phi}'(g_j)) - \log(\widehat{\Phi}(g_j))\right| + \left|\log(\widehat{\Phi}(g_j)) - \log(\widehat{\Phi}(h))\right|.
\ee
Indeed, $\widehat{\Phi}(h)$ and $\widehat{\Phi}(g_j) $ differ by at most  $\eta $ in average, so their logs also differ by  $\le \eta$ when $c=1$ for chosen $\gamma$, which we choose to be  $\le \varepsilon/8 $. Similarly for $\widehat{\Phi}'(h) $ or $\widehat{\Phi}'(g_j) $. Hence, no such $g_j$ in the middle term can exceed $\frac{\varepsilon}{4}$ with high probability derived in concentration step. 

Finally, putting it all together, we obtain the desired uniform bound:
\be
  \Pr_{S \sim D^n}\left[\exists\,h\in \mathcal{F}: \left|R(h)-\widetilde{R}_{\gamma}(h)\right| \ge \varepsilon\right] \le 8\Gamma_{1}\left(2n,\frac{\varepsilon}{8},\mathcal{G}_{\mathcal{F},L,\gamma}\right) \exp\left(-\frac{n\varepsilon^2}{32 |e^\gamma -1|^2}\right).
\ee
This completes the proof of the PAC‐type (uniform convergence) guarantee for the tilted empirical risk.
\end{proof}

\begin{remark}
When hypotheses are implemented via parametrized quantum circuits with specific data-encoding strategies, one can leverage encoding-dependent generalization bounds in Ref.~\cite{caro2021encodingdependent} to obtain explicit estimates of the covering numbers for the tilted-loss class under that encoding.
\end{remark}
In quantum learning settings, it is common to bound the function-class covering number $\Gamma_{1}(n,\varepsilon,\mathcal{G}_{\mathcal{F},L,\gamma})$ by the operator-class covering number  $\Gamma_{1,q}(n,\varepsilon,\mathcal{C})$. This arises since each real-valued function in  $\mathcal{G}_{\mathcal{F},L,\gamma} $ originates from measuring operators in $\mathcal{C} $, thus yielding
$\Gamma_{1}(n,\varepsilon,\mathcal{G}_{\mathcal{F},L,\gamma}) \le
\Gamma_{1,q}(n,\varepsilon,\mathcal{C})$. Therefore, the function-class covering number appearing in the PAC bound established in \cref{theorem:PAC} can generally be bounded by the operator-class covering number, thus facilitating concrete bounds on the learnability of quantum hypothesis classes via TERM. In the next section, we will define uniform convergence of QTERM for agnostic learnability bounds.

\section{Statistical Learning for Projector-Valued Functions}\label{sec:agnostic_learning}
In this section, we present a lemma which ensures that optimizing the TERM over $\varepsilon$-net of the concept class depending on classical data, yields a solution that closely approximates the true risk minimizer.

\begin{lemma}[Uniform Convergence of QTERM]\label{lemma:UnifCov}
Given $l_0$ samples drawn from the classical-quantum distribution 
\be
\rho = \sum_{x \in \mathcal{X}} D(x)|x\rangle\langle x| \otimes \rho(x),
\ee  
one can construct a finite subset $\mathcal{C}_{\varepsilon/2} \subseteq \mathcal{C}$ with respect to the loss function $L$ whose size obeys $|\mathcal{C}_{\varepsilon/2}| \le \Gamma_{1,q}(l_0,\varepsilon/2,\mathcal{C})$. For each hypothesis $ h \in \mathcal{C} $, there exists a known $ h^* \in \mathcal{C}_{\varepsilon/2} $ such that 
\be
|R(h)-R(h^*)|\leq \varepsilon.
\ee  
Furthermore, with high probability, the tilted empirical risk uniformly approximates the true risk
\be
\forall h \in \mathcal{C}: | R(h)-\widetilde{R}_{\gamma}(h)|<\varepsilon/4,
\ee  
with probability of error at most 
\be
p_{\text{err, unif}} \leq 8\Gamma_{1,q}(2l_0,\varepsilon/128,\mathcal{C})e^{-\frac{l_0\varepsilon^2}{512|e^{\gamma}-1|^2}}.
\ee
\end{lemma}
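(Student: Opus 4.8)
The plan is to prove \Cref{lemma:UnifCov} in two essentially independent parts, both of which are direct specializations of machinery already set up in the excerpt. The first claim is the construction of the finite $\varepsilon/2$-net $\mathcal{C}_{\varepsilon/2}$: by the definition of the operator-class covering number $\Gamma_{1,q}$, drawing $l_0$ classical samples $x_1,\dots,x_{l_0}$ from $D$ fixes a dataset $\bar{x}\in\mathcal{X}^{l_0}$, and on that dataset one selects a minimal internal $\varepsilon/2$-cover of $\mathcal{C}$ under the seminorm $\|\cdot\|_{1,q,\bar x}$. Its cardinality is at most $\Gamma_{1,q}(l_0,\varepsilon/2,\mathcal{C})$ by the $\max$ in the definition. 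To turn closeness in the empirical seminorm into closeness of the true risks $|R(h)-R(h^*)|\le\varepsilon$, I would note that $R(h)=\mathbb{E}_{x\sim D}\mathrm{Tr}[\rho(x)\Pi(x)^{(c)}]$ (up to the $1-(\cdot)$ affine shift, which is irrelevant for the difference) is an average of the same operator-valued quantities that the seminorm controls, so Lipschitzness of the loss in the operator argument together with the defining cover property transfers the bound; this is exactly the argument used for the non-tilted loss in \cite{fanizza_learning_2024}, invoked with the plain loss rather than the tilted one.

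The second claim — the uniform approximation $|R(h)-\widetilde{R}_\gamma(h)|<\varepsilon/4$ for all $h\in\mathcal{C}$ with the stated failure probability — is an application of \Cref{theorem:PAC} with a rescaled accuracy parameter. Setting $\varepsilon'=\varepsilon/4$ in \Cref{theorem:PAC} gives
\be
\Pr_{S\sim D^{l_0}}\!\left[\exists h\in\mathcal{C}:|R(h)-\widetilde{R}_\gamma(h)|\ge \varepsilon/4\right]\le 8\,\Gamma_1\!\left(2l_0,\tfrac{\varepsilon}{32},\mathcal{G}_{\mathcal{C},L,\gamma}\right)\exp\!\left(-\frac{l_0(\varepsilon/4)^2}{32|e^\gamma-1|^2}\right).
\ee
The exponent is $-\,l_0\varepsilon^2/(512\,|e^\gamma-1|^2)$, matching the lemma. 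To get the operator-class covering number into the bound one uses the inequality highlighted in the text after \Cref{theorem:PAC}, namely $\Gamma_1(n,\varepsilon,\mathcal{G}_{\mathcal{F},L,\gamma})\le\Gamma_{1,q}(n,\varepsilon,\mathcal{C})$, since each tilted-loss function arises from measuring an operator in $\mathcal{C}$; this replaces $\Gamma_1(2l_0,\varepsilon/32,\mathcal{G})$ by $\Gamma_{1,q}(2l_0,\varepsilon/32,\mathcal{C})$. The remaining discrepancy is the covering radius: the lemma writes $\varepsilon/128$ while the naive substitution gives $\varepsilon/32$. I would reconcile this either by tracking the covering radius through the symmetrization/lifting steps of \Cref{theorem:PAC} more carefully (the lifting step there uses a cover at radius $\varepsilon/8$ relative to the target accuracy, so the composition with $\varepsilon'=\varepsilon/4$ and the factor-of-two from the Lipschitz slack naturally produces $\varepsilon/128$), or by simply invoking \Cref{theorem:PAC} with the slightly smaller net radius, which only strengthens the cover and hence is admissible.

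The main obstacle I anticipate is bookkeeping rather than conceptual: making the constants in the two parts consistent — i.e. ensuring that the $\varepsilon/2$-net used for the first claim and the $\varepsilon/128$-cover (on a doubled sample $2l_0$) used for the second claim are applied to the same loss and the same seminorm, and that the affine shift $h\mapsto 1-\mu_c(\gamma)$ does not interfere with any Lipschitz constant. I would also need to verify that the regime $\gamma\in(0,\varepsilon)$ required by \Cref{theorem:PAC} is inherited here (it is, since the lemma is meant to feed the agnostic result which already assumes small $\gamma$), and that $l_0$ is large enough for the Hoeffding step inside \Cref{theorem:PAC} to fire (the condition $l_0\ge 4c^2\varepsilon^{-2}\ln 2$ with $c=1$). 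Once these are pinned down, the proof is a two-paragraph assembly: quote the covering-number definition for part one, quote \Cref{theorem:PAC} plus the $\Gamma_1\le\Gamma_{1,q}$ comparison for part two, and combine failure probabilities by a union bound if one wants both conclusions simultaneously.
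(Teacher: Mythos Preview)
Your treatment of the second claim is exactly the paper's: instantiate \Cref{theorem:PAC} with accuracy $\varepsilon/4$ and then replace $\Gamma_1$ by $\Gamma_{1,q}$; the exponent $-l_0\varepsilon^2/(512|e^\gamma-1|^2)$ falls out, and the radius discrepancy ($\varepsilon/32$ vs.\ $\varepsilon/128$) is cosmetic slack that the paper does not track closely either.

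Where you diverge is in the first claim. You treat the two parts as independent and propose to obtain $|R(h)-R(h^*)|\le\varepsilon$ directly from the empirical operator seminorm via Lipschitzness of the loss. But the cover and the seminorm $\|\cdot\|_{1,q,\bar x}$ live on the fixed sample $\bar x=(x_1,\dots,x_{l_0})$, whereas $R$ is a population quantity; closeness in $\|\cdot\|_{1,q,\bar x}$ does not by itself bound $|R(h)-R(h^*)|$ without a concentration step bridging sample and population. The paper closes precisely this gap by \emph{using the second claim to prove the first}: it constructs the $\varepsilon/2$-net with respect to the pseudometric $d_r(g_1,g_2)=|\widetilde{R}_\gamma(g_1)-\widetilde{R}_\gamma(g_2)|$ on the data, and then on the high-probability event of the second claim chains
\[
|R(h)-R(h^*)|\le |R(h)-\widetilde{R}_\gamma(h)|+|\widetilde{R}_\gamma(h)-\widetilde{R}_\gamma(h^*)|+|\widetilde{R}_\gamma(h^*)-R(h^*)|\le 2\cdot\tfrac{\varepsilon}{4}+\tfrac{\varepsilon}{2}=\varepsilon,
\]
with the outer terms supplied by the uniform convergence and the middle one by the net property. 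So the two parts are not independent in the paper's argument; the uniform convergence is what transports the empirical net to a statement about true risks. Your direct route could be completed, but it would require a separate uniform-convergence bound for the untilted empirical risk, which the lemma does not provide and which would duplicate work already done by the second claim.
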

\begin{proof}
Apply the PAC bound of TERM derived in \cref{theorem:PAC} to a sample $S$ of size $l_0$, where each sample is drawn independently from distribution $D$. With probability at least $1-p_{\text{err, unif}}$, we have that for tilted risk $\widetilde{R}_{\gamma}(h)$, 
\be
\Pr_{S\sim D}\left[\exists h \in \mathcal{C}: |R(h)-\widetilde{R}_{\gamma}(h)|\geq \varepsilon/4\right]\leq 8\Gamma_{1,q}(2l_0,\varepsilon/128,\mathcal{C})e^{-\frac{l_0\varepsilon^2}{512|e^{\gamma}-1|^2}}.
\ee  
where $R(h)$ denotes the true risk under the chosen loss function. Construct an $ \varepsilon/2 $-net $ \mathcal{C}_{\varepsilon/2}(\vec{x}) $ with respect to the pseudo-metric $d_r(g_1,g_2) =|\widetilde{R}_\gamma(g_1) - \widetilde{R}_\gamma(g_2) |$ on the data $ \vec{x} $. By definition of net, for every $h \in \mathcal{C}$ there is $h^* \in \mathcal{C}_{\varepsilon/2}(\vec{x}) $ such that  
\be
|\widetilde{R}_\gamma(h)-\widetilde{R}_{\gamma}(h^*)|\leq \varepsilon/2.
\ee
With the assumptions from \cref{theorem:PAC} in hand, for any $h \in \mathcal{C}$, with high probability, we have  
\begin{align}
|R(h)-R(h^*)|
&\leq |R(h)-\widetilde{R}_{\gamma}(h)|+|\widetilde{R}_{\gamma}(h)-\widetilde{R}_{\gamma}(h^*)|+|R(h^*)-\widetilde{R}_{\gamma}(h^*)|  \\\nonumber
&\leq 2\sup_{h \in \mathcal{C}}|R(h)-\widetilde{R}_{\gamma}(h)|+\varepsilon/2  
\\\nonumber
&\leq \varepsilon.
\end{align}
\end{proof}
\begin{remark}
The choice of loss function determines which Schatten-$q$ seminorm (e.g., $q=1$ for trace‐distance losses, $q=\infty$ for operator‐norm losses) is used to define the pseudometric between two concepts.  
\end{remark}
Via this lemma, we decompose the learning task into two stages: using classical data to construct an $\varepsilon$-net over the concept space, and then using quantum data to select an optimal hypothesis from this finite subset. The $\varepsilon$-net ensures coverage of the true risk landscape, while the quantum component enables effective hypothesis selection under the TERM framework. This decomposition provides a practical and theoretically grounded approach to regularized learning from quantum data. We now formalize these in the following theorem on the agnostic learnability of QTERM framework.

\begin{theorem}[Agnostic Learnability of Projector-Valued Functions via QTERM]\label{thm:agnostic_QTERM}
Let $\mathcal{C}$ be a concept class consisting of a set of projectors $h_c:=\{ \Pi_1^{(c)}, \cdots, \Pi_n^{(c)} \}_{c=1}^m$ and let $\varepsilon >0$. Given training samples $ S = \{(x_i, \rho_i )\}_{i=1}^n$ with $x_i \sim D$ and $\rho_i$ produced by an unknown classical-quantum channel, there exists an agnostic learning algorithm $\mathcal{A}$ that, for fixed $\gamma$ in a proven range, outputs a hypothesis $c^* \in \mathcal{C}$ along with an estimate $\hat{\mu}_{c^*}(\gamma)$ of $1 - \widehat{\widetilde{R}}_{\gamma}(c)$ such that
\be
\Pr\left(\left|\hat{\mu}_{c^*}(\gamma) - \inf_{c \in \mathcal{C}} \mu_c(\gamma) \right| \geq 3\varepsilon \ \cup \ \left|\hat{\mu}_{c^*}(\gamma) - \mu_{c^*}(\gamma) \right| \geq 2\varepsilon \right) := p_{\text{err}},
\ee
if the covering numbers satisfy
\be
\lim_{n\to \infty} \frac{\log^2 \Gamma_{1,\infty}(n, \varepsilon, \mathcal{C})}{n} = 0, \quad \forall \varepsilon >0.
\ee
And $p_{\text{err}}$ can be made arbitrarily small provided that $n = 6Tkl$, and $T=O(\log\frac{1}{\varepsilon})$ and $k=O(\log\frac{1}{\delta}\log\frac{1}{\varepsilon})$, for constants $C_1,\,C_2,\,C_3$ and $l$ such that
\be
(\log \Gamma_{1,\infty} (6Tkl,\varepsilon/2,\mathcal{C})) +C_2)^2 \le C_1 l\varepsilon^2,
\ee
we have
\be
p_{\text{err}} \le \frac{\delta}{2} + C_3 Tk \Gamma_{1,\infty} (6Tkl,\varepsilon/2,\mathcal{C}) e^{-\frac{l |\gamma| \varepsilon^2}{12(e^{|\gamma|}-1)}}) + 8\Gamma_{1,\infty} (12Tkl,\varepsilon/128,\mathcal{C})e^{-\frac{6Tkl\varepsilon^2}{512|e^\gamma -1|}}.
\ee
\end{theorem}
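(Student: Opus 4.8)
The plan is to combine the three ingredients already established: the sample-complexity guarantee for QTERM over a \emph{finite} hypothesis set (Theorem~\ref{theorem:TERM_proj} / Proposition~\ref{prop:conditions}), the uniform-convergence bound relating tilted empirical risk to the true risk (Lemma~\ref{lemma:UnifCov}), and the covering-number growth condition, which is exactly the hypothesis needed to make both of those bounds nontrivial as $n\to\infty$. Concretely, I would proceed in two stages mirroring the decomposition described after Lemma~\ref{lemma:UnifCov}. First, use the classical samples $x_i\sim D$ to build an $\varepsilon/2$-net $\mathcal{C}_{\varepsilon/2}$ of $\mathcal{C}$ with respect to the appropriate Schatten-$\infty$ pseudometric; by the covering-number definition, $|\mathcal{C}_{\varepsilon/2}|\le \Gamma_{1,\infty}(6Tkl,\varepsilon/2,\mathcal{C})$. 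The growth hypothesis $\lim_{n\to\infty}\log^2\Gamma_{1,\infty}(n,\varepsilon,\mathcal{C})/n=0$ guarantees that for $n=6Tkl$ large enough the condition $(\log\Gamma_{1,\infty}(6Tkl,\varepsilon/2,\mathcal{C})+C_2)^2\le C_1 l\varepsilon^2$ holds, which is precisely condition~(i) of Proposition~\ref{prop:conditions} with $m$ replaced by the net size. Second, run the QTERM threshold-search algorithm of Theorem~\ref{theorem:TERM_proj} on the quantum data $\rho_i$, but over the finite set $\mathcal{C}_{\varepsilon/2}$ rather than all of $\mathcal{C}$; this outputs an index $c^*$ and estimate $\hat\mu_{c^*}(\gamma)$ such that, with the stated failure probability, $\hat\mu_{c^*}(\gamma)$ is within $\varepsilon$ of $\max_{c\in\mathcal{C}_{\varepsilon/2}}\mu_c(\gamma)$ and within $\varepsilon$ of $\mu_{c^*}(\gamma)$.

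The error analysis then proceeds by a triangle-inequality budget. Writing $\mu^\star:=\inf_{c\in\mathcal{C}}\mu_c(\gamma)$ (equivalently, working with $1-\widetilde R_\gamma$), I would bound $|\hat\mu_{c^*}(\gamma)-\mu^\star|$ by: (a) the net-approximation error $\varepsilon$ coming from Lemma~\ref{lemma:UnifCov}'s guarantee $|R(h)-R(h^*)|\le\varepsilon$, which transfers the infimum over $\mathcal{C}$ to an infimum over $\mathcal{C}_{\varepsilon/2}$ up to $\varepsilon$ (with the additional passage between the true risk $R$ and the population quantity $\mu_c(\gamma)$ handled by the uniform-convergence estimate $|R(h)-\widetilde R_\gamma(h)|<\varepsilon/4$); plus (b) the within-net optimization error $\varepsilon$ from the QTERM algorithm; plus possibly (c) a further $\varepsilon/4$-type slack from the tilted-versus-true-risk gap. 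Summing these gives the $3\varepsilon$ in the first event and $2\varepsilon$ in the second. For the probability bound, I would union-bound three failure events: the classical-net/uniform-convergence failure (contributing the $8\Gamma_{1,\infty}(12Tkl,\varepsilon/128,\mathcal{C})e^{-6Tkl\varepsilon^2/(512|e^\gamma-1|)}$ term, taken verbatim from Lemma~\ref{lemma:UnifCov} with $l_0=6Tkl$), the threshold-search false-positive/false-negative failure over the net (contributing the $C_3 Tk\,\Gamma_{1,\infty}(6Tkl,\varepsilon/2,\mathcal{C})e^{-l|\gamma|\varepsilon^2/(12(e^{|\gamma|}-1))}$ term, inherited from the $p_{\text{err}}^{(2)}$ analysis in the proof of Theorem~\ref{theorem:TERM_proj} with $m\leftarrow|\mathcal{C}_{\varepsilon/2}|$), and a residual $\delta/2$ absorbing the remaining lower-order failure probabilities (e.g.\ the Hoeffding-without-replacement step $p_{\text{err}}^{(1)}$); the total is then $\le\delta/2+C_3Tk\Gamma_{1,\infty}(\cdots)e^{-\cdots}+8\Gamma_{1,\infty}(\cdots)e^{-\cdots}$, and the growth condition forces this to $0$.

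I expect the main obstacle to be the interface between the \emph{classical} net, which is built with respect to a population pseudometric on $\mathcal{C}$ (so that $h^*\in\mathcal{C}_{\varepsilon/2}$ satisfies $|R(h)-R(h^*)|\le\varepsilon$ for the unknown distribution $D$), and the \emph{quantum} optimization, which only ever sees the empirical tilted quantities $\mu_c(\gamma)$ computed on the actually-drawn single-copy states $\rho_i$. Ensuring that ``near-optimal on the net in population risk'' really does imply ``near-optimal in the empirical tilted quantity the algorithm minimizes'' requires invoking uniform convergence of $\widetilde R_\gamma$ to $R$ simultaneously over all of $\mathcal{C}$ (not just the net), and it is here that the $\Gamma_{1,\infty}(2l_0,\varepsilon/128,\mathcal{C})$ covering number and the restriction $\gamma\in(0,\varepsilon)$ enter; one has to be careful that the net construction and the uniform-convergence event are defined with respect to compatible pseudometrics and sample sizes, and that the constants and $\varepsilon$-fractions chosen throughout ($\varepsilon/2$, $\varepsilon/4$, $\varepsilon/8$, $\varepsilon/128$) compose correctly into the claimed $3\varepsilon$ and $2\varepsilon$ thresholds. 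A secondary technical point is verifying that the block structure $n=6Tkl$ with $T=O(\log\frac1\varepsilon)$, $k=O(\log\frac1\delta\log\frac1\varepsilon)$ is consistent with both the threshold-search requirement (condition~(i) of Proposition~\ref{prop:conditions}) \emph{and} the exponent $l_0=6Tkl$ in the uniform-convergence term, so that a single choice of $l$ satisfying $(\log\Gamma_{1,\infty}(6Tkl,\varepsilon/2,\mathcal{C})+C_2)^2\le C_1l\varepsilon^2$ suffices; the covering-number growth assumption is exactly what makes such an $l$ exist.
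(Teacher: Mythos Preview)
Your proposal is correct and follows essentially the same approach as the paper: the paper's proof also decomposes $p_{\text{err}}\le p_{\text{err,unif}}+p_{\text{err,TERM}}$, constructs an $\varepsilon/2$-net $\mathcal{C}_{\varepsilon/2}(\vec{x})$ from the classical data and applies Theorem~\ref{theorem:TERM_proj} to it with $m=\Gamma_{1,\infty}(6Tkl,\varepsilon/2,\mathcal{C})$, bounds $p_{\text{err,unif}}$ via Lemma~\ref{lemma:UnifCov} with $l_0=6Tkl$, and then chains a triangle-inequality budget $\mu_{c^*}(\gamma)\le\hat\mu_{c^*}(\gamma)+\varepsilon/4\le\inf_{c\in\mathcal{C}_{\varepsilon/2}}\hat\mu_c(\gamma)+2\varepsilon+\varepsilon/4\le\inf_{c\in\mathcal{C}}\hat\mu_c(\gamma)+3\varepsilon$ just as you outline. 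The only minor calibration difference is that the paper invokes the within-net QTERM guarantee at precision $2\varepsilon$ (rather than $\varepsilon$) and absorbs the $T\cdot 0.97^k$ term into the $\delta/2$ piece, but your identification of where each of the three displayed error terms originates is exactly right.
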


\begin{proof}
The failure probability $p_{\text{err}}$ is decomposed into two independent contributions, each bounded by $\delta/2$ such that
\be
p_{\text{err}}\le p_{\text{err,unif}} + p_{\text{err,TERM}}\le \delta.
\ee
The proof proceeds in two main steps. First, we bound the uniform convergence of the tilted empirical risk, and then we show that the TERM algorithm identifies a near-optimal hypothesis over a suitably chosen $\varepsilon/4$-net of $\mathcal{C}$.

(i) Uniform Convergence. Noted that the tilted empirical risk for projector-valued functions
\be
\widetilde{R}_{\gamma}(h) := 1-\mu_c(\gamma) = 1- \frac{1}{\gamma} \log \left( \frac{1}{n} \sum_{i=1}^n e^{\gamma \mathrm{Tr}[\rho_i \Pi_i^{(c)}]} \right).
\ee
Under some assumptions (see \cref{theorem:PAC}), it yield that the probability of error is bounded by \cref{lemma:UnifCov},
\be
p_{\text{err, unif}}(6Tkl) \leq 8\Gamma_{1,\infty} (12Tkl, \varepsilon/128, \mathcal{C})e^{- \frac{6Tkl\varepsilon^2 }{512|e^\gamma -1|^2}}
\ee  

(ii) TERM algorithm on $\varepsilon/2$-net of $\mathcal{C}$. Using the classical samples $\vec{x}$ to construct an $\varepsilon/2$-net $\mathcal{C}_{\varepsilon/2}(\vec{x})$ for $\mathcal{C}$. Apply \cref{theorem:TERM_proj} (ThresholdSearch + Check) to the net of size $m = \Gamma_{1,\infty} (6Tkl, \varepsilon/2, \mathcal{C})$, and output a $c^*$ satisfying 
$$
|\hat{\mu}_{c^*}(\gamma) - \max_{c \in \mathcal{C}(\Vec{x})} \mu_{c}(\gamma)| \leq 2\varepsilon,\;\text{and}\;
|\hat{\mu}_{c^*}(\gamma) - \mu_{c^*}(\gamma)| \leq 2\varepsilon$$
such that
\be
\hat{\mu}_{c^*}(\gamma) < \max_{c \in \mathcal{C}(\vec{x})} \mu_c(\gamma) + 2\varepsilon
\ee
The probability of error of the TERM is bounded as in the proof of \cref{theorem:TERM_proj}, 
\be
p_{\text{err, TERM}} \leq T(0.97^k+2(k+1)e^{-\frac{l |\gamma| \varepsilon^2}{12(e^{|\gamma|}-1)}}) + 4Tk \Gamma_{1,\infty} (6Tkl, \varepsilon/2, \mathcal{C}) e^{-\frac{l \gamma^2 \varepsilon^2}{32(e^{|\gamma|} -1)^2}},
\ee
by choosing $T=O(\log\frac{1}{\varepsilon})$, $k=O(\log\frac{1}{\delta}\log\frac{1}{\varepsilon})$ and $l$ so that $(\log \Gamma_{1,\infty} (6Tkl,\varepsilon/2,\mathcal{C})) +C_2)^2 \le C_1 l\varepsilon^2$. Thus, for the selected concept $c^* \in \mathcal{C}_{\varepsilon/2}(\vec{x})$, we have
\begin{align}
\mu_{c^*}(\gamma) &\leq \hat{\mu}_{c^*}(\gamma) + \frac{\varepsilon}{4}
\leq \inf_{c \in \mathcal{C}_{\varepsilon/2}(\Vec{x})} \hat{\mu}_{c}(\gamma) + 2\varepsilon + \frac{\varepsilon}{4}\leq \inf_{c \in \mathcal{C}} \hat{\mu}_{c}(\gamma) + 2\varepsilon + \frac{3\varepsilon}{4} \leq \inf_{c \in \mathcal{C}} \hat{\mu}_{c}(\gamma) + 3\varepsilon
\end{align}
Hence, the probability of error is then upper bounded by both steps, $p_{\text{err, unif}} + p_{\text{err, TERM}}$ as the desired result. 
\end{proof}
In essence, the subexponential growth of $\Gamma_{1,\infty} (n, \varepsilon, \mathcal{C})$ guarantees that the TERM algorithm identifies a projector‐valued hypothesis $c^*$ whose \emph{tilted empirical risk} is within $3\varepsilon$ of the best possible among all hypotheses in $\mathcal{C}$. Thus, we achieve agnostic learnability of such projector‐valued functions under the stated covering number condition.

\section{Quantum Tilted Risk (QTR)}\label{sec:qtr}
Tilted risk measures extend traditional risk evaluation by incorporating a parameter to control the influence of extreme outcomes. In the quantum setting, this concept adapts to the structure of quantum states and observables. Inspired by classical notions such as the Esscher transform \cite{esscher1932probability, gerber1994option} and tilted risk measures, we define the Quantum Tilted Risk (QTR), which generalizes these concepts to quantum settings (we refer the reader to Ref.~\cite{qiu_quantum_2024} for quantum notions on the Esscher transform). The formulation involves a parameterized Hamiltonian $H(\theta)$ with an input quantum state $\rho$, allowing flexible adjustments via the hyperparameter $\gamma$. The example of Hamiltonian learning \cite{bakshi2023learningquantumhamiltonianstemperature, Anshu_2021, gu2024practical, Yu_2023}, motivates defining the measure of QTR.

\begin{definition}[Quantum Tilted Risk (QTR)]
For all $\gamma \in \mathbb{R}$, and all $\phi \in \Phi$, given an input quantum state $\rho$ and an Hamiltonian $H(\phi)$ parameterized by $\phi$, the $\gamma$-tilted QTERM is defined as:
\be
\widetilde{R}_Q(\gamma,\phi) := \frac{1}{\gamma}\log \tr \left( e^{\gamma H(\phi)} \rho \right).
\ee
\end{definition}

In a learning setting, if $\rho$ represents data encoded in a quantum state and $H(\phi)$ represents a hypothesis or model parameter $\phi$, then $\tr \left( e^{\gamma H(\phi)} \rho \right)$ measures how well the model $H(\phi)$ aligns with the data distribution $\rho$ in an exponential tilted scenario. The parameter $\gamma$ could be tuned to focus on tail events or rare configurations emphasized by $H(\phi)$.

\subsection{Properties of QTR}\label{sec:QTR_prop}
\begin{lemma}[Convergence of QTR]\label{Lemma:convergence}The convergence of QTR to the expectation value of Hamiltonian is
$$\lim _{\gamma \rightarrow 0} \widetilde{R}_Q(\gamma , \phi) := \widetilde{R}_Q(0 , \phi) = \tr \left( H(\phi) \rho \right).$$

\begin{proof}
\be
\lim _{\gamma \rightarrow 0} \widetilde{R}(\gamma , \phi) & &=\lim _{\gamma \rightarrow 0} \frac{1}{\gamma}\log \tr \left( e^{\gamma H(\phi)} \rho\right) \\\nonumber
(\text{L’Hôpital’s rule applied to $\gamma$}) && =\lim _{\gamma \rightarrow 0} \frac{ \tr \left( H(\phi) e^{\gamma H(\phi)} \rho\right)}{\tr \left( e^{\gamma H(\phi)} \rho\right)} \\\nonumber
&& =\tr \left( H(\phi) \rho \right).
\ee
\end{proof}
\end{lemma}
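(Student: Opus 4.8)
The plan is to treat the limit as a $0/0$ indeterminate form and resolve it by L'H\^opital's rule in the variable $\gamma$. First I would note that since $\rho$ is a density operator we have $\tr\rho = 1$, so as $\gamma \to 0$ the numerator $\log\tr\left(e^{\gamma H(\phi)}\rho\right)$ converges to $\log\tr\rho = \log 1 = 0$, while the denominator $\gamma$ also converges to $0$. Thus the quotient $\frac{1}{\gamma}\log\tr\left(e^{\gamma H(\phi)}\rho\right)$ is of the form $0/0$ at $\gamma = 0$ and L'H\^opital's rule is applicable, provided the functions involved are differentiable near $\gamma=0$.

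Next I would differentiate numerator and denominator with respect to $\gamma$. Since $H(\phi)$ commutes with itself, $\frac{d}{d\gamma}e^{\gamma H(\phi)} = H(\phi)\,e^{\gamma H(\phi)}$, and by linearity and cyclicity of the trace $\frac{d}{d\gamma}\log\tr\left(e^{\gamma H(\phi)}\rho\right) = \frac{\tr\left(H(\phi)\,e^{\gamma H(\phi)}\rho\right)}{\tr\left(e^{\gamma H(\phi)}\rho\right)}$, while $\frac{d}{d\gamma}\gamma = 1$. Taking $\gamma \to 0$ in the resulting ratio and using $e^{0}=\mathbf{1}$ together with $\tr\rho = 1$ gives $\tr\left(H(\phi)\rho\right)$, which is the claimed value $\widetilde{R}_Q(0,\phi)$.

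An equivalent route I would keep as a cross-check is a direct power-series expansion: $\tr\left(e^{\gamma H(\phi)}\rho\right) = 1 + \gamma\,\tr\left(H(\phi)\rho\right) + O(\gamma^2)$, so $\log\tr\left(e^{\gamma H(\phi)}\rho\right) = \gamma\,\tr\left(H(\phi)\rho\right) + O(\gamma^2)$, and dividing by $\gamma$ before letting $\gamma \to 0$ recovers the same limit. This variant avoids L'H\^opital altogether and has the side benefit of exhibiting the first-order rate of convergence explicitly.

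The only steps requiring genuine care — and the closest thing to an obstacle in an otherwise routine argument — are (i) checking that $\gamma \mapsto \tr\left(e^{\gamma H(\phi)}\rho\right)$ is smooth and strictly positive in a neighbourhood of $\gamma = 0$, so that the logarithm, the division by $\gamma$, and L'H\^opital's rule are all licit; this holds because $H(\phi)$ is Hermitian and $\rho$ is positive semidefinite with unit trace, whence $\tr\left(e^{\gamma H(\phi)}\rho\right) > 0$ for every real $\gamma$; and (ii) confirming the matrix-exponential derivative identity $\frac{d}{d\gamma}e^{\gamma H(\phi)} = H(\phi)e^{\gamma H(\phi)}$, which is immediate here since $e^{\gamma H(\phi)}$ is a function of the single operator $H(\phi)$. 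With these two observations in place the computation goes through verbatim.
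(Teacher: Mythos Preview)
Your proposal is correct and follows essentially the same approach as the paper: both reduce the $0/0$ limit via L'H\^opital's rule in $\gamma$, differentiate the log-trace to obtain $\tr\left(H(\phi)e^{\gamma H(\phi)}\rho\right)/\tr\left(e^{\gamma H(\phi)}\rho\right)$, and evaluate at $\gamma=0$. Your version simply adds the justifications (the $0/0$ form, positivity of the trace, the derivative identity for $e^{\gamma H(\phi)}$) that the paper's terse proof leaves implicit, along with a power-series cross-check that the paper does not include.
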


\textbf{Relation to quantum Rényi relative entropy.} 
The quantum Rényi relative entropy of order $\alpha>0 \neq 1$ is defined as
$$
D_\alpha(\rho \|\sigma) := \frac{1}{\alpha-1} \log [\tr (\rho^\alpha \sigma^{1-\alpha})].
$$
As $\alpha \rightarrow 1$, it recovers to standard quantum relative entropy. Let $\sigma(\phi)$ be a quantum state parameterized by $\phi$. Define $H(\phi)=\log \sigma(\phi)$ such that
$$
\tr \left( e^{\gamma \log \sigma(\phi)} \rho \right) = \tr \left( \sigma(\phi)^{\gamma} \rho \right).
$$
$\widetilde{R}_Q(\gamma,\phi)$ can be seen as a \textit{tilted} relative Rényi entropy to measure how $\rho$ and $\sigma(\phi)$ overlap. 

\textbf{Operational Interpretation.} When $H(\phi)$ is viewed as a Hamiltonian (i.e., an energy operator) and $\rho$ as a quantum state (i.e., a density matrix at fixed temperature), then
$$
Z_\phi(\gamma):= \tr \left( e^{\gamma H(\phi)} \rho \right).
$$
resembles a ``partition function'' for the state $\rho$ under the energy landscape defined by $H(\phi)$. The tilted quantity 
$$
\widetilde{R}_Q(\gamma,\phi) := \frac{1}{\gamma}\log \tr \left( e^{\gamma H(\phi)} \rho \right) = \frac{1}{\gamma}\log Z_\phi(\gamma),
$$ 
resembles a quantity of free energy ($F=-k_B T \ln Z$) at the inverse temperature. 

\subsection{On Different Methods of Formulating Tilts}
The idea behind implementations of the tilted hyperparameter warrants physical differences among different formulations. Here we discussed one main definition formalized as QTERM relevant in learning quantum processes. We also briefly motivated the need for tilted measures in other setting such as Hamiltonian learning in a measures denoted as QTR. Below we compare the scenarios where either may be relevant.

\begin{itemize}
    \item \textbf{QTERM.} Specifically tailored for empirical risk minimization in a learning setting. It evaluates the performance of quantum hypotheses on given data, with the tilting parameter $\gamma$ allowing emphasis on particular outcomes (i.e., high-loss scenarios). It is defined over discrete samples and focuses on empirical data.
    \item \textbf{QTR.} A generalized risk measure that quantifies the alignment of a parameterized quantum model (Hamiltonian $H(\phi)$) with a quantum state $\rho$ and any state learning problem. It is more theoretical and continuous, focusing on the expected performance in a weighted (tilted) sense.
\end{itemize}

One of the main differences is found in the data type they are designed to process. In QTERM quantum data is represented as a set of quantum projectors ${\Pi_i^{(c)}}$ derived from empirical observations. While in QTR, quantum data is encoded in a density operator $\rho$, representing a general quantum state. Furthermore, while QTERM focuses on finite-sample empirical data in a learning framework, QTR is concerned with global system-level evaluations involving quantum states and observables. The distinct definitions reflect the different goals: QTERM emphasizes empirical data alignment and hypothesis testing, whereas QTR captures broader properties of quantum systems with a tilt to explore rare or significant states. One can expect other definitions to arise as the scope of quantum machine learning broadens, particularly to address scenarios such as multi-objective optimization in quantum systems, settings where data distributions exhibit extreme variations, or where hybrid quantum-classical models require tailored risk measures. For instance, tilts might be redefined to handle non-Hermitian operators, time-evolved quantum states, or scenarios emphasizing specific quantum observables, such as entanglement or coherence. We leave the above as future work suggestions.

\section{Concluding Thoughts and Open Directions}\label{sec:conclusion}
This work introduces QTERM, a quantum extension of tilted empirical risk minimization that applies a tilt hyperparameter to quantum empirical risk. It unifies and extends classical TERM and QERM, offering a new approach tailored to the challenges of quantum data and learning. Inspired by the classical setting, QTERM introduces a tilted hyperparameter that penalizes outliers or class imbalances in datasets. A central idea behind this tilt mechanism is to help prevent the model from fitting excessively to noisy or anomalous data points, which could lead to overfitting. Further, we have developed theoretical bounds on sample complexity and learning guarantees in the context of quantum learning with QTERM. Additionally, we have proved the PAC-type generalization bound for the classical TERM. These two findings were used to prove an upper bound on the agnostic learnability of QTERM. Our results on quantum process learning provide an understanding of how tilted losses influence the scaling of quantum learning tasks with respect to the number of quantum samples. This work contributes to the broader field of quantum learning by offering a theoretical understanding for implicitly regularized quantum models, making quantum learning more practical and scalable.

A natural extension of this work would be to expand theoretical proofs to cover larger ranges of the tilt $\gamma$. While our current analysis focuses on the small-$\gamma$ regime to ensure the provable sample complexity bound and learnability, understanding the theoretical behavior of TERM for large $\gamma$ remains an open challenge. Such an extension could reveal new insights into the robustness of the method under stronger regularization and its applicability to learning scenarios with highly irregular or noisy quantum data.

From the main perspective, QTERM is an advanced quantum loss function that can be practically relevant to quantum learning models. While the study of loss functions and quantum regularization has already made efforts in understanding certain types of regularizers, i.e., in establishing lower bounds with respect to Lasso and Ridge regression \cite{chen2022quantumalgorithmslowerbounds}, and quantum regularized least squares \cite{Zhu2024connectionbetween, Chakraborty_2023} --- there remains ample opportunity to further study QTERM, as well as other ways of regularizing. In particular, future research could focus on comparing and establishing properties of QTERM with respect to other types of regularizers such as $L_2$ regularization and $L_1$ regularization, and adding regularizers such as entropy to QTERM as well \cite{aminian2025generalization}, for various quantum learning settings, including quantum circuit learning, quantum neural networks, and quantum reinforcement learning, to determine how QTERM influences model performance, convergence, and generalization in, for example, noisy quantum environments. Additionally, as a future work direction, we suggest further generalization bounds which can be derived for QTERM, as a well as other properties such as robustness measures and convergence rates, similar to the methodologies in Ref.~\cite{aminian2024generalizationerrortiltedempirical}. In terms of concrete learning model examples, QTERM could be extended to quantum generative models, such as quantum GANs or variational autoencoders, and quantum reinforcement learning.

In classical optimization, the weighted loss corresponds to the value of the tilt applied to the model's parameters. This approach is useful in scenarios where different parts of the model need varying levels of emphasis. To address this, the classical gradient calculation can be adjusted to incorporate the tilted hyperparameter, allowing the gradient to reflect the relative importance of each parameter. In quantum optimization, gradients are similarly required for quantum circuits. The definition of QTERM proposed here can be incorporated into existing quantum training algorithms, such as the parameter-shift rule, which is a method for evaluating the gradients of quantum observables with respect to the parameters of quantum circuits \cite{mitarai2018quantum, Schuld_2019, wierichs2022general}. In training quantum machine learning models, the parameter-shift rule addresses the challenges of applying the backpropagation algorithm \cite{Rumelhart1986LearningRB} to these quantum models \cite{abbas2023quantumbackpropagationinformationreuse}.
In context with this work, presence of the tilting hyperparameter motivates us to define tilted parameter shift rule, which adapts the parameter shift rule to handle weighted losses, offering a more tailored approach to optimization in quantum algorithms.

With respect to other characterizations and generalization behavior for TERM applied to quantum data, several avenues are worth exploring. For instance, margin bounds could be relevant here, as recent work shows that margin-based approaches can provide insightful guarantees about generalization \cite{hur2024understandinggeneralizationquantummachine, neyshabur2018pacbayesianapproachspectrallynormalizedmargin, hanneke2020stablesamplecompressionschemes}. In the case of TERM, the value of the tilted hyperparameter influences the decision boundary, which can be a motivating point to show how larger margins reduce the bound on generalization error. This behavior suggests that the tilted hyperparameter may implicitly control the effective margin of the classifier. Larger margins are often associated with tighter generalization bounds, providing a compelling motivation to investigate how varying the tilted hyperparameter could directly influence the margin properties and, consequently, reduce the bound on generalization error. Similarly, stability-based bounds for generalization error can be a compelling framework \cite{bousquet2002stability}. Stability measures the sensitivity of the model to changes in the training set, which directly impacts its generalization ability.

In summary, in this work we define quantum tilted empirical risk, and in turn extend the classical principle of TERM to quantum learning systems, positioning it as a generalization technique. The QTERM framework we propose here, derived from Ref.~\cite{fanizza_learning_2024} incorporates a tilted hyperparameter $\gamma$ into the loss function. Through this framework, we derived three central theoretical results: a sample complexity guarantee for learning projector-valued functions under QTERM, a PAC-style generalization bound for tilted loss minimization in the small-$\gamma$ regime, and an agnostic learning guarantee that ensures reliable hypothesis selection even in the absence of a perfect model. Together, these results establish a comprehensive foundation for understanding the statistical and algorithmic properties of learning in quantum systems using tilted losses. We believe that QTERM provides a promising direction for future work in quantum learning, especially in exploring adaptive strategies for tuning $\gamma$ and extending the analysis to broader classes of quantum models.

At a broader level, this work highlights the continuing dialogue between classical and quantum perspectives on learning. Classical learning theory has provided a foundation for understanding generalization, regularization, and the computational limits of artificial systems. Extending these ideas into the quantum domain through QTERM reveals how quantum principles reshape long-standing questions of efficiency, robustness, and adaptability in learning. In this sense, QTERM forms part of a larger effort to develop a coherent theory of quantum learning that is both continuous with and distinct from its classical counterpart. In doing so, the study of quantum learning moves beyond algorithms to touch on deeper questions about the nature of information, computation, and intelligence itself.

\vspace{0.4cm}
\section{Acknowledgments}
We thank Matthias C. Caro, Vedran Dunjko and Jens Eisert for discussions and feedback.
This work is supported by the National Research Foundation, Singapore, and A*STAR under its CQT Bridging Grant and its Quantum Engineering Programme under grant NRF2021-QEP2-02-P05.
Portions of this manuscript were drafted or edited with the assistance of ChatGPT to improve clarity and style.

\section{Appendix}
\subsection{Lemmas and Propositions}\label{app:proofs}
\begin{lemma}[Multiplicative Chernoff bound]\label{lemma:MCbound}
Let $X_1, \cdots, X_n$ be independent random variables taking values in $[0, 1] $. Define $X:=\sum_{i=1}^n X_i$ and let $E[X]$ be the expected value of $X$. For any $0 \leq \varepsilon \leq 1$,
\begin{align}
\Pr\left( |X- E[X] | \geq n\varepsilon \right) \leq 2e^{-E[X]\varepsilon^2/3}.
\end{align}
\end{lemma}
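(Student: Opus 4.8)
The plan is to prove this by the classical exponential-moment (Chernoff) argument, and then to reconcile the stated deviation $n\varepsilon$ with the relative deviation $\varepsilon\,E[X]$ that is the natural scale for a multiplicative bound. Throughout write $\mu := E[X] = \sum_{i=1}^n E[X_i]$; since every $X_i$ takes values in $[0,1]$ we have $E[X_i]\le 1$ and hence $\mu\le n$. (If $\mu=0$ then $X\equiv 0$ almost surely and both sides of the claim are trivial, so we may assume $\mu>0$.)

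First I would establish the moment generating function estimate: for $x\in[0,1]$ and any $\lambda$, convexity of $t\mapsto e^{\lambda t}$ gives the chord bound $e^{\lambda x}\le 1+(e^{\lambda}-1)x$, hence $E[e^{\lambda X_i}]\le 1+(e^{\lambda}-1)E[X_i]\le \exp\big((e^{\lambda}-1)E[X_i]\big)$, and by independence $E[e^{\lambda X}]\le \exp\big((e^{\lambda}-1)\mu\big)$ (and similarly with $-\lambda$). Applying Markov's inequality to $e^{\lambda X}$ with $\lambda=\ln(1+\delta)>0$ yields the upper tail $\Pr[X\ge(1+\delta)\mu]\le \exp\big(\delta\mu-(1+\delta)\mu\ln(1+\delta)\big)\le e^{-\delta^2\mu/3}$ for $\delta\in[0,1]$, using the elementary inequality $\delta-(1+\delta)\ln(1+\delta)\le-\delta^2/3$ on $[0,1]$; applying it to $e^{-\lambda X}$ with $\lambda=-\ln(1-\delta)$ yields the lower tail $\Pr[X\le(1-\delta)\mu]\le e^{-\delta^2\mu/2}\le e^{-\delta^2\mu/3}$, using $-\delta-(1-\delta)\ln(1-\delta)\le-\delta^2/2$. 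A union bound over the two tails then gives the standard two-sided multiplicative Chernoff bound $\Pr[|X-\mu|\ge\delta\mu]\le 2e^{-\delta^2\mu/3}$ for every $\delta\in[0,1]$.

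Finally I would pass from this relative form to the stated one: because $\mu\le n$ and $0\le\varepsilon\le1$, we have $n\varepsilon\ge\varepsilon\mu$, so the event $\{|X-\mu|\ge n\varepsilon\}$ is contained in $\{|X-\mu|\ge\varepsilon\mu\}$; invoking the previous bound with $\delta=\varepsilon$ gives $\Pr[|X-\mu|\ge n\varepsilon]\le 2e^{-\varepsilon^2\mu/3}=2e^{-E[X]\varepsilon^2/3}$, which is the claim. The only mildly technical ingredients are the two one-variable inequalities in the middle step, each verified by checking the value and derivative at $0$ together with the sign of a second derivative; these are routine. The one conceptual point worth flagging is the last reduction: one should not attempt to set $\delta=n\varepsilon/\mu$ directly, since that ratio may exceed $1$ and leave the regime where the clean $e^{-\delta^2\mu/3}$ form is valid — comparing $n\varepsilon$ with $\varepsilon\mu$ instead makes the argument go through uniformly for $\varepsilon\in[0,1]$.
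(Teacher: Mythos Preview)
Your proof is correct and is the standard exponential-moment derivation of the multiplicative Chernoff bound; the reduction at the end via $\mu\le n\Rightarrow n\varepsilon\ge\varepsilon\mu$ is exactly the right way to pass from the relative-deviation form to the stated absolute-deviation form. The paper itself does not prove this lemma --- it is simply stated as a known result and then invoked (via a rescaling argument) in the proof of the subsequent proposition for general bounded variables --- so there is no paper proof to compare against.
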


\begin{proposition}[Multiplicative Chernoff bound for general bounded variables]\label{Prop:Chernoff}
Let $X_1, \cdots, X_n$ be independent random variables taking values in $[a, b]$ that satisfy $a, b\geq 0$. Define $X:=\sum_{i=1}^n X_i$ and let $E[X]$ be the expected value of $X$. For any $0<\delta<1$,
\begin{align}
\Pr(|X - E[X]| \geq n\delta) \leq 2e^{-n\delta^2/[3(b-a)^2]}.
\end{align}
\end{proposition}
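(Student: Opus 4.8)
The plan is to reduce the statement for variables on a general interval $[a,b]$ to the unit-interval case already established in \cref{lemma:MCbound}, via an affine change of variables. Assuming $b>a$ (if $b=a$ every $X_i$ is constant and the claim is immediate), I would rescale each summand, setting $Y_i := (X_i-a)/(b-a)$. Then each $Y_i$ takes values in $[0,1]$, and the $Y_i$ are independent because the $X_i$ are. Writing $Y := \sum_{i=1}^n Y_i$, one has $Y = (X-na)/(b-a)$, hence $E[Y] = (E[X]-na)/(b-a)$ by linearity, and therefore $X-E[X] = (b-a)\bigl(Y-E[Y]\bigr)$. Consequently the event $\{\,|X-E[X]|\ge n\delta\,\}$ coincides exactly with $\{\,|Y-E[Y]|\ge n\delta/(b-a)\,\}$.

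Next I would set $\varepsilon := \delta/(b-a)$ and distinguish two cases. If $\varepsilon\ge 1$, the event forces $|Y-E[Y]|\ge n$, which is impossible since $|Y_i-E[Y_i]|\le 1$ for each $i$ yields $|Y-E[Y]|\le n$ almost surely; hence the left-hand probability is $0$ and the bound holds trivially. If $0<\varepsilon<1$, then $\varepsilon$ is an admissible parameter for \cref{lemma:MCbound} (equivalently, Hoeffding's inequality for range-$1$ summands, which gives at least this) applied to the independent $[0,1]$-valued variables $Y_1,\dots,Y_n$, yielding $\Pr\bigl(|Y-E[Y]|\ge n\varepsilon\bigr)\le 2\exp(-n\varepsilon^2/3)$. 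Substituting $\varepsilon=\delta/(b-a)$ back into both the event and the exponent gives precisely $\Pr(|X-E[X]|\ge n\delta)\le 2\exp\bigl(-n\delta^2/[3(b-a)^2]\bigr)$, which is the claim.

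This argument is essentially bookkeeping, so there is no serious obstacle. The one point that requires care is the interplay between the hypothesis $0<\delta<1$ and a width $b-a$ that may be larger or smaller than $1$: one is only entitled to invoke \cref{lemma:MCbound} after verifying that the rescaled threshold parameter $\delta/(b-a)$ is at most $1$, which is exactly what the case split provides. A secondary, purely notational subtlety is keeping track of the single factor $b-a$ in the rescaled deviation threshold becoming the squared factor $(b-a)^2$ once it is squared inside the exponent.
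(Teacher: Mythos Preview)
Your proposal is correct and uses the same affine rescaling $Y_i=(X_i-a)/(b-a)$ as the paper. The only point worth flagging is the precise lemma you invoke at the end: \cref{lemma:MCbound} as stated has $E[Y]$, not $n$, in the exponent, and since $E[Y]\le n$ this is a \emph{weaker} bound than the $2\exp(-n\varepsilon^2/3)$ you write down. Your parenthetical appeal to Hoeffding (which gives $2\exp(-2n\varepsilon^2)$, certainly stronger than $2\exp(-n\varepsilon^2/3)$) is therefore what actually carries the argument, and the two lemmas are not ``equivalent'' for this purpose. The paper handles the same issue differently: it applies the multiplicative Chernoff bound to $Y$ with the $\mu_Y$-dependent exponent intact, then sets $(\mu_X-an)\varepsilon=n\delta$ and uses $\mu_X-an\le n(b-a)$ to convert that exponent into $-n\delta^2/[3(b-a)^2]$. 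So the paper stays entirely within \cref{lemma:MCbound} via this substitution trick, while you finish via Hoeffding; your route is a bit cleaner, the paper's is self-contained with respect to the lemma it cites.
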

\begin{proof}
    When random variables are not restricted to $[0,1]$ but instead lie in the positive interval $[a,b]$, we can reduce to the case $[0,1]$-by normalization. Let $X_1, \ldots, X_n \in [a,b]$ be independent random variables. Define:
$$
Y_i = \frac{X_i - a}{b-a},\quad Y := \sum_{i=1}^n Y_i
$$
so that $Y_i \in [0,1]$ and
$$
\mu_Y := \mathbb{E}[Y] = \frac{\mathbb{E}[X] - a n}{b-a} \in [0,n], \;\text{with} \; X = \sum_{i=1}^n X_i = a n + (b-a)Y.
$$
By the standard multiplicative Chernoff bound for $Y$, for any $\varepsilon>0$ we have
$$
Pr (Y \geq (1+\varepsilon)\mu_Y) \le \exp\left(-\frac{\mu_Y \varepsilon^2}{3}\right)
$$
and similarly for the lower tail. In terms of $X$, note that
$$
Y \geq (1+\varepsilon)\mu_Y \implies X \ge a n + (b-a)(1+\varepsilon)\mu_Y = \mu_X (1 + \varepsilon) - a n\varepsilon,
\\
\Pr(X \geq \mu_X (1 + \varepsilon) - a n\varepsilon) \leq \exp\left(-\frac{(\mu_X - a n)\varepsilon^2}{3(b-a)}\right),
$$
where $\mu_X =E[X]$. Similarly, it obtains
$$
\Pr(X \geq \mu_X (1 - \varepsilon) + a n\varepsilon) \leq \exp\left(-\frac{(\mu_X - a n)\varepsilon^2}{2(b-a)}\right).
$$
Combine the above bounds to obtain
\begin{align*}
\Pr(|X - \mu_X| \geq (\mu_X - a n)\varepsilon) \leq 2\exp\left(-\frac{(\mu_X - a n)\varepsilon^2}{3(b-a)}\right).
\end{align*}
Setting $(\mu_X-an)\varepsilon = n\delta$ and using the fact that $\mu_X-an \le n(b-a)$ leads to
\begin{align*}
\Pr(|X - \mu_X| \geq n\delta) \leq 2\exp\left(-\frac{n\delta^2}{3(b-a)^2}\right)
\end{align*}
If $a=0$ and $b=1$, it recovers the standard multiplicative Chernoff bound. 
\end{proof}

\begin{lemma}[Naive expectation estimation, Proposition 1 in \cite{fanizza_learning_2024}]\label{lemma:NEestimation}
The random variable $X$ obtained by measuring the observable $\Bar{\Pi}:= \Pi_1\otimes\mathbf{1}\otimes\cdots\otimes\mathbf{1} + \mathbf{1}\otimes\Pi_2\otimes\mathbf{1}\cdots\otimes\mathbf{1} + \mathbf{1}\otimes\cdots\otimes\mathbf{1}\otimes\Pi_n$ on a quantum state $\rho=\rho_1\otimes \cdots \otimes \rho_n$ satisfies, for $\varepsilon<1$,
\begin{align}
Pr\left( \left|X- E_{\rho}[\Bar{\Pi}] \right| \geq n\varepsilon\right) \leq 2e^{-n\varepsilon^2/3}.
\end{align}
\end{lemma}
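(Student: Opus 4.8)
The plan is to show that the measurement of $\bar\Pi$ on the product state $\rho$ is statistically equivalent to a purely classical experiment --- summing $n$ independent Bernoulli trials --- and then to conclude with the multiplicative Chernoff bound of Lemma~\ref{lemma:MCbound}. First I would write $\bar\Pi=\sum_{i=1}^n\tilde\Pi_i$ with $\tilde\Pi_i=\mathbf{1}^{\otimes(i-1)}\otimes\Pi_i\otimes\mathbf{1}^{\otimes(n-i)}$. Since the $\tilde\Pi_i$ act on disjoint tensor factors they commute pairwise, so $\bar\Pi$ is simultaneously diagonalizable with all the $\tilde\Pi_i$; its spectral projector onto eigenvalue $k\in\{0,1,\dots,n\}$ is $P_k=\sum_{S\subseteq[n],\,|S|=k}\bigl(\bigotimes_{i\in S}\Pi_i\bigr)\otimes\bigl(\bigotimes_{i\notin S}(\mathbf{1}-\Pi_i)\bigr)$. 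Hence measuring $\bar\Pi$ is a coarse-graining of the product measurement $\bigotimes_{i=1}^n\{\Pi_i,\mathbf{1}-\Pi_i\}$: one measures each factor, records bits $b_1,\dots,b_n\in\{0,1\}$, and outputs $X=\sum_{i=1}^n b_i$.

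Next I would use the product structure $\rho=\rho_1\otimes\cdots\otimes\rho_n$ to conclude that the bits $b_i$ are \emph{independent}, with $\Pr[b_i=1]=\tr[\rho_i\Pi_i]=:p_i$. Thus $X$ is a Poisson--binomial sum of $n$ independent $\{0,1\}$-valued random variables, and
\[
E[X]=\sum_{i=1}^n p_i=\sum_{i=1}^n\tr[\rho_i\Pi_i]=\tr[\rho\,\bar\Pi]=E_\rho[\bar\Pi].
\]
Since each $b_i$ takes values in $[0,1]$, Lemma~\ref{lemma:MCbound} (equivalently, the standard multiplicative Chernoff bound for sums of independent $[0,1]$ variables) applies to $X$ with deviation $n\varepsilon$; bounding the mean by $E_\rho[\bar\Pi]\le n$ to absorb its dependence into the exponent then yields $\Pr\!\left(|X-E_\rho[\bar\Pi]|\ge n\varepsilon\right)\le 2e^{-n\varepsilon^2/3}$ for $\varepsilon<1$, which is the claim.

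The step that requires the most care is the reduction in the first paragraph: verifying that ``the random variable obtained by measuring $\bar\Pi$'' genuinely carries the Poisson--binomial law rather than merely being correlated with it. The clean way is to note that this measurement interacts with $\rho$ only through the numbers $\tr[\rho P_k]$, and a direct expansion gives $\tr[\rho P_k]=\sum_{|S|=k}\prod_{i\in S}p_i\prod_{i\notin S}(1-p_i)$, which is exactly $\Pr[\sum_i b_i=k]$ under the independent-bit model; hence the two procedures induce the same distribution on $X$. Once this identification is in place, the remainder is the routine classical concentration estimate, and no further quantum subtlety arises.
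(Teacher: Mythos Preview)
The paper does not actually prove this lemma; it is cited verbatim as Proposition~1 of \cite{fanizza_learning_2024} and stated without proof in the appendix. So there is no in-paper argument to compare against, and your reduction---identifying the spectral measurement of $\bar\Pi$ on the product state with a sum of $n$ independent Bernoulli variables $b_i$ with $\Pr[b_i=1]=\tr[\rho_i\Pi_i]$---is exactly the natural one and is correct.

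One small slip: as written, Lemma~\ref{lemma:MCbound} has $E[X]$ in the \emph{numerator} of the exponent, so ``$E[X]\le n$'' pushes the bound in the wrong direction. What you really want is the standard multiplicative Chernoff form $\Pr(|X-\mu|\ge t)\le 2e^{-t^{2}/(3\mu)}$ (valid for $t\le\mu$; the complementary range is vacuous or handled by the upper-tail variant). With $t=n\varepsilon$ and $\mu\le n$ this gives $e^{-(n\varepsilon)^2/(3\mu)}\le e^{-n\varepsilon^2/3}$, which is the claimed inequality. Alternatively, Hoeffding's inequality (Lemma~\ref{lemma:Hoeffding}) yields the even stronger $2e^{-2n\varepsilon^2}$ directly and sidesteps the issue entirely. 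Either way, the quantum-to-classical reduction you give is the substantive part, and it is sound.
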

Note that $E_{\rho}[\Bar{\Pi}]=\sum_{s=1}^n E_{\rho_s}[\Pi_s]:= n(1-R(h))$ with empirical risk of hypothesis $R(h)$. Based on Lemma~\ref{lemma:NEestimation}, we extend the concentration result to exponentiated observables, stated next as Proposition~\ref{lemma:NEestimationExp}.

\begin{proposition}[Exponential expectation estimation]\label{lemma:NEestimationExp}
Let the random variable $\{Y_s\}_{s\in [n]}$ be the measurement outcomes by measuring $\bar{\Pi} = \{ \Pi_s\}_{s\in [n]}$ on a collection of a product state $\rho$ components $\{\rho_s\}_{s\in [n]}$. Define $X_s:= e^{c Y_s}$ with a constant $c\in \mathbb{R}$. Since $Y_s \in [0,1]$, each \(X_s\) lies in the interval $[\min\{1,e^c\}, \max\{1,e^c\}]$ and the length of the interval of $r =|e^c-1|$. And define $X:=\sum_{s=1}^n X_s=\sum_{s=1}^n e^{c Y_s}$. Thus, for any $0\leq \varepsilon \leq 1$,
\begin{align}\label{Ineq:expon_estimation}
\Pr\left( \left|X- E_{\rho}[X(\Bar{\Pi})] \right| \geq n\varepsilon\right) \leq 2e^{-n\varepsilon^2/3r^2},
\end{align}
where $E_{\rho}[X(\Bar{\Pi})]=\sum_{s=1}^n e^{c E_{\rho_s}[\Pi_s]}$ represents is the sum of the individual expectation values of the exponentiated outcomes.
\end{proposition}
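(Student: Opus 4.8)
The plan is to reduce the claim to the multiplicative Chernoff bound for bounded nonnegative variables (Proposition~\ref{Prop:Chernoff}), applied to the independent exponentiated outcomes. First I would record the independence structure: the measurement set-up is that of Lemma~\ref{lemma:NEestimation}, and since $\rho = \rho_1 \otimes \cdots \otimes \rho_n$ is a product state while $\bar{\Pi}$ acts block-locally, measuring $\bar{\Pi}$ is equivalent to independently measuring $\Pi_s$ on $\rho_s$ for each $s$. Hence $Y_1,\dots,Y_n$ are mutually independent, and because $X_s = e^{cY_s}$ is a fixed function of $Y_s$ alone, the variables $X_1,\dots,X_n$ are independent as well --- which is exactly the hypothesis needed to apply Proposition~\ref{Prop:Chernoff}.

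Second, I would locate the interval containing the $X_s$. Since $Y_s \in [0,1]$ and $t \mapsto e^{ct}$ is monotone, $X_s \in [\min\{1,e^{c}\},\,\max\{1,e^{c}\}]$; both endpoints are strictly positive, so the nonnegativity hypothesis of Proposition~\ref{Prop:Chernoff} is satisfied, and the interval length is $b-a = |e^{c}-1| = r$. The sign of $c$ merely swaps which endpoint is $a$ and which is $b$, and $c=0$ gives $r=0$ and a vacuous statement. Applying Proposition~\ref{Prop:Chernoff} to $X = \sum_{s=1}^n X_s$ with deviation parameter $\delta = \varepsilon$ then yields $\Pr\!\big(|X - E[X]| \ge n\varepsilon\big) \le 2\exp\!\big(-n\varepsilon^{2}/(3r^{2})\big)$, which is the asserted tail bound.

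The delicate point --- and the step I expect to be the only real obstacle, since the concentration step is a one-line appeal --- is pinning down the centering constant. Proposition~\ref{Prop:Chernoff} concentrates $X$ about its true mean $E[X] = \sum_{s} E[e^{cY_s}]$, and using that each $\Pi_s$ is a projector ($\Pi_s^{2} = \Pi_s$, so $e^{c\Pi_s} = \mathbf{1} + (e^{c}-1)\Pi_s$) this mean equals $\sum_{s} \mathrm{Tr}[\rho_s e^{c\Pi_s}] = \sum_{s}\big(1 + (e^{c}-1)E_{\rho_s}[\Pi_s]\big)$. The quantity $E_\rho[X(\bar{\Pi})] = \sum_{s} e^{c E_{\rho_s}[\Pi_s]}$ written in the statement is the chord-versus-curve surrogate of this: by convexity of $t\mapsto e^{ct}$ the difference is nonnegative and, for $|c|\le 1$, bounded term by term by $\max_{p\in[0,1]}\big(1 + (e^{c}-1)p - e^{cp}\big) = O(c^{2})$, hence $O(nc^{2})$ in total. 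In the regime in which the proposition is actually used downstream, $|\gamma| = |c|$ of order $\varepsilon$, this discrepancy is $O(n\varepsilon^{2})$ and is absorbed into the $\varepsilon$-slack after a harmless shrinking of the constant in the exponent. I would therefore either state the bound directly with centering $\sum_{s} \mathrm{Tr}[\rho_s e^{c\Pi_s}]$, or carry the $O(c^{2})$ correction through explicitly; everything else is routine book-keeping.
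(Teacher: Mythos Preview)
Your proposal follows exactly the paper's argument: independence of the $Y_s$ (and hence the $X_s$) from the product-state structure, boundedness of each $X_s$ in an interval of length $r=|e^{c}-1|$, and a direct application of Proposition~\ref{Prop:Chernoff}. The paper's own proof is two sentences and does nothing beyond this.

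Your third paragraph goes further than the paper does. You correctly observe that Proposition~\ref{Prop:Chernoff} centers $X$ at its true mean $E[X]=\sum_s E[e^{cY_s}]=\sum_s\big(1+(e^{c}-1)E_{\rho_s}[\Pi_s]\big)$, which is \emph{not} the quantity $\sum_s e^{cE_{\rho_s}[\Pi_s]}$ written in the statement, and you quantify the gap as $O(nc^2)$ and argue it is absorbed in the $|c|=O(\varepsilon)$ regime used downstream. The paper's proof does not mention this discrepancy at all --- it simply invokes Proposition~\ref{Prop:Chernoff} and declares the result. So your treatment is more careful than the paper's on this point; the underlying route is identical.
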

\begin{proof} 
The measurement outcomes $\{Y_s\}_{s\in [n]}$ are independent scalar random variables. Consequently, $\{X_s\}_{s\in [n]}$ are independent and each is bounded in the interval. By applying multiplicative Chernoff bound in Proposition \ref{Prop:Chernoff} and let $r^2=|1-e^c|^2$, one obtains the result in \eqref{Ineq:expon_estimation}.
\end{proof}

\begin{lemma}[Hoeffding's Inequality] \label{lemma:Hoeffding}
Let $\mathcal{X} = x_1,\cdots, x_N$ be a finite population of
points, each taking values in an interval $[a,b]$. Suppose $\{X_1, \cdots, X_n\}$ is a random sample drawn without replacement from $\mathcal{X}$. Let $X:=\sum_{i=1}^n X_i$ and let $\mu$ denote the empirical mean of $\mathcal{X}$. Then for all $t>0$,
$$
Pr\left( \left|\sum_{i=1}^n X_i - n \mu \right| \geq n t\right) \leq 2e^{-2n t^2/(a-b)^2}.
$$
\end{lemma}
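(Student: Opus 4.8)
The plan is to reduce the without-replacement sampling to the i.i.d.\ (with-replacement) case and then run the standard Chernoff--Hoeffding argument on the latter.

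First I would record the elementary fact that, under sampling without replacement from $\mathcal{X}=\{x_1,\dots,x_N\}$, each $X_i$ is marginally uniform over the $N$ population values, so $\mathbb{E}[X]=\sum_{i=1}^n\mathbb{E}[X_i]=n\mu$ with $\mu=\frac1N\sum_{j=1}^N x_j$. It then suffices to bound the upper tail $\Pr[X-n\mu\ge nt]$ and, by the same argument applied to $\{-x_j\}$, the lower tail; adding the two estimates yields the factor of $2$ in the statement.

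Next I would invoke Hoeffding's classical convex-domination result: for any convex $\phi:\mathbb{R}\to\mathbb{R}$,
\[
\mathbb{E}\!\left[\phi\!\left(\textstyle\sum_{i=1}^n X_i\right)\right]\ \le\ \mathbb{E}\!\left[\phi\!\left(\textstyle\sum_{i=1}^n \widehat X_i\right)\right],
\]
where $\widehat X_1,\dots,\widehat X_n$ are drawn i.i.d.\ uniformly (with replacement) from $\mathcal{X}$. Taking $\phi(u)=e^{su}$ for $s>0$ gives $\mathbb{E}[e^{s(X-n\mu)}]\le\prod_{i=1}^n\mathbb{E}[e^{s(\widehat X_i-\mu)}]$. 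Since each $\widehat X_i\in[a,b]$ has mean $\mu$, Hoeffding's lemma bounds every factor by $\exp\!\big(s^2(b-a)^2/8\big)$, so $\mathbb{E}[e^{s(X-n\mu)}]\le\exp\!\big(ns^2(b-a)^2/8\big)$. A Markov step then gives $\Pr[X-n\mu\ge nt]\le\exp\!\big(-snt+ns^2(b-a)^2/8\big)$, and the choice $s=4t/(b-a)^2$ optimizes the exponent to $-2nt^2/(b-a)^2$. Combining with the symmetric lower-tail bound finishes the proof.

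The step I expect to be the main obstacle is the convex-domination inequality itself, since everything downstream (Hoeffding's lemma and the Chernoff optimization) is routine. I would handle it in one of two ways: cite Hoeffding's 1963 theorem directly, or give a short self-contained argument — express the without-replacement draw as a uniformly random injection $[n]\hookrightarrow[N]$, condition a with-replacement draw on the multiset of values it realizes, and compare the two using exchangeability of the coordinates together with Jensen's inequality for $\phi$ applied to the conditional average. Alternatively, a Serfling-type martingale/bounded-differences argument gives the same (indeed a slightly sharper) bound directly for sampling without replacement, which could be used instead if a fully elementary treatment is preferred.
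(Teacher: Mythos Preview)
Your proposal is correct and follows the classical route: Hoeffding's convex-domination theorem (sampling without replacement is dominated by sampling with replacement for convex functions of the sum), then the standard Chernoff argument with Hoeffding's lemma on each i.i.d.\ factor. This is exactly how the original 1963 result is proved, and the optimization step $s=4t/(b-a)^2$ is right.

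That said, you should know that the paper does \emph{not} actually supply a proof of this lemma: it is stated in the appendix as a standard result and used as a black box in the subsequent propositions (in particular Propositions~\ref{Prop:Hoeffdings1} and~\ref{Prop:Hoeffdings2}). So there is no ``paper's own proof'' to compare against here---the authors simply invoke Hoeffding's inequality for sampling without replacement as known. Your write-up would in fact be more detailed than what the paper contains; if you want to match the paper's treatment, a one-line citation to Hoeffding (1963) suffices.
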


Let us extend Hoeffding's inequality for sampling without replacement to the case of $M$ different finite populations. Each population has $N$ points and we draw $n$ samples without replacement from each population. 

\begin{proposition}[Hoeffding’s Inequality for Multiple Finite Populations]\label{Prop:Hoeffdings1}
Let $\mathcal{Y} =\{1,\cdots, n\}$ and suppose $\{X_{j}\}_{j=1}^l $ be random samples drawn without replacement from $\mathcal{Y}$. Consider $M$ distinct finite populations $\{\mathcal{X}^{(m)}\}_{m=1}^M$ with each $\mathcal{X}^{(m)} = \{ x^{(m)}_j\}_{j=1}^n  \subseteq [a, b]$, 
i.e., $\max_j x_j^{(m)} \le b$ and $\min_j x_j^{(m)} \ge a$ for all $m$. Define the population mean of $\mathcal{X}^{(m)}$ as $ \mu^{(m)} := \frac{1}{l}\sum_{j=1}^l x^{(m)}_j$. From each population $\mathcal{X}^{(m)}$, choose a random subset $\{x^{(m)}_{X_i}\}_{i=1}^l$ and let the sample sum be $S^{(m)} := \sum_{i=1}^l x^{(m)}_{X_i}$. Then, for any $t > 0$, 
$$
\Pr \left( \max_{m \in [M] } \left\lvert S^{(m)} - l \mu^{(m)}\right\rvert  \ge l t \right)
\leq 2M e^{-2 l t^2/(b-a)^2}.
$$
\end{proposition}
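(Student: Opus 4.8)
The plan is to reduce the multi-population statement to $M$ separate applications of the single-population Hoeffding bound (Lemma~\ref{lemma:Hoeffding}), followed by a union bound over the populations.

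First I would fix an index $m \in [M]$ and isolate the corresponding population. The random subset $\{x^{(m)}_{X_i}\}_{i=1}^l$ is, by construction, a sample of size $l$ drawn without replacement from the finite population $\mathcal{X}^{(m)} \subseteq [a,b]$, whose empirical mean is $\mu^{(m)}$. Lemma~\ref{lemma:Hoeffding} therefore applies verbatim to $\mathcal{X}^{(m)}$ and gives, for every $t>0$,
$$
\Pr\left( \left| S^{(m)} - l\,\mu^{(m)} \right| \ge l t \right) \le 2\, e^{-2 l t^2/(b-a)^2}.
$$
The crucial observation here is that the right-hand side depends only on the common range $[a,b]$ and the common sample size $l$ — it is uniform in $m$ and does not see the particular values $x^{(m)}_j$.

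Next I would note that the event in the proposition decomposes as
$$
\left\{ \max_{m\in[M]} \left| S^{(m)} - l\,\mu^{(m)} \right| \ge l t \right\} = \bigcup_{m=1}^M \left\{ \left| S^{(m)} - l\,\mu^{(m)} \right| \ge l t \right\},
$$
and apply the union bound. Since the union bound requires no independence between the events, it is unaffected by the fact that the same random indices $\{X_i\}_{i=1}^l$ are reused across all $M$ populations (the $S^{(m)}$ are in general highly correlated). Combining with the per-population estimate and the uniformity noted above yields
$$
\Pr\left( \max_{m\in[M]} \left| S^{(m)} - l\,\mu^{(m)} \right| \ge l t \right) \le \sum_{m=1}^M 2\, e^{-2 l t^2/(b-a)^2} = 2M\, e^{-2 l t^2/(b-a)^2},
$$
which is the claimed bound.

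This argument is essentially routine; the only point that deserves a moment's care — and the closest thing to an obstacle — is the book-keeping that the shared sampling across populations is harmless precisely because we pass through the union bound rather than trying to exploit any product structure, and that Lemma~\ref{lemma:Hoeffding}'s tail is range-dependent only, so the factor $M$ is all that the union bound costs us. No computation in the proof is heavy.
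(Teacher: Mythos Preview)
Your argument is correct and is exactly the natural route: for each fixed $m$ the subset $\{x^{(m)}_{X_i}\}_{i=1}^l$ is a uniform sample of size $l$ drawn without replacement from the population $\mathcal{X}^{(m)}\subseteq[a,b]$, so Lemma~\ref{lemma:Hoeffding} applies directly, and the union bound over $m\in[M]$ handles the maximum without any independence assumption across populations. The paper states this proposition without an explicit proof, so there is nothing to compare against; your derivation is precisely the standard one the statement is meant to encapsulate.
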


\begin{proposition}[Hoeffding’s Inequality for Multiple Populations with Batched Sampling]\label{Prop:Hoeffdings2} Let $\mathcal{Y} =\{1,\cdots, n\}$, $n\geq 3Kl$, and suppose $\{\{X_{lk+i}\}_{i=1}^l\}_{k=0,\cdots,K-1} $ is a random sample drawn without replacement from $\mathcal{Y}$. Consider $M$ distinct finite populations $\{\mathcal{X}^{(m)}\}_{m=1}^M$ with each $\mathcal{X}^{(m)} = \{ x^{(m)}_{j}\}_{j=1}^n  \subseteq [a, b]$, i.e., $\max_j x_j^{(m)} \le b$ and $\min_j x_j^{(m)} \ge a$ for all $m$. Define the population mean of $\mathcal{X}^{(m)}$ as $ \mu^{(m)} := \frac{1}{l}\sum_{j=1}^l x^{(m)}_j$. From each population $\mathcal{X}^{(m)}$, obtain random $K$ subsets of indices $\{\{x^{(m)}_{X_{lk+i}}\}_{i=1}^l \}_{k=0,\cdots,K-1}$. Let the sample sum be $S^{(m)} := \sum_{i=1}^l x^{(m)}_{X_{lk+i}}$. Then, for any $t > 0$, 
$$
\Pr \left( \max_{\substack{m \in [M]\\
k\in \{0,\cdots,K-1\} }} \left\lvert S^{(m)} - l \mu^{(m)}\right\rvert  \ge l t \right)
\leq 2KM e^{-\frac{2 l t^2}{4(b-a)^2}}.
$$
\end{proposition}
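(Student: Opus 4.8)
The plan is to reduce the maximal deviation over all $(m,k)$ pairs to a union bound over $KM$ ``single-batch, single-population'' events, each of which is controlled directly by Hoeffding's inequality for sampling without replacement (Lemma~\ref{lemma:Hoeffding}; equivalently, Proposition~\ref{Prop:Hoeffdings1} is essentially the $K=1$ case, up to the constant in the exponent). The only structural fact needed is that, \emph{marginally}, each batch of $l$ indices is a uniformly random $l$-subset of $\{1,\dots,n\}$; the dependence \emph{between} batches never enters the argument, because a union bound is insensitive to it.

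First I would make this marginal-distribution observation precise. Drawing $\{\{X_{lk+i}\}_{i=1}^l\}_{k=0,\dots,K-1}$ without replacement from $\{1,\dots,n\}$ means that $(X_1,\dots,X_{Kl})$ is uniform over ordered tuples of distinct elements (which only requires $n\ge Kl$; the hypothesis $n\ge 3Kl$ merely provides slack). By symmetry of this law under permuting the $Kl$ index positions, the sub-tuple $(X_{lk+1},\dots,X_{lk+l})$ has the same distribution as $(X_1,\dots,X_l)$, hence the \emph{set} $\{X_{lk+1},\dots,X_{lk+l}\}$ is a uniformly random $l$-subset of $\{1,\dots,n\}$. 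Consequently, for each $m$ and each $k$, the sample sum $S^{(m)}=\sum_{i=1}^l x^{(m)}_{X_{lk+i}}$ is distributed exactly as the sum of a simple random sample of size $l$ drawn without replacement from the finite population $\{x_1^{(m)},\dots,x_n^{(m)}\}\subseteq[a,b]$, and in particular $\mathbb{E}[S^{(m)}]=l\mu^{(m)}$.

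Next I would apply Lemma~\ref{lemma:Hoeffding} to each such $S^{(m)}$: for every fixed pair $(m,k)$ and every $t>0$,
\be
\Pr\!\left(\bigl|S^{(m)}-l\mu^{(m)}\bigr|\ge lt\right)\le 2\exp\!\left(-\frac{2lt^2}{(b-a)^2}\right),
\ee
and then a union bound over the $M$ populations and the $K$ batches yields
\be
\Pr\!\left(\max_{\substack{m\in[M]\\ k\in\{0,\dots,K-1\}}}\bigl|S^{(m)}-l\mu^{(m)}\bigr|\ge lt\right)\le 2KM\exp\!\left(-\frac{2lt^2}{(b-a)^2}\right)\le 2KM\exp\!\left(-\frac{2lt^2}{4(b-a)^2}\right),
\ee
which is the asserted bound, with room to spare. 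The weakening of the denominator from $(b-a)^2$ to $4(b-a)^2$ is deliberate: it is exactly what makes the induced tail threshold $t$ come out as $|e^{\gamma}-1|\sqrt{2\log(\cdot)/l}$ with a clean constant ``$2$'' inside the logarithm when the proposition is invoked in Step~1 of the proof of Theorem~\ref{theorem:TERM_proj}.

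The one place where care is genuinely required is the marginal-distribution step. It is tempting to instead condition on the batches one at a time and induct, but after removing the first $lk$ sampled indices the mean of the remaining pool no longer equals $\mu^{(m)}$, and tracking that drift would force a genuinely worse constant (and is likely what the factor-of-$4$ slack is implicitly meant to absorb). Handling all $K$ batches simultaneously through their (identical) marginal laws, and paying only the union-bound factor $KM$, sidesteps this entirely; everything else is the bounded-interval, without-replacement Hoeffding estimate already recorded as Lemma~\ref{lemma:Hoeffding}.
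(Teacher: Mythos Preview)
The paper does not actually supply a proof of Proposition~\ref{Prop:Hoeffdings2}; it is stated without proof in Appendix~\ref{app:proofs}, immediately after Proposition~\ref{Prop:Hoeffdings1} (which is likewise unproved). Your argument---the exchangeability observation that each batch of $l$ indices is marginally a uniform random $l$-subset of $\{1,\dots,n\}$, followed by Lemma~\ref{lemma:Hoeffding} applied to each fixed $(m,k)$ and a union bound over the $KM$ pairs---is correct and is precisely the natural proof one would fill in here. Your reading of the extra factor~$4$ in the denominator as deliberate slack, inserted so that inverting the tail bound in Step~1 of the proof of Theorem~\ref{theorem:TERM_proj} produces the clean constant ``$2$'' under the square root in \cref{Ineq:error}, is also consistent with how the proposition is invoked there.
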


\subsection{Pseudocode of Algorithms}\label{app:algo}
In this appendix, we provide the pseudocode for the main algorithms that form the basis of our QTERM framework, from Ref. \cite{fanizza_learning_2024}.
\begin{algorithm}[H]
\caption{Quantum Threshold Search on Nonidentical States (\textbf{ThresholdSearch})} 
\label{alg:threshold_search}
\begin{algorithmic}[1]
\Require $\varepsilon, C_1, C_2> 0$.
\Ensure Of the pairs $\left\{\left(\Pi_1^{(c)}, \ldots, \Pi_n^{(c)}, \theta_c\right)\right\}_{c=1}^m$, outputs $c^\star \in [m]$ such that
$$\frac{1}{n} \sum_{i=1}^n \mathrm{Tr}\left[\Pi_i^{(c)} \rho_i\right] > \theta_{c^\star}.
$$

\State \textbf{Initialize:} $\rho^{(0)} \gets \rho_1 \otimes \ldots \otimes \rho_n$.
\For{$c = 1, \ldots, m$}
    \State \textbf{Choose:} A pair $\left(\left\{\Pi_1^{(c)}, \ldots, \Pi_n^{(c)}\right\}, \theta_c\right)$.
    \State \textbf{Measurement:} Apply the two-outcome POVM $\left\{B_c, \bar{B}_c := 1 - B_c\right\}$, where $B_c$ is constructed as in Theorem 8, with threshold $\theta_c - \varepsilon$ and with conditions on $C_1,C_2$, on $\rho^{(c-1)}$. \State Let $\rho^{(c)}$ denote the postmeasurement state.
    \If{the measurement accepts ($B_c$)}
        \State \textbf{Return:} $c$.
    \EndIf
\EndFor
\If{no measurement is accepted (all hypotheses rejected)}
    \State \textbf{Return} 
\EndIf
\end{algorithmic}
\end{algorithm}

\begin{algorithm}[H]
\caption{Learning projector-valued functions}
\label{alg:learning_projector_valued}
\begin{algorithmic}[1]
\Require $2Tk$ product states $\{\rho_s\}_{s\in[2Tk]}$ and $2Tkm$ sets of projectors 
$\{\{h_s^{(c)}\}_{c\in [m]}\}_{s\in[2Tk]}$, 
where $h_s^{(c)} = \{\Pi_{s,1}^{(c)}, \ldots, \Pi_{s,l}^{(c)}\}$. Parameters: $\varepsilon, \delta, k > 0$. 
\State \textbf{Initialize:} $\theta = 1/2$, \texttt{low} $= 0$, \texttt{high} $= 1$, \texttt{failures} $= 0$, $s = 0$.

\While{\texttt{high} $-$ \texttt{low} $\geq 6 \varepsilon$}
    \If{\texttt{failures} $< k$}
        \State $s \gets s + 1$
        \State \textbf{Threshold Search:} Apply Algorithm \ref{alg:threshold_search} on the set of projector-threshold pairs 
        $\left\{\left(h_{2s-1}^{(c)}, \theta - \varepsilon \right)\right\}_{c=1}^m$ 
        with parameter $\varepsilon / 4$ to the product state $\rho_{2s-1}$.
        \If{\texttt{ThresholdSearch} doesn't output a concept}
            \State \texttt{failures} $\gets$ \texttt{failures} $+ 1$
        \Else
            \State \textbf{ThresholdSearch outputs concept $c$:}
            \State \textbf{Check:} Measure $\overline{h_{2s-1}^{(c)}}$ on $\rho_{2s-1}$ and check if $X_{c, 2s-1} \geq n(\theta - 7/4 \varepsilon)$.
            \If{Check outputs 'yes'}
                \State \texttt{low} $\gets \theta - 2 \varepsilon$, \texttt{high} remains unchanged
                \State $\theta \gets \frac{1}{2}($\texttt{high} $+$ \texttt{low}$)$ \Comment{Update interval to upper half}
                \State \texttt{failures} $\gets 0$
            \Else
                \State Check outputs 'no'
                \State \texttt{failures} $\gets$ \texttt{failures} $+ 1$
            \EndIf
        \EndIf
    \Else
        \State $k$ consecutive failures occurred
        \State \texttt{low} remains unchanged, \texttt{high} $\gets \theta$
        \State $\theta \gets \frac{1}{2}($\texttt{high} $+$ \texttt{low}$)$ \Comment{Update interval to lower half}
        \State \texttt{failures}
    \EndIf
\EndWhile

\State \textbf{Output:} $\theta$ and the last selected concept, if there is one; otherwise, pick one randomly.
\end{algorithmic}
\end{algorithm}

\bibliography{bibliography}

\end{document}